\documentclass[a4paper,onecolumn,11pt,accepted=2022-07-06]{quantumarticle}
\pdfoutput=1
\usepackage[letterpaper, portrait, margin=1in]{geometry}
\usepackage{graphicx, caption}
\usepackage{amsfonts}
\usepackage{amsmath,bm}
\usepackage{xcolor}
\usepackage{cite}
\usepackage{IEEEtrantools}
\usepackage{amssymb}
\usepackage{amsthm}
\usepackage{authblk}
\usepackage{color}
\usepackage{mathtools}
\usepackage{braket}
\usepackage{multirow}
\usepackage[ruled,vlined]{algorithm2e}
\usepackage[shortlabels]{enumitem}

\usepackage{tikz}
\usetikzlibrary{arrows,shapes,patterns,calc,matrix}
\pgfdeclarelayer{background}
\pgfsetlayers{background,main}
\tikzstyle{dot}=[draw=black, fill=blue!50!purple, circle, font=\color{white}, minimum width=1.75em]
\tikzstyle{smalldot}=[draw=black, fill=blue!50!purple, circle, font=\color{white}, minimum width=1.25em]





\theoremstyle{plain}
\newtheorem{thm}{Theorem}

\newtheorem{defn}[thm]{Definition} 


\newtheorem{lemma}[thm]{Lemma}

\let\vec\mathbf
\def\trans{^{\!\top}\!\!\;}
\def\herm{^\dagger}
\def\parit#1{\textup{(\textit{#1})}}
\newcommand\algname[1]{\textsf{\upshape #1}}
\makeatletter
\def\setalgdesc#1#2{%
    \expandafter\gdef\csname #1@algdesc\endcsname{#2}}
\def\algdesc#1{\csname #1@algdesc\endcsname}

\newcommand\blfootnote[1]{%
	\begingroup
	\renewcommand\thefootnote{}\footnote{#1}%
	\addtocounter{footnote}{-1}%
	\endgroup
}

\title{Fast Stabiliser Simulation with Quadratic Form Expansions}
\date{ }
\author[1]{Niel de Beaudrap}
\author[2,3]{Steven Herbert}
\affil[1]{\textit{Department of Informatics, University of Sussex, UK}}
\affil[2]{\textit{Quantinuum (Cambridge Quantum), Terrington House, 13-15 Hills Rd, Cambridge, CB2 1NL, UK}}
\affil[3]{\textit{Department of Computer Science and Technology, University of Cambridge, UK}}

\begin{document}

\maketitle

\sloppy
\begin{abstract}%
  \blfootnote{$^\dagger$
    Both authors contributed equally to the results of this article; the author order is merely alphabetical.
  }%
  \blfootnote{$^\ddagger$ 
    Contact: niel.debeaudrap@gmail.com, Steven.Herbert@Quantinuum.com
  }%
  This paper builds on the idea of simulating stabiliser circuits through transformations of \textit{quadratic form expansions}. 
  This is a representation of a quantum state which specifies a formula for the expansion in the standard basis, describing real and imaginary relative phases using a degree-2 polynomial over the integers.
  We show how, with deft management of the quadratic form expansion representation, we may simulate individual stabiliser operations in $\mathcal{O}(n^2)$ time matching the overall complexity of other simulation techniques~\cite{Aaronson2004,Anders2006,Bravyi2016}.
  Our techniques provide economies of scale in the time to simulate simultaneous measurements of all (or nearly all) qubits in the standard basis.
  Our techniques also allow single-qubit measurements with deterministic outcomes to be simulated in constant time.
  We also describe throughout how these bounds may be tightened when the expansion of the state in the standard basis has relatively few terms (has low `rank'), or can be specified by sparse matrices.
  Specifically, this allows us to simulate a `local' stabiliser syndrome measurement in time $\mathcal{O}(n)$, for a stabiliser code subject to Pauli noise --- matching what is possible using techniques developed by Gidney~\cite{gidney2021stim} without the need to store which operations have thus far been simulated.
\end{abstract}

\section{Introduction}
\label{intro}

Quantum computation, in general, is expected to be impossible to efficiently simulate with conventional (`classical') computers.
That is: for an idealised quantum circuit of one- and two-qubit gates and single-qubit measurements, it is expected that there is no randomised classical algorithm which can (either exactly or with even a modest margin of error) sample from the output distribution of the measurement outcomes, in time which scales polynomially in both the number of qubits and the number of operations.
This, together with quantum algorithms providing speed-ups over the best known classical algorithms~\cite{ShorFactoring, GroverSearch} raises the prospect of significant computational advantage through building quantum computers.
However, this also makes it difficult to test prototype quantum computers.

Fortunately, there is an important subclass of quantum circuit which can be efficiently simulated: stabiliser circuits.
These are circuits in which the measurements are restricted to projective single-qubit measurements onto the eigenstates of the Pauli matrices,
\begin{equation}
\label{eqn:PauliOperators}
\begin{aligned}
    I &=
    \text{\footnotesize$
    \begin{bmatrix}
        1 & 0 \\ 0 & 1
    \end{bmatrix}
    $},
&\quad
    X &=
    \text{\footnotesize$
    \begin{bmatrix}
        0 & 1 \\ 1 & 0
    \end{bmatrix}
    $},
&\quad
    Y &=
    \text{\footnotesize$
    \begin{bmatrix}
        0 \!&\! -i \\ i \!&\! 0
    \end{bmatrix}
    $},
&\quad
    Z &=
    \text{\footnotesize$
    \begin{bmatrix}
        1 \!&\! 0 \\ 0 \!&\! -1
    \end{bmatrix}
    $},
\end{aligned}
\end{equation}
and the unitary operations are restricted to the \emph{Clifford group} --- those unitary operators $U$, for which $UPU^\dagger$ is a `Pauli operator' (a tensor product of $\pm I$, $\pm X$, $\pm Y$, and $\pm Z$) if $P$ is also a Pauli operator.
Such a Pauli operator $P$ expresses a measurable property of an $n$-qubit state $\ket{\psi}$\,, \emph{e.g.},~expressing some correlations between hypothetical Pauli measurements on different qubits if $\ket{\psi}$ is a $+1$-eigenstate of $P$.
In this case, $U P U^\dagger$ represents a similar measurable property of the state $U \ket{\psi}$.
This observation forms the basis of the `stabiliser formalism'~\cite{GK}, which has proven extremely fruitful for the development of techniques for quantum error correction \cite{Devitt2013, Terhal2015, Roffe2019}. It also lay behind the original proof of the \emph{Gottesman--Knill theorem}~\cite{GK}, which is that a stabiliser circuit can be simulated by a classical algorithm in polynomial time, when acting on a standard basis state as input (or any other \emph{stabiliser state} $\ket{\psi}$, that is,~a~state which can be characterised as a $+1$-eigenvector of $n$ independent commuting Pauli operators).
This involves maintaining a \emph{stabiliser tableau}: an array which describes a list of Pauli operators $S_1$, $S_2$, \ldots, $S_n$, characterising $\ket{\psi}$ as the unique state which is a $+1$-eigenstate of each $S_k$.

Implicit in the original result of Ref.~\cite{GK} is that, for a stabiliser circuit on $n$ qubits, \textbf{(a)}~each single-qubit or two-qubit Clifford gate can be simulated in time $\mathcal{O}(n)$ by transformations of individual columns of the tableau, and \textbf{(b)}~measurements may be simulated in time $\mathcal{O}(n^3)$ by reduction to Gaussian elimination.
Simulation of stabiliser circuits, with an eye to improving performance, remains an active field of research.
Approaches to doing so, apart from the stabiliser formalism, do exist. For instance, Anders and Briegel~\cite{Anders2006} use the fact that every stabiliser state is equivalent up to a single-qubit Clifford operation to a `graph state' as the basis for techniques to simulate stabiliser operations in time $\mathcal{O}(n^2)$.
Bravyi~\emph{et al}.~\cite[Section 4.1]{Bravyi2019} also provide  techniques to simulate stabiliser operations, using a format which slightly generalises the stabiliser formalism, requiring $\mathcal{O}(n^2)$ time to simulate a stabiliser operation in the worst case.
Another way in which we may represent stabiliser states (and some other states apart from stabiliser states) is through an expansion
\begin{equation}
    \label{eqn:QFE}
        \ket{\psi}
    \,=\;
        C \!\!\!
        \sum_{\vec x \in \{0,1\}^r} \!\!\!
            \mathrm e^{2 \pi i \:\!\mathbf Q(\vec x)} \ket{f(\vec x)},
\end{equation}
for some integer $0 \le r \le n$, a function $f: \{0,1\}^r \to \{0,1\}^n$ which may be interpreted as a linear or affine transformation mod $2$, $\mathbf Q$ a polynomial of degree at most $2$, and $C$ a normalising constant.
Extending the terminology of Ref.~\cite{dBQuadratic} slightly, we call a form such as in Eqn.~\eqref{eqn:QFE} a \emph{quadratic form expansion}.
Examples of these expansions are essentially as old as the stabiliser formalism in the context of error correction~\cite{CalderbankGoodQEC}; and if generalised to express unitary operators as well as states, have proven useful in the study of circuits of Clifford operations~\cite{dehaene, VDNClassical, JBVClassical}, measurement based quantum computation~\cite{dBQuadratic}, and equality testing for unitary circuits~\cite{Amy_2019}.
In particular, implicit in the observations of van~den~Nest~\cite{VDNClassical}, and the techniques of Amy~\cite{Amy_2019} are polynomial-time algorithms to simulate stabiliser circuits.

Despite the availability of other simulation methods, the stabiliser formalism (including minor variations) remains the \emph{de facto} standard for reasoning about stabiliser circuits.
This is partly due to an elaboration described by Aaronson and Gottesman~\cite{Aaronson2004}, who described improved simulation techniques resulting in an $\mathcal{O}(n^2)$ bound to simulate measurements.
Note, however, that these methods do not track global phase factors, which may be important for applications which leverage the simulation of stabiliser circuits to perform more difficult simulation tasks, as in the work of Bravyi \textit{et al}~\cite{Bravyi2019}.

As an arbitrary stabiliser state on $n$~qubits requires at least $\tfrac{1}{2}n^2$ bits to represent~\cite[Corollary~21]{Gross2006}, a careful choice of data structure is needed to simulate arbitrary stabiliser operations in time asymptotically less than $\mathcal{O}(n^2)$.
Failing this, we may consider techniques which permit better performance for certain operations or under certain conditions.
For instance, the result of Anders and Briegel~\cite{Anders2006} is motivated by the ability to represent single-qubit operations and measurements on product states in $\mathcal{O}(1)$ time, and representing other operations in time $\mathcal{O}(d^2)$, for a parameter which may only be bounded by $d \in \mathcal{O}(n)$ in the worst case but which in some cases may be substantially smaller.
Similarly, previous work of ours~\cite{beaudrap2019quantum} describes techniques to simulate specialised Clifford circuits for concurrent entanglement swapping in limited quantum architectures --- involving only Pauli and controlled-NOT gates, but $X$-eigenstate preparations and measurements --- with a representation of the state as a quadratic form expansion as in Eqn.~\eqref{eqn:QFE}.\footnote{%
        Note that our terminology here is different from that of Ref.~\cite{beaudrap2019quantum}; note also that in that work, the corresponding polynomial $Q$ happens to have degree~1.}
In that work, the unitary gates could be performed in $\mathcal{O}(r)$ time, and measurements in time $\mathcal{O}(nr^2)$, where $r \le n$ governs the sum index as in Eqn.~\eqref{eqn:QFE}.
Guan and Regan~\cite{GuanRegan2019} describe techniques to evaluate outcome probabilities for total measurements following a unitary Clifford circuit, which would be advantageous when either the Clifford circuit contains very few Hadamard gates, or the number $M$ of Clifford gates in the circuit scales is bounded by $M \in \mathcal{O}(n^{2/(\omega-1)})$.%
    \footnote{%
        Here, $2 \le \omega < 2.3729$ is the exponent of the optimal algorithm for $n \times n$ matrix multiplication.
        We discuss this interpretation of the results of Ref.~\cite{GuanRegan2019} in Section~\ref{disc}.
    }
However, a more natural specialisation is presented by Gidney~\cite{gidney2021stim}, motivated by the use case of simulating error correction procedures.
There, he describes a refinement of the algorithm by Aaronson and Gottesman~\cite{Aaronson2004} to simulate any `deterministic measurement' (of a multi-qubit observable $P$, for which $\ket{\psi}$ happens to be an eigenvector) in time $\mathcal{O}(n)$.

In this article, we describe explicit techniques to simulate stabiliser operations in $\mathcal{O}(nr) \subseteq \mathcal{O}(n^2)$ time, on $n$-qubit stabiliser states represented by quadratic form expansions, where $r$ (the `rank' of the quadratic form expansion) again governs the sum index as in Eqn.~\eqref{eqn:QFE}.
This meets the same general asymptotic bound as the stabiliser formalism, and may be made more efficient for states that happen to have fewer terms when expanded over the standard basis, as in our earlier work~\cite{beaudrap2019quantum}.
(Our results mainly concern `weak' simulation, which is to say sampling from the distribution of measurement outcomes: see the remarks towards the end of Section~\ref{sec:elementaryStabiliserOpns}.)
Furthermore, our techniques allow for significant improvements in run-time, in particular cases where the function $f$ and the quadratic form $Q(\vec x)$ may be represented by sparse data structures.
Remarkably, our techniques allow us to compute the outcomes of deterministic single-qubit measurements (in the $X$-, $Y$-, or $Z$-basis) in constant time.
Together with partial use of our tighter run-time bounds for sparse data structures, this allows us to simulate deterministic measurements of `local' multi-qubit Pauli operators\footnote{%
    A multi-qubit operator is `$k$-local' if it acts (non-trivially) on at most $k$ qubits.
    We simply say that it is `local', if the number of qubits it acts on is bounded by some constant.
}
in $\mathcal{O}(n)$ time, a bound which again may be tightened for particular Pauli observables to be measured or when the function $f$ is represented by a sparse data structure.
This matches the performance of techniques presented by Gidney~\cite{gidney2021stim} for deterministic Pauli measurements, and suggests that our techniques may be well-suited to simulation of encoded stabiliser circuits.

The structure of the article is as follows.
In Section~\ref{sec:preliminaries} we briefly introduce the stabiliser operations of interest, describe quadratic form expansions, and present the format for quadratic form expansions which is used by our techniques.
Section~\ref{sec:computing-w-QFEs} describes data structures and helpful techniques and subroutines to manage quadratic form expansions (with technical details deferred to the Appendices), and the computational model which we assume for bounding the run-time complexity.
Section~\ref{sec:simulCliffOpns} then describes the procedures to simulate stabiliser operations on quadratic form expansions together with their run-time bounds.
In particular, Section~\ref{runtimes} summarises the run-times of our simulation techniques for a number of individual stabiliser operations.
Our main results concerning the asymptotic run-time to simulate circuits and other procedures involving multiple stabiliser operations are presented in Section~\ref{sec:complexity-procedures}.
(While our results mainly concern `weak' simulation, Sections~\ref{sec:generalCircuitSim} and~\ref{sec:parallelMeasurements} do contain some results which bear on `strong' simulation, \emph{i.e.},~computing the probability of specific outcomes for a given set of measurements.)
Finally, we conclude in Section~\ref{disc} with a discussion of how our results relate  to other simulation methods or results connected to path-sums, and the potential applications of our techniques.

\section{Preliminaries}
\label{sec:preliminaries}

In this Section, we broadly describe the stabiliser circuit model, and introduce quadratic form expansions as a simple approach to simulating them.
This will serve to show how quadratic form expansions can be used pedagogically, as an alternative to the stabiliser formalism, to demonstrate the Gottesmann--Knill Theorem.
It also serves as the starting point to describe the more involved technical contributions of this article.

We commonly use bold-face letters such as $\vec x$ to represent column vectors, whose transpose $\vec x\trans = {[\,x_1\;\;x_2\;\;\cdots\;]}$ is a row-vector.
In particular, we use $\vec e_j$ to denote some vector whose only non-zero entry is a $1$ in position $j$; the length of the vector will be clear from context.
We use $\mathrm{diag}(\vec a)$ to denote a matrix with coefficients $a_1$, $a_2$, \emph{etc}. on the diagonal and $0$ elsewhere.

\subsection{Stabiliser circuits}
\label{sec:elementaryStabiliserOpns}

We consider stabiliser circuits consisting of one- or two-qubit Clifford gates and Pauli measurements.
The `Clifford gates' consist of operations such as
\begin{equation}
    \label{eqn:simpleCliffordGates}
    H = \tfrac{1}{\sqrt 2}
    \text{\setlength{\arraycolsep}{1ex}\footnotesize
    $\begin{bmatrix} 1 & 1 \\ 1 & -1 \end{bmatrix}$},
    \,\,\,\,\,\,\,\,
    S = \text{\footnotesize$\begin{bmatrix} 1 & 0 \\ 0 & i \end{bmatrix}$},
    \,\,\,\,\,\,\,\,
    \mathrm CZ = \text{\setlength{\arraycolsep}{1ex}
    \footnotesize
    $\begin{bmatrix} 1 & 0 & 0 & 0 \\ 0 & 1 & 0 & 0 \\ 0 & 0 & 1 & 0 \\ 0 & 0 & 0 & -1 \end{bmatrix}$},
    \,\,\,\,\,\,\,\,
    \mathrm CX = \text{\setlength{\arraycolsep}{1ex}
    \footnotesize
    $\begin{bmatrix} 1 & 0 & 0 & 0 \\ 0 & 1 & 0 & 0 \\ 0 & 0 & 0 & 1 \\ 0 & 0 & 1 & 0 \end{bmatrix}$},
\end{equation}
and the Pauli operators of Eqn.~\eqref{eqn:PauliOperators}.%
    \footnote{%
        As usual, we implicitly extend each of these to $n$-qubit unitary transformations for $n \ge 1$, by taking tensor products with the identity operator as required.
    }
These gates are so-called as they are elements of the Clifford group: those unitary operations such that $UPU^\dagger$ is a Pauli operator, whenever $P$ is itself a Pauli operator.
(For an introductory treatment of the Clifford group, see Ref.~\cite[Chapter 10]{NandC}.)
Furthermore, the Clifford group may be generated (up to global phases) by either of the sets $\{S, H, \mathrm CZ\}$ or $\{S,H,\mathrm CX\}$, together with tensor products with the identity operator.
The `Pauli measurements' consist of single-qubit measurements in any of the following three bases:
\begin{equation}
    \begin{aligned}
        \ket{\texttt0}, \\
        \ket{\texttt1},
    \end{aligned}
    \qquad
    \qquad
    \begin{aligned}
        \ket{\texttt+} = \tfrac{1}{\sqrt 2}\ket{\texttt0} + \tfrac{1}{\sqrt 2}\ket{\texttt1}, \\
        \ket{\texttt-} = \tfrac{1}{\sqrt 2}\ket{\texttt0} - \tfrac{1}{\sqrt 2}\ket{\texttt1},
    \end{aligned}
    \qquad
    \qquad
    \begin{aligned}
        \ket{\texttt{+i}} = \tfrac{1}{\sqrt 2}\ket{\texttt0} +  \tfrac{i}{\sqrt 2}\ket{\texttt1}, \\
        \ket{\texttt{-i}} = \tfrac{1}{\sqrt 2}\ket{\texttt0} -  \tfrac{i}{\sqrt 2}\ket{\texttt1}.
    \end{aligned}
\end{equation}
These are the eigenstates of the $Z$, $X$, and $Y$ operators, and we call them the $Z$-basis, the $X$-basis, and the $Y$-basis, respectively.
These states may all be mapped to the standard basis $\{\ket{\texttt0},\ket{\texttt1}\}$ by single-qubit operations, so that single-qubit $Z$-basis measurements and Clifford gates suffice to simulate all Pauli measurements.
However, we will prefer to simulate these measurements without such reduction to $Z$-basis measurements.

We consider stabiliser circuits where measurements may occur in the middle of the circuit, and where operations that are performed after any measurement may depend on the measurement outcome.
The problem in which we are mainly interested is \emph{weak simulation}~\cite{VDNClassical, JozsaClassical} of such stabiliser circuits.
This is the problem of sampling from the output distribution of the measurements from a stabiliser circuit --- or, in other words, to simulate the \emph{behaviour} of a stabiliser circuit with measurements.
This is to be set against `strong simulation', which is the task of evaluating the probability of some of some \emph{particular} result for a subset of the measurement outcomes.
While we are interested in principle in strong simulation as well (and presenting in Sections~\ref{sec:generalCircuitSim} and~\ref{sec:parallelMeasurements} some results concerning strong simulation), our main objective is to describe efficient algorithms to sample from the output distribution of measurements of stabiliser circuits.

\subsection{Quadratic form expansions}
\label{sec:prelims-QFEs}

It does not require great mathematical sophistication to use the stabiliser formalism.
However, the meaning behind stabiliser tableaus and transformations of them is slightly indirect for people first encountering them.
It is also less easy to motivate to those who will not also need them for work in quantum error correction or related topics.
As an alternative, we present quadratic form expansions as a means of more directly representing stabiliser states, and simulating the effects of stabiliser operations on them.

\medskip
\noindent
We consider normalised state-vectors which can be represented as follows:
\begin{equation}
\label{eqn:QFE-roughly}
        \ket{\psi}
    \;=\;
        C
            \!\!
            \sum_{\mathbf x \in \{0,1\}^r} 
                \!\!\!
                i^{\;\!\mathbf Q(\vec x)}
                \,
                \ket{A \mathbf{x} \oplus \mathbf{b}}    \;,
\end{equation}
where  $C \in \mathbb C^\ast$ combines a normalising factor and possibly a global phase, $r \ge 0$ is an integer, $A$ is an $n \times r$ matrix (the `expansion matrix'), $\vec b$ an $n \times 1$ vector, and $\mathbf Q: \mathbb Z^r \to \mathbb Z$ is a function which determines either a real or an imaginary relative phase.
(For the `degenerate' case $r = 0$, the sum is just a single term consisting of $\ket{\vec b}$.)
Both $A$ and $\vec b$ have coefficients over $\{0,1\}$, and the expression $A \vec x \oplus \vec b$ represents the reduction of $A \vec x + \vec b$ modulo~2.
(Naturally, the expression in the exponent of $i$ may be evaluated modulo $4$; our techniques allow us to work gracefully with both forms of modular arithmetic at the same time.)
This describes $\ket{\psi}$ as a superposition, where the terms of the superposition are either purely real or purely imaginary functions of a vector $\vec x$ which determines the term of the superposition.

When the expansion matrix $A$ has rank $r$ as a matrix modulo~2, then this representation serves to describe the correlations (or lack of correlations) in the outcomes of potential single-qubit standard basis measurements.
For instance:
\begin{itemize}
\item
    The outcomes of a $Z$-basis measurement on three qubits in the state $\smash{\tfrac{1}{\sqrt{2}}(\ket{\texttt{000}} + \ket{\texttt{111}})}$ would yield the same result.
    This state has a quadratic form expansion with normalisaion $C = 2^{-1/2}$, rank $r=1$, a matrix $A = {[\,1\;\;1\;\;1\,]\:\!\trans}$ of shape $3 \times 1$, and $\mathbf Q(\vec x) = 0$, $\vec b = \vec 0$.
    This describes how the $Z$-basis measurement outcomes for all of the qubits are characterised by a single bit $x \in \{0,1\}$.
\item
    We may contrast this with the outcomes of $Z$-basis measurements on the product state $\smash{\tfrac{1}{2\sqrt{2}}(\ket{\texttt0} + \ket{\texttt1})(\ket{\texttt0} + \ket{\texttt1})(\ket{\texttt0} + \ket{\texttt1})}$, which would be independent and maximally random.
    This state has a quadratic form expansion with $C = 2^{-3/2}$, $r = 3$, $A = I_3$, and again $\mathbf Q(\vec x) = 0$ and $\vec b = \vec 0$.
    This describes how three independent bits $x_1, x_2, x_3 \in \{0,1\}$ are necessary to characterise $Z$-basis measurement outcomes on the state.
\end{itemize}
Thus the expansion matrix $A$ describes, usually in a non-unique way, a coherent superposition over different standard basis terms, each obtained from some bit-string $\vec x \in \{0,1\}^r$.

If we were to replace the relative phase $i^{\mathbf Q(\vec x)}$ by a more general function $\exp(i \pi \mathbf Q(\vec x))$ for $\mathbf Q: \mathbb Z^r \to \mathbb R$, where $\mathbf Q$ is a polynomial with real coefficients and degree at most $2$, this would be a slight generalisation of what is called a \emph{quadratic form expansion}~\cite{dBQuadratic}
for $\ket{\psi}$.%
    \footnote{%
        This terminology arises from representing $\mathbf Q$ by a polynomial in which every term is quadratic, using the fact that $x_j^2 = x_j$ for $x_j \in \{0,1\}$, so that $\mathbf Q$ is a `quadratic form'.
        One might similarly consider Eqn.~\eqref{eqn:QFE-roughly} to be an example of a `path-sum' representation~\cite{Amy_2019} of a state.
    }
In principle, if one does not impose any bound on the length $r$ of the summation index, one may represent an arbitrary quantum state in this way: see Refs.~\cite{dBQuadratic,Amy_2019} for more details.
Even if the relative phases are restricted to $i^{\mathbf Q(\vec x)}$ for an arbitrary function $\mathbf Q: \mathbb Z^r \to \mathbb Z$, this representation could be used to express any state generated by a Clifford+T circuit (see for instance the discussion on page~\pageref{discn:Clifford+T} in Section~\ref{disc}).
Instead, we consider the case where $\mathbf Q: \mathbb Z^r \to \mathbb Z$ arises from a symmetric integer matrix $Q$ by the equation
\begin{equation}
    \mathbf Q(\vec x) = \vec x\trans \!\!\; Q \!\; \vec x.
\end{equation}
We then refer to $Q$ as a \emph{Gram matrix} for $\mathbf Q$.
Essentially as a corollary of any one of Refs.~\cite{dehaene,VDNClassical,JBVClassical,dBQuadratic}, in the case that $0 \le r \le n$, where $A$ has rank $r$ as a matrix modulo~2, and where $C = 2^{-r/2}$, the result of the expression Eqn.~\eqref{eqn:QFE-roughly} is a normalised stabiliser state; and conversely, any normalised stabiliser state can be represented under such constraints.
We introduce the convention of using a symmetric matrix $Q$ to govern the relative phases, which allows us to represent the imaginary phases from $S$ gates and the sign phases from $\mathrm CZ$ gates on the same footing.
For instance, we may represent the state $\ket{\psi_1} = (I \otimes S) \ket{\texttt{++}} = \tfrac{1}{2} \bigl( \ket{\texttt{00}} + i \ket{\texttt{01}} + \ket{\texttt{10}} + i \ket{\texttt{11}})$ and the state $\ket{\psi_2} = \mathrm CZ \ket{\texttt{++}} = \tfrac{1}{2} \bigl( \ket{\texttt{00}} + \ket{\texttt{01}} + \ket{\texttt{10}} - \ket{\texttt{11}})$ both by quadratic form expansions, as
\begin{equation}
    \begin{aligned}[b]
    \ket{\psi_1} &= \tfrac{1}{2}
                    \!\!\!\sum_{\vec x \in \{0,1\}^2}\!\!\!
                        i^{\;\!
                            [\begin{smallmatrix} x_1 & x_2 \end{smallmatrix}]
                            \bigl[\begin{smallmatrix} 0 & 0 \\ 0 & 1 \end{smallmatrix}\bigr]
                            \bigl[\begin{smallmatrix} x_1 \\ x_2 \end{smallmatrix}\bigr]}
                    \,
                    \ket{x_1, x_2}
                    ,
    &\,\,
    \ket{\psi_2} &= \tfrac{1}{2}
                    \!\!\!\sum_{\vec x \in \{0,1\}^2}\!\!\!
                        i^{\;\!
                            [\begin{smallmatrix} x_1 & x_2 \end{smallmatrix}]
                            \bigl[\begin{smallmatrix} 0 & 1 \\ 1 & 0 \end{smallmatrix}\bigr]
                            \bigl[\begin{smallmatrix} x_1 \\ x_2 \end{smallmatrix}\bigr]}
                    \,
                    \ket{x_1, x_2}
                    .
\end{aligned}
\end{equation}
In each case, we take $C = \tfrac{1}{2}$, $A = I_2$, and $\vec b = \vec 0$, and set $Q$ to the appropriate $2 \times 2$ matrix in the imaginary exponent.
Note that, as $Q$ is symmetric, relative phases which involve distinct variables $x_j$ and $x_k$ must arise from two contributions $x_j x_k \,+\, x_k x_j$ in $\mathbf Q(\vec x)$ and thereby represent a phase $(-1)^{x_j x_k}$ rather than $i^{\;\!x_j x_k}$.

In the discussion above, the case where the expansion matrix $A$ has rank $r$ is significant: it implies in particular that distinct terms of the quadratic form expansion are orthogonal.
In this case, a quadratic form expansion representing a normalised state would satisfy $C = \omega \cdot 2^{-r/2}$, for $\omega \in \mathbb C$ a phase factor.
(We discuss the importance of the rank of $A$ to simulation techniques below.)
Our techniques also allow us to constrain the value of this global phase to a power of $\tau = \sqrt{i\:\!} = \exp(i \pi/4)$, for states obtained by simulating stabiliser operations on states for which $\omega = 1$.
In order to simplify discussion of simulation techniques for stabiliser circuits, except where otherwise specified, any reference to a ``quadratic form expansion'' from this point on should be understood to refer to an expression
\begin{equation}
\label{eqn:QFE-specific}
        \ket{\psi}
    \;=\;
        \frac{\tau^g}{\sqrt{2^r}}
            \!
            \sum_{\mathbf x \in \{0,1\}^r} 
                \!\!\!
                i^{\;\!\vec x\trans \!\!\: Q \:\! \vec x}
                \,
                \ket{A \mathbf{x} \oplus \mathbf{b}}    \;,
\end{equation}
which represents a normalised stabiliser state, for integer matrices $A$ and $Q$ where furthermore $Q$ is symmetric, and where $A \vec x \oplus \vec b$ denotes the reduction modulo~2 of the integer vector $A \vec x + \vec b$.

\subsection{On the role of rank in simulation}
\label{sec:roleOfRank}

The number $r \ge 0$ of columns required for the expansion matrix is related to the number of bits required to generate a particular term of a quadratic form expansion.
To this point, we have not imposed any restrictions on $r$, though we have alluded to the importance of the case where $r$ is the rank of $A$ considered as a matrix modulo~2.

The importance of this property is as follows.
Let $\mathrm{null}\,A$ represent the nullspace of $A$ considered as a matrix modulo~$2$ (\emph{i.e.},~the set of vectors $\vec x$ for which $A \vec x \equiv \vec 0 \!\mod{2})$.
Then, let $\mathrm{rank}\,A = r - \dim(\mathrm{null}\,A)$ be its rank as a matrix modulo~2.
For each $\vec y \equiv A \vec x \!\mod{2}$, we also have $\vec y \equiv A (\vec x + \boldsymbol \delta) \!\mod{2}$ for each $\boldsymbol \delta \in \mathrm{null}\,A$.
As a result, each term $\ket{\vec y}$ in a quadratic form expansion is one out of $2^{\dim(\mathrm{null}\,A)}$ colinear terms, which in some cases might interfere destructively.
If $r = \mathrm{rank}\,A$, each term $\ket{\vec y}$ in a quadratic form expansion is in fact orthogonal to the others, so that there is no destructive interference of outcomes.
As a result, to determine whether a measurement outcome on all qubits is possible, it is sufficient to determine whether it is present.
However, if $r > \mathrm{rank}\,A$, then it is in principle necessary to consider how the terms constructively or destructively interfere to determine whether a measurement outcome is possible.

For the above reason, it is practically motivated to consider quadratic form expansions in which $r = \mathrm{rank}\,A$ --- and we will often refer to $r$ as the \emph{rank} of the quadratic form expansion for this reason.%
    \footnote{%
        This of course should not be confused with the \emph{stabiliser rank} of a state, as described in Refs.~\cite{Bravyi2019, Bravyi2016, Bravyi2016b}; one could refer to our notion of `rank' as the `expansion matrix rank' if it became necessary to avoid confusion.
    }
This raises the question of how the rank $r$ may change under simulation of different operations.

Note that because the Pauli operations of Eqn.~\eqref{eqn:PauliOperators} and the Clifford gates $\mathrm CX$, $\mathrm CZ$, and $S$ of Eqn.~\eqref{eqn:simpleCliffordGates} are `monomial' (having at most one non-zero entry per row/column), it is not difficult to show that they may be simulated by a transformation of a quadratic form expansion which does not change the number of columns $r$ of the expansion matrix.
In particular, the Pauli operators and the diagonal operations $S$ and $\mathrm CZ$ may be simulated with no changes to the expansion matrix at all; and $\mathrm CX$ may be simulated by a simple row-operation on $A$.
To contrast, as the Hadamard gate does not map standard basis states to other standard basis states, simulating it will require changes to the value of $r$, either to increase or decrease it, to ensure that it corresponds to the rank of the expansion matrix.
(Similar remarks apply for $X$- and $Y$-basis measurements.)

As a result of the fact that $H$ is self-inverse, and the way in which quadratic form expansions emphasise the role of the standard basis, this requires an analysis of when a Hadamard gate leads to destructive interference of terms.
This ultimately requires that we solve a system of equations to determine whether such destructive interference takes place.
If we do not constrain the value of $A$ in any way, this will in practise require Gaussian elimination, at a cost of $\mathcal{O}(nr^2) \subseteq \mathcal{O}(n^3)$.
As this does not compare favourably to the $\mathcal{O}(n^2)$ time to simulate any elementary stabiliser operation using the stabiliser techniques of Aaronson and Gottesman~\cite{Aaronson2004}, one may consider what techniques would allow one to avoid the worst-case performance of Gaussian elimination.

\subsection{Our contribution}

In this article, we describe techniques to simulate stabiliser operations using quadratic form expansions where $r = \mathrm{rank}\,A$, in time $\mathcal{O}(nr) \subseteq \mathcal{O}(n^2)$.
This is done by maintaining $A$ in a particular form which makes it easy to certify that $A$ has rank $r$, and in so doing essentially amortise the cost of Gaussian elimination.
We then describe techniques to simulate stabiliser operations using quadratic form expansions, more explicitly than has been done in the related literature~\cite{dehaene,VDNClassical,JBVClassical,Amy_2019,dBQuadratic} and with a clear asymptotic analysis.
The summary of the asymptotic run-times of each of our subroutines is provided in Section~\ref{runtimes}.

The headline complexity of $\mathcal{O}(n^2)$ to simulate stabiliser operations with quadratic form expansions, matches that of existing techniques~\cite{Aaronson2004,Anders2006,Bravyi2019}.
This obscures important differences in how efficiently individual operations may be simulated.
For instance, stabiliser-based techniques~\cite{Aaronson2004,Bravyi2019} can simulate each of the Clifford gates of Eqn.~\eqref{eqn:simpleCliffordGates} in time $\mathcal{O}(n)$, while our techniques variously require $\mathcal{O}(r^2)$ or $\mathcal{O}(nr)$.
(When $r \in \mathcal{O}(\sqrt{n})$, at least those operations which may be simulated in time $\mathcal{O}(r^2)$ may be asymptotically as efficient or more efficient than in the stabiliser formalism; though we do not expect that it will be easy in general to determine that $r$ is bounded in this way for a given stabiliser circuit.)
The principal advantage provided by quadratic form expansions is that they lend themselves to sparse matrix representations.
Thus, the bounds $\mathcal{O}(r^2)$ and $\mathcal{O}(nr)$ themselves obscure substantially tighter bounds, that hold when the expansion matrix $A$ and the Gram matrix $Q$ are sparse.
These tighter bounds allow us to describe procedures to simulate deterministic measurements of local stabiliser operators in time $\mathcal{O}(n)$.

We give a careful account of the complexity of each of our subroutines, in terms of the number of non-zero coefficients in the rows and columns of $A$ and $Q$.
This, in combination with the fact that our techniques for maintaining rank involves maintaining at least $r$ columns in $A$ with precisely one non-zero coefficient, may be expected to provide a significant advantage for simulations when the stabiliser circuit in question has enough structure to certify that the states may be represented by such `sparse' quadratic form expansions.

\section{Managing quadratic form expansions}
\label{sec:computing-w-QFEs}

In this Section, we describe a number of techniques and procedures which will serve to simplify our account of our simulation techniques for quadratic form expansions, and of the run-time analysis for those simulation techniques when the quadratic form expansion has a sparse expansion matrix $A$ and Gram matrix $Q$.
We do so by describing certain constraints on quadratic form expansions, and ways of transforming quadratic form expansions which may be involved in simulating certain  stabiliser operations.
We then describe a list of helper subroutines which encapsulate these results (and also Lemmas~\ref{lemma:Gram-matrix-mixed-moduli} and~\ref{lemma:QFE-change-of-variables}) for use in procedures to simulate stabiliser operations.

\subsection{Principal row forms}
\label{sec:principal-row-form}

Following the observations of Section~\ref{sec:roleOfRank}, we wish for Eqn.~\eqref{eqn:QFE-specific} to represent the decomposition of $\ket{\psi}$ without repeating any standard basis terms, and in particular, so that it does not contain terms which \emph{cancel}.
To this end we impose the condition that the expansion matrix $A$ has rank $r$.

As we simulate operations by modifying the matrix $A$, we must determine how to maintain the invariant of $A$ having rank $r$.
In particular, we must bound the number of columns of $A$ above by $n$ at the end of each simulated operation.
At the same time, we wish to avoid performing Gaussian elimination when the rank of $A$ is in question, with a worst-case performance of $\mathcal{O}(nr^2) \subseteq \mathcal{O}(n^3)$.
Avoiding this cost is one of the main results of Aaronson and Gottesman~\cite{Aaronson2004}, who do so by roughly doubling the amount of stored data.
For quadratic form expansions, we may instead avoid Gaussian elimination by imposing constraints on the form of $A$.

\begin{defn}
    A matrix $A$ with shape $n \times r$ is in \emph{principal row form} if each $\vec e_k\trans$ occurs at least once as a row of $A$; that is, if there is a map $p: \{1,\ldots,r\} \to \{1,\ldots,n\}$ such that $\vec e_{p(k)}\trans A = \vec e_k\trans$ for each $1 \le k \le r$.
    We call such a map $p$, a \emph{principal index map} for $A$; a row $j = p(k)$ for $1 \le k \le r$ is a \emph{principal row} of $A$.
\end{defn}
\noindent
For $A$ in principal row form, each column $A \vec e_c$ of $A$ is the only column which has a non-zero entry in row $p(c)$, by construction.
It then follows that the columns are linearly independent, so that $A$ has rank $r$.
The choice of principal row $p(c)$ which stores a given vector $\vec e_c\trans$ may be non-unique for a given $A$; it is enough for our purposes to indicate one such row for each $1 \le c \le r$.
The role of the principal index map is important enough that we include it in the data describing a quadratic form expansion, as described in Section~\ref{sec:data-structures}.

In simulating operations on a quadratic form expansion, we must occasionally perform operations to the matrix $A$, which could take it out of principal row form.
We must then do additional work to prevent $A$ from being put out of principal row form, or to put it back into  principal row form, by performing suitable column operations and changes to the index map $p$.
In Section~\ref{sec:QFE-subroutines}, we describe some procedures to assist with maintaining $A$ in principal row form.

\subsection{Data structures and programming model}
\label{sec:data-structures}

A procedure to simulate a stabiliser operation on an $n$-qubit state $\ket{\psi}$, which is given as a quadratic form expansion, will in practice mean a procedure which acts on a tuple $\mathcal E = (n,r,g,Q,A,\vec b,p)$ specifying that quadratic form expansion.
These consist of the parameters required to specify a quadratic form expansion as in Eqn.~\eqref{eqn:QFE-specific}, together with a principal index map $p$ for the expansion matrix $A$.

We are particularly interested in the case of quadratic form expansions, where the matrices $A$ and $Q$ may be sparse.
That is, we suppose that there are integers $s,t,w \ge 0$ which bound from above, respectively, the number of non-zero coefficients in each row of $A$, in each column of $A$, and in each row/column of $Q$.

The vector $\vec b$ may be stored straightforwardly as an array or buffer of $n$ bits, and the principal index map we suppose to be represented as an integer array.%
    \footnote{%
        The number of columns in $A$ may temporarily increase to $n+1$ during a simulated operation.
        For this reason, it would be simplest to define $p$ as an array of length $n+1$.
    }
However, motivated by the setting where $s,t,w \ll n$, our techniques rely on the matrices $A$ and $Q$ being stored using a sparse matrix structure.
In particular, for either of these matrices:
\begin{itemize}
\item 
    We suppose that a record of the number of non-zero entries in any row or column $j$ is maintained, and that it is possible to iterate over the non-zero entries in such a row or column (in order), omitting any zero coefficients in doing so.
\item
    We also suppose that the data structure allows constant-time insertion or deletion of non-zero entries in between two given non-zero entries.
\end{itemize}
(This could be achieved with an array of pointers to nodes, which themselves form a list-like structure along the rows, columns, and diagonal.)
We suppose also that an explicit entry is stored for each element on the diagonal of $Q$, whether or not that entry is non-zero, allowing constant-time access to and modification of diagonal entries.
A schematic representation of these data structures is shown in Figure~\ref{fig:QFE-representation-schematic}.

\begin{figure}[t]
    \begin{huge}
    \begin{gather*}
        \mspace{-18mu}
        \mathsf
        {%
        \frac{\tau^g}{\sqrt{2^r}}  \!\!\;
            \mathop{\raisebox{-1ex}{\scalebox{2.82}{$\sum$}}}_{\phantom\vert \mathbf x \phantom\vert}
                \!\!\;
                \mathit{i}^{\raisebox{.5ex}{\normalsize$\;\!
                [\begin{matrix}
                    x_{\mathsf 1} \!\!\!\:&\!\!\!\: x_{\mathsf 2} \!\!\!\:&\!\!\!\: \cdots \!\!\!\:&\!\!\!\: x_{\mathsf r}
                \end{matrix} ]
                \raisebox{7.5mm}{%
                    $\begin{color}{blue!60!red!80!white}\smash{\underbrace{\color{black}
                        \left[\;
                            \mathclap{\begin{matrix} \\[9mm] \end{matrix}}
                            \begin{aligned}
                            \begin{tikzpicture}
                                \foreach \r in {1,...,5} {%
                                    \foreach \c in {1,...,5} {%
                                        \coordinate (Q-\r-\c) at ({\c*.4}, {-\r*.4});
                                        \foreach \d/\a in {%
                                            E/0, NE/45, N/90, NW/135, W/180, SW/225, S/270, SE/315} {
                                                \coordinate (Q-\r-\c-\d) at ($(Q-\r-\c) + (\a:3pt)$);
                                        }
                                    }
                                }
                                \foreach \r in {2,3,5} {
                                    \filldraw [draw=black, fill=blue!70!red!60!white, opacity=0.3]
                                            (Q-\r-1-W) arc (180:270:3pt)
                                                -- (Q-\r-5-S) arc (-90:90:3pt)
                                                -- (Q-\r-1-N) arc (90:180:3pt) -- cycle;
                                    \filldraw [draw=black, fill=blue!70!red!60!white, opacity=0.3]
                                            (Q-1-\r-N) arc (90:180:3pt)
                                                -- (Q-5-\r-W) arc (-180:0:3pt)
                                                -- (Q-1-\r-E) arc (0:90:3pt) -- cycle;
                                }
                                \foreach \r in {1,4} {
                                    \filldraw [draw=black, fill=blue!70!red!60!white, opacity=0.3]
                                        (Q-\r-\r.center) circle (3pt);
                                }
                                \foreach \r/\c in {2/2, 3/3, 3/5, 5/3} {
                                    \filldraw [black] (Q-\r-\c) circle (1.8pt);
                                }
                            \end{tikzpicture}
                            \end{aligned}
                        \;\right]}%
                        _{\text{\normalsize $Q^{\mathclap{\phantom|}}$}}}
                    \end{color}
                    \!\!
                    \left[\:\!
                        \mathclap{\begin{matrix} \\[9mm] \end{matrix}}
                        \begin{matrix}
                            x_{\mathsf 1} \\ x_{\mathsf 2} \\ \vdots \\ x_{\mathsf r}
                        \end{matrix}
                    \:\!\right]
                $}
                $}}
                \;\!\!
                \cdot
                \:\!
                \Biggl\lvert\;
                \raisebox{1mm}{\large%
                $\mspace{32mu}
                \begin{color}{red!70!black!90!white}\underbrace{\color{black}
                    \left[\;
                        \mspace{-56mu}
                        \mathclap{\begin{matrix} \\[20.5ex] \end{matrix}}
                        \begin{aligned}
                        \begin{tikzpicture}
                            \foreach \r in {1,...,12} {%
                                \foreach \c in {0,...,5} {%
                                    \coordinate (A-\r-\c) at ({\c*.5}, {-\r*.4});
                                    \foreach \d/\a in {%
                                        E/0, NE/45, N/90, NW/135, W/180, SW/225, S/270, SE/315} {
                                            \coordinate (A-\r-\c-\d) at ($(A-\r-\c) + (\a:4pt)$);
                                    }
                                }
                            }
                            \foreach \c in {1,...,5} {
                                \filldraw [draw=black, fill=red!80!black!70!white, opacity=0.3]
                                        (A-1-\c-N) arc (90:180:4pt)
                                            -- (A-12-\c-W) arc (-180:0:4pt)
                                            -- (A-1-\c-E) arc (0:90:4pt) -- cycle;
                            }
                            \foreach \r in {3,5,9,11,12} {
                                \filldraw [draw=black, fill=red!80!black!70!white, opacity=0.3]
                                        (A-\r-1-W) arc (180:270:4pt)
                                            -- (A-\r-5-S) arc (-90:90:4pt)
                                            -- (A-\r-1-N) arc (90:180:4pt) -- cycle;
                            }
                            \foreach \c/\r in {1/1, 2/10, 3/7, 4/8, 5/4} {%
                                \draw
                                    [red!70!black!90!white, dotted, line width=1.75pt, stealth-]
                                    ($(A-\r-0-W) + (-5pt,0)$)
                                        node [anchor=east]
                                        {\footnotesize$\mathsf p(\textsf\c)$\!\!\!} -- (A-\r-\c);
                            }
                            \foreach \r/\c in {%
                                1/1, 4/5, 7/3, 8/4, 10/2,
                                3/1,3/2, 3/4, 3/5,
                                5/1, 5/2,
                                9/4, 9/5,
                                11/1, 11/2, 11/4,
                                12/4%
                                } {
                                \filldraw [black] (A-\r-\c) circle (2.125pt);
                            }
                        \end{tikzpicture}
                        \end{aligned}
                    \;\right]}
                    _{\text{\normalsize $A^{\mathclap{\phantom|}}$}}
                \end{color}
                \!\!
                \left[\:\!
                    \mathclap{\begin{matrix} \\[10ex] \end{matrix}}
                    \begin{matrix}
                        x_{\mathsf 1} \\[0.5ex] x_{\mathsf 2} \\[0.5ex] \vdots \\[0.5ex] x_{\mathsf r}
                    \end{matrix}
                \:\!\right]
                \!\!\:\oplus\!\!\:
                \begin{color}{orange!80!black!90!white}\underbrace{\color{black}
                    \left[\;
                        \mathclap{\begin{matrix}  \\[20.5ex]  \end{matrix}}
                        \begin{aligned}
                        \begin{tikzpicture}
                            \foreach \r in {1,...,12} {%
                                \coordinate (b-\r) at (0, {-\r*.4});
                                \foreach \d/\a in {%
                                    E/0, NE/45, N/90, NW/135, W/180, SW/225, S/270, SE/315} {
                                        \coordinate (b-\r-\d) at ($(b-\r) + (\a:4.5pt)$);
                                }
                            }
                            \filldraw [draw=black, fill=orange!80!black!60!white, opacity=0.6]
                                (b-1-N) arc (90:180:4.5pt)
                                    -- (b-12-W) arc (-180:0:4.5pt)
                                    -- (b-1-E) arc (0:90:4.5pt) -- cycle;
                            \foreach \r in {1,5,7,9,10,11} {
                                \filldraw [black] (b-\r) circle (2.125pt);
                            }
                        \end{tikzpicture}
                        \end{aligned}
                    \;\right]}
                    _{\text{\normalsize$\vec b^{\mathclap{\phantom|}}$}}
                \end{color}                
                $}
                \scalebox{2}{$\Biggr\rangle$}
            }
        \mspace{-18mu}
    \end{gather*}
    \end{huge}
    \vspace*{-2ex}
    
    \caption{%
        \label{fig:QFE-representation-schematic}%
        A schematic representation of some quadratic form expansion, in which the expansion matrix $A$ and Gram matrix $Q$ are stored with sparse data structures, and where $A$ is in principal row form with some index map $p$.
        Black dots represent the locations of non-zero entries of $Q$, $A$, and $\vec b$.
        Each row/column of $A$ and of $Q$ are represented in a way which stores only the non-zero entries of each row and column (and all of the diagonal elements of $Q$).
        Some rows of $A$ may be zero, in which case the corresponding qubit is in some fixed standard basis state $\ket{b_j}$; in contrast, each column of $A$ is non-zero.
        In particular, each column $1 \le c \le r$ of $A$ has a designated `principal row' $j = p(c)$ which stores the row-vector $\vec e_c\trans$, and is therefore non-zero \emph{only} in column $c$.
        (While we depict these rows differently above for the sake of clarity, the principal rows would be stored in the same way as the other rows.)
        The management of the `principal index map' $p$ is central to our results.
        In some cases, the choice of principal row may be arbitrary: in the example above, one could equally well select $p(4)$ to be the final row of $A$ instead of the row indicated.
        In other cases, there is only one possible choice of principal row for some column $c$, as when there is only one row which stores $\vec e_c\trans$ (as with $c=1$ above); sometimes this row is also the only row in which column $c$ itself is non-zero (as with $c=3$ above).
        In the course of simulating operations on a quadratic form expansion, it may be necessary to transform a row which is designated as a principal row $j = p(c)$ for some column $c$.
        When this happens, we may attempt to select a new row to act as the principal row for column $c$; in some cases this may require a transformation $A \mapsto A'$ through column operations, to produce a row $j'$ of $A'$ which is equal to $\vec e_c\trans$.
    }
\end{figure}

Our results pre-suppose a machine with random access memory (RAM), where memory accesses, comparisons of bit-values, and comparisons of integers can all be performed in constant time.
This model accurately reflects the complexity of the comparison and arithmetic operations which are performed on much of the data, which may be represented by integers which are bounded above by a constant.
While our model neglects theoretical $\mathrm{poly}\;\! \log(n)$ factors for memory accesses and arithmetic on integers in general, these poly-log factors may in practise also be bounded by constants (\emph{e.g.},~when simulating circuits on $n < 2^{64}$ qubits), and would in any case be swamped by overheads arising from realising these operations on realistic computer architecture (\emph{e.g.},~cache misses and memory word size).

For the sake of brevity, our procedures do not act on the tuple $\mathcal E = (n,r,g,Q,A,\vec b,p)$ explicitly.
Instead, we suppose that $\mathcal E$ is taken implictly an argument to any procedure to modify the quadratic form expansion which it represents, and may be modified as a side effect.
In particular, our results concern the complexity of simulating operations on a quadratic form expansion \emph{in place}, which is in effect to say operating on the quadratic form expansion without making a copy of any of its data.
(In particular, it should be considered to be passed by reference rather than by value.)
As a consequence, the original values stored in any data structure are not available after modification, unless it has been explicitly copied elsewhere.

\subsection{Mixed modulus arithmetic}
\label{sec:mixed-modulus-arithmetic}

To maintain the expansion matrix $A$ in principal row form, it will occasionally be necessary to perform column operations on $A$, of a sort that correspond to a change in variables of the index $\vec x$.
Such a change of variables will involve a corresponding transformation of the Gram matrix $Q$ --- somehow taking into account the fact that while $A\vec x \oplus \vec b$ may be evaluated modulo~2, we cannot treat $Q$ as merely being defined modulo~2 (as this would obscure the difference between relative phases of $i$ and $-i$).

First, we note that while $Q$ cannot be reduced modulo~2, the majority of its coefficients \emph{can} be reduced modulo~2, precisely because $Q$ is also symmetric:
\begin{lemma}
    \label{lemma:Gram-matrix-mixed-moduli}
    Let $Q, Q'$ be symmetric $r \times r$ matrices over $\mathbb Z$.
    If $Q' - Q \equiv \Delta \pmod{4}$, for a matrix $\Delta$ with only even coefficients and which is zero on its diagonal, then $\vec x\trans Q' \;\!\vec x \equiv \vec x \trans Q \;\!\vec x \pmod{4}$; and conversely.
\end{lemma}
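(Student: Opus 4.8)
The plan is to reduce both directions to a short computation with the quadratic form $\vec x\trans \Delta \;\!\vec x$, using the symmetry of $Q$ and $Q'$ to control the off-diagonal contributions modulo~$4$. I would work with $\vec x \in \{0,1\}^r$ as in Eqn.~\eqref{eqn:QFE-specific}, though the forward implication in fact uses nothing about $\vec x$ beyond integrality. For the forward direction, write $Q' - Q = \Delta + 4M$ for an integer matrix $M$, so that $\vec x\trans Q'\;\!\vec x - \vec x\trans Q\;\!\vec x = \vec x\trans(Q'-Q)\;\!\vec x \equiv \vec x\trans \Delta\;\!\vec x \pmod 4$, and it suffices to show $\vec x\trans \Delta\;\!\vec x \equiv 0 \pmod 4$. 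Expanding and discarding the diagonal terms (which vanish by hypothesis) leaves a sum of terms $(\Delta_{jk}+\Delta_{kj})\,x_j x_k$ over $j<k$. The key point here is that $Q'-Q$, and hence $\Delta$ modulo~$4$, is symmetric, so $\Delta_{jk}+\Delta_{kj}\equiv 2\Delta_{jk}\pmod 4$; since $\Delta_{jk}$ is even, each such coefficient is $\equiv 0\pmod 4$, and the claim follows.

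For the converse, set $\Delta := Q'-Q$ (again symmetric) and extract the two required properties by evaluating the hypothesis at a small family of $0/1$ vectors. Evaluating at $\vec x = \vec e_j$ gives $\Delta_{jj}\equiv 0\pmod 4$, so one may replace the diagonal of $\Delta$ by zeros without changing it modulo~$4$. Evaluating at $\vec x = \vec e_j + \vec e_k$ with $j\neq k$ gives $\Delta_{jj}+\Delta_{kk}+2\Delta_{jk}\equiv 0\pmod 4$; cancelling the diagonal congruences just obtained yields $2\Delta_{jk}\equiv 0\pmod 4$, i.e.\ $\Delta_{jk}$ is even. Thus the modified matrix $\Delta$ has only even coefficients, is zero on its diagonal, and satisfies $Q'-Q\equiv\Delta\pmod 4$, as required.

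I do not expect a genuine obstacle here; the only things to watch are, first, not to forget that $Q'-Q$ is symmetric --- this is precisely what upgrades ``$\Delta_{jk}+\Delta_{kj}$ is a sum of two even integers'' (merely even) to ``$\equiv 0 \pmod 4$'', and also what makes the single evaluation at $\vec e_j+\vec e_k$ enough to pin down $\Delta_{jk}$ modulo~$2$ --- and, second, to keep the distinction between ``congruent modulo~$4$'' and ``equal'' straight in the converse, where one genuinely has to modify the diagonal of $Q'-Q$ (whose entries need only be multiples of $4$, not $0$) to obtain the stated matrix $\Delta$.
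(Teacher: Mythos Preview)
Your proposal is correct and follows essentially the same approach as the paper's proof (Lemma~\ref{lemma:congruence-of-Gram-matrices-mod-4}): the forward direction uses the symmetric splitting of $\vec x\trans(Q'-Q)\vec x$ into diagonal terms (which vanish mod~$4$) and doubled off-diagonal terms (which vanish mod~$4$ since the entries are even), and the converse evaluates at $\vec x = \vec e_j$ and $\vec x = \vec e_j + \vec e_k$ to extract the diagonal and off-diagonal constraints respectively. Your explicit remark about needing to zero out the diagonal of $Q'-Q$ to produce the required $\Delta$ is a nice touch that the paper handles only implicitly via its reformulation in terms of $\Gamma = \tfrac{1}{2}(Q'-Q)$.
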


\noindent
We prove this (easy) result in the Appendices (Lemma~\ref{lemma:congruence-of-Gram-matrices-mod-4}, p.~\pageref{lemma:congruence-of-Gram-matrices-mod-4}).
Then, in an imaginary exponent, we may evaluate the matrix Gram matrix $Q$ modulo~$4$, but in particular also reduce any \emph{off-diagonal} coefficient of a Gram matrix in an imaginary exponent mod~$2$.
This fact may be used to reduce the number of non-zero coefficients in sparse matrix representations of $Q$.

It will also prove useful to occasionally perform a change of variables on the vector $\vec x$, corresponding to some particular transformation of the expansion matrix $A$.
This is important enough to warrant an explicit result concerning such changes of variables, requiring careful treatment of the mixed-modulus arithmetic:
\begin{lemma}
    \label{lemma:QFE-change-of-variables}
    Let $0 \le r \le n$ be integers, $A$ an $n \times r$ integer matrix, $Q$ a symmetric $r \times r$ integer matrix, $\vec b \in \{0,1\}^n$, and $g \in \mathbb Z$.
    Let $E$ be a unimodular integer matrix,%
        \footnote{%
            An integer matrix $E$ is \emph{unimodular} if $E^{\;\!-1}$ exists and is also an integer matrix; examples include permutation matrices, and either upper or lower triangular matrices in which every coefficient on the main diagonal is $1$.
        }
    and $A' \equiv AE \pmod{2}$ and $Q' \equiv {E\!\:}\trans \!\!\: Q E \pmod{4}$.
    Then
    \vspace*{-.25ex}
    \begin{equation}
    \label{eqn:QFE-change-of-variables}
        \frac{\tau^g}{\sqrt{2^r}}
            \!
            \sum_{\mathbf x \in \{0,1\}^r} 
                \!\!\!
                i^{\;\!\vec x\trans \! Q \vec x}
                \,
                \ket{A \mathbf{x} \oplus \mathbf{b}}
    \;=\;
        \frac{\tau^g}{\sqrt{2^r}}
            \!
            \sum_{\mathbf x \in \{0,1\}^r} 
    \!\!\!
                i^{\;\!\vec x\trans \! Q' \vec x}
                \,
                \ket{A' \mathbf{x} \oplus \mathbf{b}} \;,
    \end{equation}\\[-4ex]
    
    \noindent
    where in particular the index of summation on each side of the equation ranges over $\{0,1\}^r \subseteq \mathbb Z^r$.
\end{lemma}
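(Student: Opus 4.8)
The plan is to prove Eqn.~\eqref{eqn:QFE-change-of-variables} by reindexing the sum on the left-hand side through the substitution induced by $E$, and checking that corresponding terms agree. The subtlety to keep in mind throughout is that the ket index $A\vec x \oplus \vec b$ depends on $\vec x$ only modulo $2$, whereas the exponent of $i$ must be controlled modulo $4$; these two requirements interact because reducing $E\vec y$ modulo $2$ introduces an error term which, thanks to the symmetry of $Q$, perturbs the relevant quadratic form only by multiples of $4$.

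First I would note that since $E$ is unimodular, $\det E = \pm 1$, so $E$ remains invertible modulo $2$. Hence the map $\phi \colon \{0,1\}^r \to \{0,1\}^r$ sending $\vec y$ to the entrywise reduction of $E\vec y$ modulo $2$ is a bijection, and for any function $F$ on $\{0,1\}^r$ we have $\sum_{\vec x \in \{0,1\}^r} F(\vec x) = \sum_{\vec y \in \{0,1\}^r} F(\phi(\vec y))$. It therefore suffices to show that, writing $\vec x = \phi(\vec y)$, the summand $i^{\vec x\trans Q \vec x}\ket{A\vec x \oplus \vec b}$ of the left-hand side equals the summand $i^{\vec y\trans Q' \vec y}\ket{A'\vec y \oplus \vec b}$ of the right-hand side.

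For the ket index, writing $\phi(\vec y) = E\vec y - 2\vec k$ for an integer vector $\vec k = \vec k(\vec y)$, we get $A\,\phi(\vec y) \equiv AE\,\vec y \equiv A'\vec y \pmod 2$ by the hypothesis $A' \equiv AE \pmod 2$, so $\ket{A\,\phi(\vec y) \oplus \vec b} = \ket{A'\vec y \oplus \vec b}$. For the phase, expanding $\phi(\vec y)\trans Q\, \phi(\vec y) = (E\vec y - 2\vec k)\trans Q\,(E\vec y - 2\vec k)$ and using that $Q$ is symmetric, the two cross terms combine into $-4\,(E\vec y)\trans Q\,\vec k$ and the final term is $4\,\vec k\trans Q\,\vec k$; hence $\phi(\vec y)\trans Q\,\phi(\vec y) \equiv \vec y\trans (E\trans Q E)\,\vec y \pmod 4$. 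Since $Q' \equiv E\trans Q E \pmod 4$ entrywise, and congruent integer matrices induce congruent quadratic forms on $\{0,1\}^r$ modulo $4$ (this is the easy direction of Lemma~\ref{lemma:Gram-matrix-mixed-moduli}, or a one-line expansion), we conclude $\phi(\vec y)\trans Q\,\phi(\vec y) \equiv \vec y\trans Q'\,\vec y \pmod 4$ and therefore $i^{\vec x\trans Q \vec x} = i^{\vec y\trans Q'\vec y}$. Combining the two facts and noting that the prefactor $\tau^g/\sqrt{2^r}$ is common to both sides, the two expansions agree term by term after reindexing, which proves the claim; the degenerate case $r = 0$ is immediate, as both sides reduce to $\tau^g\ket{\vec b}$.

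I expect no serious obstacle here: the result is essentially a bookkeeping exercise. The one place requiring care is precisely the interplay of the two moduli --- verifying that the modulo-$2$ reduction performed inside the ket does not disturb the modulo-$4$ value of the phase --- which is exactly where the symmetry of $Q$ is used, since it is what forces the cross terms to contribute $2\cdot 2(\cdots)$ rather than $2(\cdots)$.
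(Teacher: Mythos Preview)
Your proof is correct and takes essentially the same approach as the paper: reindex the sum via the bijection on $\{0,1\}^r$ induced by the unimodular matrix, and use the symmetry of $Q$ to show that reducing modulo~$2$ inside the ket does not disturb the quadratic form modulo~$4$. The only cosmetic differences are that the paper reindexes via $C = E^{-1}$ rather than $E$, and packages your inline expansion $(E\vec y - 2\vec k)\trans Q\,(E\vec y - 2\vec k) \equiv \vec y\trans E\trans Q E\,\vec y \pmod 4$ as a separate lemma (Lemma~\ref{lemma:consistency-mod-2}); your appeal to Lemma~\ref{lemma:Gram-matrix-mixed-moduli} for the step ``$Q' \equiv E\trans Q E \pmod 4$ entrywise implies equal quadratic forms mod~$4$'' is harmless but unnecessary, since entrywise congruence mod~$4$ trivially gives the result.
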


\begin{proof}
    This result rests on some number-theoretic results which we defer to the Appendix.
    Apart from this, we may prove the result as follows.
    If we let $C = E^{\!\:-1}$, we then have
    \begin{align}{}
        \frac{\tau^g}{\sqrt{2^r}}
            \!
            \sum_{\mathbf x \in \{0,1\}^r} 
                \!\!\!
                i^{\;\!\vec x\trans \! Q \vec x}
                \,
                \ket{A \mathbf{x} \oplus \mathbf{b}}
    & \;=\;
        \frac{\tau^g}{\sqrt{2^r}}
            \!
            \sum_{\mathbf x \in \{0,1\}^r} 
                \!\!\!
                i^{\;\!\vec x\trans \! C\trans\! E\trans \! Q E C \vec x}
                \,
                \ket{AEC \mathbf{x} \oplus \mathbf{b}} \nonumber \\
    & \;=\;
        \frac{\tau^g}{\sqrt{2^r}}
            \!\!\!\!
            \sum_{\substack{\mathbf y = C \vec x \\ \mathbf x \in \{0,1\}^r}} 
                \!\!\!
                i^{\;\!\vec y\trans \! Q' \vec y}
                \,
                \ket{A' \mathbf{y} \oplus \mathbf{b}} .
    \end{align}~\\[-4ex]

    \noindent
    For $\vec y = C \vec x \in \mathbb Z^r$, let $\vec{\tilde y}$ be its residue modulo~2.
    While $\vec y$ is only equivalent to $\vec{\tilde y}$ modulo~2, this is in fact enough (Lemma~\ref{lemma:consistency-mod-2}, p.~\pageref{lemma:consistency-mod-2}) to show that $\vec{\tilde y}\trans\!\!\; Q' \, \vec{\tilde y} \equiv \vec y\trans \!\!\;Q' \, \vec y \pmod{4}$.
    Furthermore, as $E$ and $C$ are invertible integer matrices, they are invertible modulo~2 as well, so that $\bigl\{ \vec{\tilde y} \in \{0,1\}^r \,\big\vert\, \exists \vec x \in \{0,1\}^r : \vec{\tilde y} \equiv C  \vec x  \pmod{2} \bigr\}$ is equal to the set $\{0,1\}^r$ itself.
    Thus, we have
    \vspace*{-1ex}
    \begin{equation}{}
        \frac{\tau^g}{\sqrt{2^r}}
            \!\!\!\!
            \sum_{\substack{\mathbf y = C \vec x \\ \mathbf x \in \{0,1\}^r}} 
                \!\!\!
                i^{\;\!\vec y\trans \! Q' \vec y}
                \,
                \ket{A' \mathbf{y} \oplus \mathbf{b}}
    \;=\;
        \frac{\tau^g}{\sqrt{2^r}}
            \!\!
            \sum_{\vec{\tilde y} \in \{0,1\}^r} 
                \!\!\!
                i^{\;\!\vec{\tilde y}\trans \! Q' \vec{\tilde  y}}
                \,
                \ket{A' \mathbf{\tilde y} \oplus \mathbf{b}} ;
    \end{equation}
    the Lemma then holds, by
    relabeling the index of summation on the right-hand side from $\vec{\tilde y}$ to $\vec x$.
\end{proof}
\noindent
These results form the basis of simple subroutines, described in Section~\ref{sec:QFE-subroutines} and defined in Appendix~\ref{apx:subroutines}, to allow us to work more easily with the two forms of modular arithmetic involved in quadratic form expansions.

\subsection{Fixing index bit values}
\label{sec:fixBits}

On occasion, we may wish to assign a fixed value to some particular bit of $\vec x \in \{0,1\}^r$ in a quadratic form expansion.
For instance, in the case of a $Z$-basis measurement, we may wish simply to select for only those terms in which a particular value $z \in \{0,1\}$ for some bit $x_k$ is realised.%
    \footnote{%
        Indeed, for a quadratic form expansion as in Eqn.~\eqref{eqn:QFE-specific} where $A$ is in principal row form, fixing the value of any particular bit $x_k$ for any $1 \le k \le r$, corresponds to computing $[\bra{z}_{p(k)} \otimes I]\ket{\psi}$, up to a normalising factor.
    }
More subtly, when simulating a Hadamard operation, it may become necessary to isolate those terms for which $\langle A\vec x \oplus \vec b\;\!\vert\:\! \psi\rangle \ne 0$ by fixing the value of some bit $x_k$.

For convenience, we suppose that it is the final bit $x_r$ which we wish to fix (to fix any other bit, we may first perform a simple change of variables as described in Lemma~\ref{lemma:QFE-change-of-variables}).
Given a state $\ket{\psi}$ as in Eqn.~\eqref{eqn:QFE-specific}, consider the (possibly subnormalised) vector $\ket{\psi'}$ obtained by selecting only those terms such that $x_r = z$ for some constant $z \in \{0,1\}$\,:
\begin{equation}{}
    \label{eqn:bitFixedQFE}
            \ket{\psi'}
        \;=\;
            \frac{\tau^g}{\sqrt{2^r}} \!\!
            \sum_{\vec x \in \{0,1\}^r} \!\!\!\!
                    i^{\,\vec x\trans \! Q \vec x}\,\delta_{x_r,z}
                \ket{A \vec x \oplus \vec b}.
\end{equation}
To see how this expression for $\ket{\psi'}$ may be simplified,
we may consider a block structure for $Q$,
\begin{equation}
            Q
        \;=\;
            \left[\begin{array}{c|c}
            ~\\
                \mspace{12mu}
                \tilde Q
                \mspace{12mu}
            &
                \vec q
            \\[2ex]
            \hline &
            \\[-2ex] 
                \vec q\trans 
            &
                u
            \end{array}\right].
\end{equation}
Let $\vec y = {[\,x_1\;\;\cdots\;\;x_{r{-}1}\,]\:\!\trans} \in \{0,1\}^{r{-}1}$: setting $x_r = z \in \{0,1\}$, we then have
\begin{equation}
            \vec x\trans\! Q \vec x
        \;=\;
            \vec y\trans \tilde Q \!\: \vec y
            + 2 \:\! z \;\! \vec q\trans \! \vec y + z^2 u
        \;=\;
            \vec y\trans \bigl[ \tilde Q + 2z \,\mathrm{diag}(\vec q\trans) \bigr] \vec y
            + z u \,.
\end{equation}
If we let $Q' = \tilde Q + 2z \,\mathrm{diag}(\vec q\trans)$, let $\vec a \in \{0,1\}^n$ be the final column of $A$, let $A'$ be the matrix consisting of the first $r-1$ columns of $A$ (omitting the final column, $\vec a$), and let $\vec b' = \vec b \oplus z \vec a$, we have
\begin{equation}{}
    \mspace{-18mu}
    \begin{aligned}[b]
            \ket{\psi'}
        \;&=\;
            \frac{\tau^g}{\sqrt{2^r}} \!\!\!
            \sum_{\vec y \in \{0,1\}^{r-1}} \!\!\!\!
                    i^{\,\vec y\trans [ \tilde Q + 2 z \;\! \mathrm{diag}(\vec q\trans) ] \!\: \vec y  \,+\, zu}
                \ket{A' \vec y \oplus z\:\!\vec a \oplus \vec b} \\
                \label{eqn:bitFixedQFE-reduced}
        \;&=\;
            \frac{1}{\sqrt 2} \cdot \biggl[ \frac{\tau^{g'}\!}{\sqrt{2^{r{-}1}}} \!\!\!\!
            \sum_{\vec y \in \{0,1\}^{r\!\!\:-\!\!\:1}} \!\!\!\!\!\!
                    i^{\,\vec y\trans \! Q'  \vec y}
                \ket{A' \vec y \oplus \vec b'} \biggr],
    \end{aligned}
    \mspace{-30mu}
\end{equation}
where $g' := g + 2 \!\: z \!\: u$.
Note that for any $j = p(k)$ for which $1 \le k < r$, row $j$ of $A'$ is a truncated version of row $j$ of $A$, omitting only the final (zero) coefficient, so that $\vec e_{p(k)}\trans A' = \vec e_k\trans$.
Then $A'$ is also in principal row form, with index map also given by $p$ (albeit restricted to inputs $0 \le k \le r{-}1$).

Thus, to simulate fixing $x_r = z$, and in particular to reduce the number of columns involved in $A$, it suffices to compute $A'$, $\vec b'$, $Q'$, and $g'$ as above.
Note that, due to our convention of treating normalisation as being closely connected to the rank, we do not have a general way of representing or otherwise accounting for the additional factor of $\smash{\tfrac{1}{\sqrt 2}}$ on the right-hand side of Eqn.~\eqref{eqn:bitFixedQFE-reduced}.
Note also that the number $r{-}1$ of columns of the matrix $A'$, and thus the dimension of the summation index, may now differ from the integer $r$ stored as the rank.
Any simulation technique which relies on the above analysis, must independently account for these details to ensure that the result is a quadratic form expansion describing a normalised state.

\subsection{Eliminating columns which are entirely zero}
\label{sec:ZeroColumnElim}

In transforming quadratic form expansions, we may temporarily produce an expansion as in Eqn.~\eqref{eqn:bitFixedQFE} in which the matrix $A$ is not full rank.
When this occurs in our analysis --- particularly, in describing the simulation of Hadamard operations and measurements in the $X$- and $Y$\!\!\:-\!\:eigenbases --- one of the columns of $A$ is in fact entirely zero.
This affords us the opportunity to reduce the number of columns of $A$ by one, or possibly even by two, as we describe here.

We depart slightly from the usual assumption that the state $\ket{\psi}$ is represented as in Eqn.~\eqref{eqn:QFE-specific}.
Suppose that $A$ has rank $r$, but has shape $n \times (r{+}1)$ with some column $c$ which is entirely zero; and that
\begin{equation}
        \ket{\psi}
    \;=\;
        \frac{\tau^g}{\sqrt{2^{r+1}}}
        \!\!
        \sum_{\mathbf x \in \{0,1\}^{r\!\!\:+\!\!\:1}} \!\!\!\!
            i^{\,\vec x \trans \! Q \!\: \vec x} \,
            \ket{A \vec x \oplus \vec b} .
\end{equation}
For our simulation techniques, this only occurs when $A$ is \emph{almost} in principal row form, in that it is equipped with a function which is nearly a principal index map $p: \{1,\ldots,r{+}1\} \to \{1,\ldots,n\}$, such that $\vec e_{p(k)}\trans A = \vec e_k\trans$ for all columns $1 \le k \le r\!+\!1$ such that $k \ne c$.
We may simplify our analysis by performing a change of variables, by defining $A^{(1)} = A E$ and $Q^{(1)} = E\trans \!Q E$, for the permutation matrix $E = I \:\!\oplus\:\! {(\vec e_c \:\!{\oplus}\;\! \vec e_{r{+}1})(\vec e_c \:\!{\oplus}\;\! \vec e_{r{+}1})\trans}\!$ which serves to swap columns $c$ and $r{+}1$ of $A$.
If we let $\vec b^{(1)} = \vec b$ to keep the superscript indices in lock-step, then by Lemma~\ref{lemma:QFE-change-of-variables} we have
\begin{equation}
    \label{eqn:zeroColumnElim-Q1} 
    \begin{aligned}[b]
            \ket{\psi}
        \;=\;
            \frac{\tau^g}{\sqrt{2^{r+1}}}
            \!\!
            \sum_{\mathbf z \in \{0,1\}^{r\!\!\:+\!\!\:1}} \!\!\!\!
                i^{\,\vec z \trans \! Q^{(1)} \!\: \vec z} \,
                \ket{A^{(1)} \vec z \oplus \mathbf{b}^{(1)}} .
    \end{aligned}
\end{equation}
If we define $p': \{1,\ldots,r\} \to \{1,\ldots,n\}$ so that $p'(c) = p(r{+}1)$ and $p'(k) = p(k)$ for $k \ne c$, then this is again almost a principal index map for $A^{(1)}$, except in that there is no row in $A^{(1)}$ which is equal to $\vec e_{r{+}1}\trans$ (or indeed which is even non-zero in column $r{+}1$).

Let $A^{(2)}$ be the matrix consisting of the first $r$ columns of $A^{(1)}$ (omitting the final zero column): then $A^{(2)}$ is in principal index form, as it has shape $n \times r$ and as $\vec e_{p'(c)}\trans A^{(2)} = \vec e_c\trans$ for each $1 \le c \le r$.
We may isolate the sum over the index $x_{r\!\!\:+\!\!\:1}$ in Eqn.~\eqref{eqn:zeroColumnElim-Q1} into a scalar factor, as follows.
As $\smash{Q^{(1)}}$ is symmetric, we can define an $r \times r$ symmetric matrix $Q^{(2)}$, a vector $\vec q \in \mathbb Z^r$, and integer $u$ so that
\begin{equation}
\label{eqref:eq10}
        Q^{(1)}
    \;=\;
        \left[\begin{array}{c|c}
        ~\\
            \mspace{12mu}
            Q^{(2)}
            \mspace{12mu}
        &
            \vec q
        \\[2ex]
        \hline &
        \\[-2ex] 
            \vec q\trans 
        &
            u
        \end{array}\right].
\end{equation}
Let $\vec x = {[\,z_1\;\;\cdots\;\;z_r\,]\:\!\trans} \in \{0,1\}^r$ and $y = z_{r{+}1} \in \{0,1\}$.
We may then re-express the exponent of the phase in Eqn.~\eqref{eqn:zeroColumnElim-Q1} as
\begin{equation}{}
    \begin{aligned}[b]
        \vec z\trans Q^{(1)} \;\! \vec z
    \;&=\;
        \vec x\trans 
            Q^{(2)} \!\:
        \vec x
        \,+\,
            2 \:\! y \;\! \vec q\trans \!\!\: \vec x
        \,+\,
            y^2 \:\! u
    \;=\;
        \vec x\trans Q^{(2)} \;\!\vec x
        \,+\,
            y (2 \!\:\vec q\trans \!\!\: \vec x + u) \,.
    \end{aligned}
\end{equation}
Then, if we let $\vec b^{(2)} = \vec b^{(1)}$, we then have
\begin{equation}
    \label{eqn:zeroColumnElim-Q2} 
    \begin{aligned}[b]
            \ket{\psi}
        \;&=\;
            \frac{\tau^g}{\sqrt{2^{r+1}}}
            \!\!
            \sum_{\substack{\mathbf x \in \{0,1\}^r  \\ y \in \{0,1\}}}
            \!\!\!
                    i^{\,\vec x\trans \! Q^{(2)} \!\: \vec x \,+\, y (2 \vec q\trans\! \vec x \!\:+\!\: u)} \,
                \ket{A^{(2)} \mathbf{x} \oplus \mathbf{b}^{(2)}}
        \\&=\;
            \frac{\tau^g}{\sqrt{2^{r+1}}}
            \!\!
            \sum_{\mathbf x \in \{0,1\}^r}
            \!\!\!
                    i^{\,\vec x\trans \! Q^{(2)} \vec x}
                    \Bigl(
                        1 + i^{\;\!2  \!\: \vec q\trans\! \vec x \!\:+\!\: u}
                    \Bigr)
                \ket{A^{(2)} \vec x \oplus \vec b^{(2)}}
    \;.
    \end{aligned}
\end{equation}
To simplify the parenthesised sum, we must consider two cases: one for $u \in \{1,3\}$, and one for $u \in \{0,2\}$.

\begin{itemize}
\item
    Suppose $u = 2d+1$ for $d \in \{0,1\}$.
    Then for various values of $\vec x$, the scalar expression in parentheses in Eqn.~\eqref{eqn:zeroColumnElim-Q2} has the form $1 \pm i \,=\, \sqrt{2} \;\! \tau^{\pm 1}$.
    Specifically, we have
    \begin{equation}{}
    \mspace{-24mu}
    \begin{aligned}[b]
            \biggl(
                \frac{1 + i^{\;\!2\!\: \vec q\trans\! \vec x \,+\, u}}{\sqrt 2}
            \biggr)
        \,&=\,
            \biggl(
                \frac{1 + (-1)^{\vec q\trans\! \vec x + d} \:\!\cdot\:\! i}{\sqrt 2}
            \biggr)
        \;=\;
            \tau^{(-1)^{\vec q\!\!\; \trans\!\!\!\; \vec x \!\:\oplus\!\: d}}
          =\,
            \tau^{1 \;\!-\;\! 2(\vec q\trans\! \vec x \,\oplus\, d)}
        \,=\,
            \tau \cdot i^{-(\vec q\trans\! \vec x \,\oplus\, d)}
            ,
    \end{aligned}
    \mspace{-12mu}
    \end{equation}
    where recall that $\vec q\trans \vec x$ is in principle an integer, but that $\vec q\trans \vec x \oplus d$ is the reduction of $\vec q\trans \vec x + d$ modulo~2.
    We may simplify this further by considering how to represent $\vec q\trans \vec x \oplus \vec d$, as an expression modulo~4.
    To do so, we define a binary operation `$\ast$' for mod~2 matrix multiplication of integer matrices $A$ and $B$,\footnote{%
            \label{fn:mod2opns}%
            It is not difficult to show that this operator is associative, even over the integers.
        }
where $A \ast B$ is the matrix that results when the coefficients of the matrix product $AB$ are projected to $\{0,1\}$:
\begin{equation}
    \label{eqn:ast-operation}
    A \!\!\:\ast\! B = \bigl[ \:\! c_{j,k} \bigr]\,,
    \qquad
    c_{j,k} = \begin{cases} 0, & \text{if $\vec e_j\trans\!\; (AB)\, \vec e_k$ is odd;} \\[.5ex]
                        1, & \text{if $\vec e_j\trans\!\; (AB)\, \vec e_k$ is even.}
          \end{cases}
\end{equation}
This allows us to explicitly denote when a matrix multiplication is to be reduced modulo~$2$, in a context where other arithmetic is \emph{not} being performed modulo~$2$. 
For instance, we have $\vec q\trans \!\!\!\;\ast \vec x \in \{0,1\}$.
We may then expand this into a matrix expression modulo~4, using the fact that $y^2 \equiv 0 \pmod{4}$ for $y$ even, and $y^2 \equiv 1 \pmod{4}$ for $y$ odd:
\begin{equation}
\label{eqn:evalAstMod4}
\begin{aligned}[b]
        \vec q\trans \!\ast \vec x
    \;&\equiv\;
        (\vec q \trans \vec x)^2 \pmod{4}
    \\[1ex]&=\;
        \vec x\trans \!\!\: \vec q \!\; \vec q\trans \!\: \vec x\,.
\end{aligned}
\end{equation}
Then, also using the fact that we have
\begin{equation}
    \label{eqn:parity-formula-01}
    u \oplus v = u + v - 2uv
\end{equation}
for $u,v \in \{0,1\}$, we have
\begin{equation}
        \vec q \trans \! \vec x \oplus d
    \;=\;
        d + (\vec q\trans \!\!\ast\!\!\: \vec x) - 2d (\vec q\trans \!\!\ast\!\!\;  \vec x)
    \;\equiv\;
        d + (1-2d)(\vec x\trans \!\!\: \vec q \;\! \vec q\trans\!\!\: \vec x)
        \pmod{4}.
\end{equation}
Using this, we then have
    \begin{equation}{}
    \mspace{-24mu}
    \begin{aligned}[b]
            \biggl(
                \frac{1 + i^{\;\!2\!\: \vec q\trans\! \vec x \,+\, u}}{\sqrt 2}
            \biggr)
        \,&=\,
            \tau \cdot i^{-d - (1-2d)(\vec x\trans\! \vec q \;\! \vec q\trans\! \vec x)}
        \,=\,
            \tau^{-u+2} \, i^{\;\!(u-2)(\vec x\trans\! \vec q \;\! \vec q\trans\! \vec x)}
            .
    \end{aligned}
    \mspace{-12mu}
    \end{equation}
    Then we may rewrite Eqn.~\eqref{eqn:zeroColumnElim-Q2} to obtain
    \begin{equation}{}
    \label{eqn:zeroColumnElim-Q3a}
    \mspace{-18mu}
    \begin{aligned}[b]
            \ket{\psi}
        \;&=\;
            \frac{\tau^g}{\sqrt{2^{r}}}
            \!\!
            \sum_{\mathbf x \in \{0,1\}^r}
            \!\!\!
                i^{\,\vec x\trans \! Q^{(2)} \vec x}
                \biggl(
                    \frac{1 + i^{\;\!2\!\: \vec u\trans\! \vec x \,+\, q}}{\sqrt 2}
                \biggr)
                \ket{A^{(2)} \mathbf{x} \oplus \mathbf{b}^{(2)}}
        \\[1ex]&=\;
            \frac{\tau^{g-u+2}}{\sqrt{2^{r}}}
            \!\!
            \sum_{\mathbf x \in \{0,1\}^r}
            \!\!\!
                    i^{\,\vec x\trans \! [Q^{(2)} + (u-2)\vec q \vec q\trans] \vec x}
                \ket{A^{(2)} \mathbf{x} \oplus \mathbf{b}^{(2)}}
        \;=\;
            \frac{\tau^{g'}}{\sqrt{2^{r}}}
            \!\!
            \sum_{\mathbf z \in \{0,1\}^r}
            \!\!\!
                    i^{\,\vec z\trans \! Q' \vec z}
                \,
                \ket{A' \mathbf{z} \oplus \mathbf{b}'} ,
    \end{aligned}
    \mspace{-9mu}
    \end{equation}
    where 
$
        Q'
        \,:=\,
            Q^{(2)} + (u-2) \vec q \:\!\vec q\trans
            ,
 $ 
    and $g' := g - u + 2$, with $A' := A^{(2)}$ a matrix in principal row form with index map $p'$.
    
\item
    Suppose $u = 2d$ for $d \in \{0,1\}$.
    Then for various values of $\vec x$, the scalar expression in parentheses in Eqn.~\eqref{eqn:zeroColumnElim-Q2} is either $0$ or $2$, depending on whether $\vec x$ is orthogonal (mod $2$) to $\vec q$.
    Because of this, it is of particular interest whether $\vec q \ne \vec 0$: we may easily show that this is the case for a simulation of stabiliser operations on normalised states.
    \begin{itemize}
    \item 
        If $\vec q = \vec 0$ and $u = 2$, it would follow that in fact  $\ket{\psi} = \vec 0$.
        This vector cannot be represented under the conditions of the expansion matrix $A$ being in principal row form.
    \item
        For $\vec q = \vec 0$ and $u = 0$, $\ket{\psi}$ would instead be super-normalised but otherwise adequately represented by the data $Q^{(2)}$, $A^{(2)}$, and $\vec b^{(2)}$.
        Note that we could in principle perform further operations to maintain the representation with $A$ in principal row form, though the super-normalisation would not be faithfully represented by the value of the rank $r$.
    \end{itemize}
    As our interest is in the case that the state $\ket{\psi}$ represented at the input is normalised, neither of these cases should arise in the course of a stabiliser circuit simulation.%
        \footnote{%
            In practice, for the stabiliser simulation subroutines which we describe later, we can prove in particular that $q_c = 1$; though we do not need to use this particular fact for our simulation procedures.
        }
    We may suppose that a well-defined subroutine simply stops (possibly setting some warning flag) if it discovers that $\vec q = \vec 0$, which it can test in time $\mathcal{O}(1)$ using the sparse data structure for $Q$.
    
    Given that $\ket{\psi}$ is a normalised state,%
    \label{discn:normalised-zero-column-elim}
    then we have $\vec q \ne \vec 0$: using the sparse data structure of $Q^{(1)}$, we may find an index $1 \le \ell \le r$ for which $u_\ell = \smash{Q^{(1)}_{\ell,r\!\!\:{+}\!\!\:1}} = 1$ in $\mathcal{O}(1)$ time.
    By performing appropriate column operations to the row-vector $\vec q\trans$, we may reduce it to $\vec e_\ell\trans$; a further column-swap would allow us to map this row-vector to $\vec e_r\trans$.
    That is to say, if we let
    $R' \,=\, I \!\;-\!\; \vec e_\ell(\vec q - \vec e_\ell)\trans\!$ and $R'' \,=\, I \!\;\oplus\!\; (\vec e_\ell \oplus \vec e_r)(\vec e_\ell \oplus \vec e_r)\trans\!$, we have $\vec q\trans \:\! R' \:\! R'' = \vec e_r\trans$.
    As both $R'$ and $R''$ are invertible as integer matrices, so is $R := R' R''$.
    (Note that in fact $R'' = I$ in the case $\ell = r$.)
    Continuing from Eqn.~\eqref{eqn:zeroColumnElim-Q2}, we have
    \begin{equation}
    \begin{aligned}[b]
            \ket{\psi}
        \;&=\;
            \frac{\tau^g}{\sqrt{2^{r+1}}}
            \!\!
            \sum_{\mathbf x \in \{0,1\}^r}
            \!\!\!
                    i^{\,\vec x\trans \! Q^{(2)} \vec x}
                    \Bigl(
                        1 + (-1)^{\vec q\trans\! \vec x \!\:+\!\: d}
                    \Bigr)
                \ket{A^{(2)} \vec x \oplus \vec b^{(2)}}
        \\&=\;
                \frac{\tau^g}{\sqrt{2^{r+1}}}
            \!\!
            \sum_{\mathbf x \in \{0,1\}^r}
            \!\!\!
                    i^{\,\vec x\trans \! Q^{(2)} \vec x}
                    \Bigl(
                        1 + (-1)^{\vec e_r\trans R^{-1} \vec x \,+\, d}
                    \Bigr)
                \ket{A^{(2)} \mathbf{x} \oplus \mathbf{b}^{(2)}}        
        \\&=\;
                \frac{\tau^g}{\sqrt{2^{r+1}}}
            \!\!
            \sum_{\mathbf v \in \{0,1\}^r}
            \!\!\!
                    i^{\,\vec v\trans \! R\trans\! Q^{(2)} R \;\! \vec v}
                    \Bigl(
                        1 + (-1)^{\vec e_r\trans \vec v \,+\, d}
                    \Bigr)
                \ket{A^{(2)} R \;\! \mathbf{v} \oplus \mathbf{b}^{(2)}},   
        \end{aligned}
    \end{equation}
    where we use Lemma~\ref{lemma:QFE-change-of-variables} for the last equality.\footnote{%
        To use Lemma~\ref{lemma:QFE-change-of-variables} in this case, we implicitly perform the change of variables over the summation index $\vec z \in \{0,1\}^{r+1}$ used in Eqn.~\eqref{eqn:zeroColumnElim-Q2}, to allow the change of variables also to incorporate the parenthesised expression in the sum.
    } 
    Note that the $\{0,1\}$-matrix $A^{(3)} \equiv A^{(2)} R \pmod{2}$ may again not be in principal row form: while the rows $p'(k)$ of $A^{(2)}$ will be unaffected by right-multiplication by $R$ for $k \ne \ell$, we have  $\vec e_{p'(\ell)}\trans A^{(2)} R = \vec e_\ell\trans R' R'' \equiv \vec q\trans R''$, where $\vec q\trans$ may have more than one non-zero coefficient which are merely permuted by the action of $R''$.
    However, if we define the map%
    \vspace*{-2ex}

    \begin{equation}
        p''(k)   \;=\;
        \begin{cases}
            p'(r), 
            &  \text{if $k = \ell$},
        \\[.5ex]
            p'(k),       &   \text{otherwise},
        \end{cases}
    \end{equation}
    then $\smash{\vec e_{p''(\ell)}\trans A^{(3)}} = \smash{\vec e_{p'(r)} A^{(2)} R' R''} = \smash{\vec e_r\trans R' R''} = \smash{\vec e_r\trans R''} = \vec e_\ell\trans$ when $\ell < r$.
    In this sense, $A^{(3)}$ is nearly in principal row form (with index map $p''$), except that it may lack a row with the vector $\vec e_r\trans$.
    However, we will now see that the $r^{th}$ column of $A$ may be eliminated anyway.

    \medskip
    Note that $1 + (-1)^{v_r + d} = 2\;\!\delta_{v_r,d}$: then if we let $Q^{(3)} \equiv R\trans Q^{(2)} R \pmod{4}$ and $\vec b^{(3)} = \vec b^{(2)}$, we have  
    \begin{equation}
    \label{eqn:zeroColumnElim-Q3b}
    \begin{aligned}[b]
            \ket{\psi}
        \;&=\;
            \sqrt{2} \cdot \biggl[\frac{\tau^g}{\sqrt{2^r}}
            \!\!
            \sum_{\mathbf v \in \{0,1\}^r}
            \!\!\!
                    i^{\,\vec v\trans \!Q^{(3)} \vec v}
                    \,\delta_{v_r,d}\,
                \ket{A^{(3)} \mathbf{v} \oplus \mathbf{b}^{(3)}} \biggr].
        \end{aligned}
    \end{equation}
    We may simplify the expression in square brackets on the right-hand side, following the analysis of Section~\ref{sec:fixBits} for expressions of the form of Eqn.~\eqref{eqn:bitFixedQFE}.
    In particular, the factor of $\sqrt 2$ cancels against the factor of $\tfrac{1}{\sqrt 2}$ on the right-hand side of Eqn.~\eqref{eqn:bitFixedQFE-reduced}.
    Applying the appropriate transformations will yield
    \vspace*{-1ex}
    \begin{equation}{}
    \label{eqn:zeroColumnElim-Q4b}
    \mspace{-18mu}
    \begin{aligned}[b]
            \ket{\psi}
        \;&=\;
            \frac{\tau^{g'}\!}{\sqrt{2^{r{-}1}}} \!\!\!\!
            \sum_{\vec x \in \{0,1\}^{r\!\!\:-\!\!\:1}} \!\!\!\!\!\!
                    i^{\,\vec x\trans \! Q^{(4)}  \vec x}
                    \ket{A^{(4)} \vec x \oplus \vec b^{(4)}},
    \end{aligned}
    \mspace{-30mu}
    \end{equation}
    for suitably defined phase exponent $g'$, Gram matrix $Q^{(4)}$, and vector $\vec b^{(4)}$, and where $A^{(4)}$ differs from $A^{(3)}$ by omitting column $r$.
    In particular, $A^{(4)}$ has shape $n \times (r{-}1)$, so that the restriction of $p''$ to $\{1,\ldots,r{-}1\}$ is a principal index map for $A^{(4)}$; then $A' := A^{(4)}$ has rank $r-1$.
\end{itemize}
In each case above, we compute a suitable quadratic form expansion, with a corresponding matrix $A'$ in principal row form, and with an accompanying principal index map.

\subsection{Subroutines}
\label{sec:QFE-subroutines}

The preceding Sections each motivate simple subroutines, to assist in the transformation and maintenance of quadratic form expansions, in which the matrix $A$ is in principal row form.
We summarise these subroutines here, together with their run-times, for use in the simulation procedures for stabiliser circuits.
Below, $s \ge 0$ is an upper bound on the number of non-zero entries in any row of $A$; $t \ge 0$ is an upper bound on the number of non-zero entries in any column of $A$; $w \ge 0$ is an upper bound on the number of non-zero entries in any row/column of $Q$; and subscripted versions of these variables (such as $s_j$, $t_c$, \emph{etc.}) refer to the number of non-zero entries in specific rows or columns of these matrices. 
\begin{description}[font=\mdseries]
\item[%
    $\algname{ReduceGramRowCol}(c)$] ~\\
    \setalgdesc{ReduceGramRowCol}{%
        For an integer $1 \le c \le r$, reduce the coefficient $Q_{c,c}$ modulo~4, and reduce every off-diagonal entry of row and column $c$ modulo~2.}%
    Following Lemma~\ref{lemma:Gram-matrix-mixed-moduli} (Section~\ref{sec:mixed-modulus-arithmetic}):
    \algdesc{ReduceGramRowCol}
    This runs in time $\mathcal{O}(w_c) \subseteq \mathcal{O}(r)$.
\item[%
    $\algname{ReindexSubtColumn}(k,c)$] ~\\
    \setalgdesc{ReindexSubtColumn}{%
        For distinct integers $1 \le c,k \le r$, update a quadratic form expansion by performing a change of variables in which column $c$ of $A$ is subtracted (mod~2) from column $k$.}%
    Following Lemma~\ref{lemma:QFE-change-of-variables} (Section~\ref{sec:mixed-modulus-arithmetic}):
    \algdesc{ReindexSubtColumn}
    This runs in time $\mathcal{O}(t_c + w_c)  \subseteq \mathcal{O}(n)$.
\item[%
    $\algname{ReindexSwapColumns}(k,c)$] ~\\
    \setalgdesc{ReindexSwapColumns}{%
        For integers $1 \le c,k \le r$, update a quadratic form expansion by performing a change of variables corresponding to swapping columns $c$ and $k$ of $A$.}%
    \algdesc{ReindexSwapColumns}
    This runs in time $\mathcal{O}(t_c + t_k + w_c + w_k)  \subseteq \mathcal{O}(n)$.
\item[%
    $\algname{MakePrincipal}(c,j)$] ~\\
    \setalgdesc{MakePrincipal}{%
        For integers $1 \le c \le r$ and $1 \le j \le n$, if $A_{j,c} = 1$, perform appropriate changes of variable to transform row $j$ of $A$ to $\vec e_c\trans$, in order to make $j$ a principal row of $A$.}%
    \algdesc{MakePrincipal}
    If $s_j = 1$ or if $A_{j,c} = 0$, this runs in time $\mathcal{O}(1)$; otherwise this runs in time $\mathcal{O}(s_j t_c + s_j w_c)  \subseteq \mathcal{O}(nr)$.
    
\item[%
    $\algname{ReselectPrincipalRow}(j,c)$] ~\\
    \setalgdesc{ReselectPrincipalRow}{%
        For integers $1 \le c \le r$ and $0 \le j \le n$, attempt to find a row $j_\ast \ne j$ of $A$, such that $A_{j_\ast\!\:,c} \ne 0$, to serve as a new principal row.
        If no such row is found, the stop without modifying the quadratic form expansion.}%
    \algdesc{ReselectPrincipalRow}
    If $j > 0$ is the only row in which column $c$ is non-zero, this halts in time $\mathcal{O}(1)$ without modifying $A$.
    Otherwise it runs in time $\mathcal{O}(s_{j_\ast} t_c + s_{j_\ast} w_c)  \subseteq \mathcal{O}(nr)$, where $s_{j_\ast}$ is the minimum number of non-zero entries in a row $j_\ast \ne j$ for which $A_{j_\ast,c} = 1$.

\item[%
    $\algname{FixFinalBit}(z)$] ~\\
    \setalgdesc{FixFinalBit}{%
        Perform a transformation on the quadratic form expansion, consistent with fixing the value of $x_r = z$, reducing the rank in doing so.}%
    \algdesc{FixFinalBit}
    This runs in time $\mathcal{O}(t_r + w_r)  \subseteq \mathcal{O}(n)$.
    
\item[%
    $\algname{ZeroColumnElim}(c)$] ~\\
    \setalgdesc{ZeroColumnElim}{%
        Eliminate (one or two) redundant columns from the matrix $A$ of a quadratic form expansion, given that $A$ has $r{+}1$ columns but rank $r$, and that column $c$ in particular is entirely zero.
        (This will in general involve other non-trivial changes to $A$.)%
    }
    \algdesc{ZeroColumnElim}
    This runs in time $\mathcal{O}(tw + w^2) \subseteq \mathcal{O}(nr)$.
\end{description}
Despite the fact that these procedures do not take any explicit arguments which contain the data $\mathcal E = (n,r,g,Q,A,\vec b,p)$ representing the quadratic form expansion, each procedure described above should be understood to potentially modify some or all of those parameters.
The implementations of these subroutines are simple, and are described in pseudocode in Appendix~\ref{apx:subroutines}, along with a run-time analysis for each.

\section{Simulating Clifford operations}
\label{sec:simulCliffOpns}

In this Section, we describe how to simulate Clifford operations on quadratic form expansions, using the data structures and subroutines described in Section~\ref{sec:computing-w-QFEs}.
We describe their run-time complexity, in terms which take advantage of bounds on the number of non-zero coefficients in the rows and columns of the Gram matrix $Q$ and the expansion matrix $A$.
We show, independently of any sparsity conditions, that each operation may be simulated in time $\mathcal{O}(nr)$: a summary of the run-times of each procedure is provided in Section~\ref{runtimes}.
We also show that single-qubit $X$-, $Y$-, or $Z$-basis measurements with deterministic outcomes may be similated in $\mathcal{O}(1)$ time.

In the following, we let $0 \le s,w \le r$ be (respectively) upper bounds on the number of non-zero entries in any row of $A$ or row/column of $Q$, $0 \le t \le n-r+1$ be an upper bound\footnote{%
        Because there are $r$ principal rows, each of which is non-zero in exactly one column, each column of $A$ has at least $r-1$ zero coefficients in it when $r > 0$.
        (For $r = 0$, there are no non-zero coefficients in $A$ at all.)
        As a point of interest, we note that as $s,w \le r$, it follows that we may bound $st, wt \le \tfrac{1}{4}n^2 + \tfrac{1}{2}n + \tfrac{1}{4}$.
    }
on the number of non-zero entries in any column of $A$, and subscripted versions of these (such as $s_j$, $t_k$, \emph{etc.}) represent the number of non-zero entries in particular rows or columns.

\subsection{Pauli operations}
\label{sec:simulatePaulis}

We may easily represent the effect of Pauli operations on quadratic form expansions.
To represent an $X_j$ operation, we may simply flip the coefficient $b_j$ of $\vec b$, thereby realising $X_j$ on each term of the quadratic form expansion.
To represent a $Z_j$ operation, we use the fact that $Z_j \ket{\vec z} = (-1)^{z_j} \ket{\vec z} = (-1)^{\vec e_j\trans \vec z}$.
Letting $\vec a_j\trans = \vec e_j\trans A$ represent the $j\textsuperscript{th}$ row of $A$,
and using the fact that $\vec a_j\trans \vec x = \vec x\trans \!\mathrm{diag}(\vec a_j\trans) \vec x$ for $\vec x \in \{0,1\}^r$, we have
\begin{equation}
\begin{aligned}[b]
        Z_j \ket{A\vec x \oplus \vec b}
    \;&=\;
        (-1)^{\vec e_j\trans (A\vec x \oplus \vec b)} \ket{A \vec x \oplus \vec b} \\
    \;&=\;
        (-1)^{\vec a_j\trans\!\!\; \vec x + b_j} \ket{A \vec x \oplus \vec b}
    \;=\;
        (-1)^{\vec x\trans \!\!\; \mathrm{diag}(\vec a_j\trans) \vec x + b_j} \ket{A \vec x \oplus \vec b} \,.
\end{aligned}
\end{equation}
This allows us to simulate $Z_j$ on a quadratic form expansion by
\begin{equation}
\begin{aligned}[b]
        Z_j \ket{\psi}
      \;&=\;
        \frac{\tau^g}{\sqrt{2^r}}
            \!
            \sum_{\mathbf x \in \{0,1\}^r} 
                \!\!\!
                i^{\;\!\vec x\trans \!\!\;Q \vec x \,+\, 2 \vec x\trans \mathrm{diag}(\vec a_j\trans) \vec x + 2b_j}
                \,
                \ket{A \mathbf{x} \oplus \mathbf{b}}
    \;=\;
        \frac{\tau^{g'}}{\sqrt{2^r}}
            \!
            \sum_{\mathbf x \in \{0,1\}^r} 
                \!\!\!
                i^{\;\!\vec x\trans \! Q' \vec x}
                \,
                \ket{A \mathbf{x} \oplus \vec b}
    \;,
\end{aligned}
\end{equation}
for $g' = g + 4b_j$ and $Q' = Q + 2 \;\!\mathrm{diag}(\vec a_j\trans)$.
To represent a $Y = iXZ$ operation, we may simply add $2$ to the exponent of $\tau$ to incorporate the leading imaginary scalar, and then simulate $Z$ followed by $X$.
We summarise these three transformations by the procedures $\algname{SimulateX}(j)$ and $\algname{SimulateZ}(j)$, described in Figure~\ref{alg:Paulis}.
\begin{figure}[t]%
    \small
    ~\!\!\!\!\!\!\hfill
    \begin{minipage}[t]{.286\textwidth}\raggedright
        \rule{\textwidth}{1pt}
        $\algname{SimulateX}(j)$
        \smallskip
        \hrule
        \smallskip
        \begin{itshape}\small 
        Simulate the effect of an $X_j$ gate.%
        \end{itshape}%
        \smallskip
        \hrule
        \medskip
        \smallskip
        Update $b_j \gets b_j \oplus 1$\,. \\[.25ex]
        \rule{\textwidth}{1pt}
    \end{minipage}
    \hfill\hfill
    \begin{minipage}[t]{.35\textwidth}\raggedright
        \rule{\textwidth}{1pt}
        $\algname{SimulateZ}(j)$
        \smallskip
        \hrule
        \smallskip
        \begin{itshape}\small 
        Simulate the effect of a $Z_j$ gate.%
        \end{itshape}
        \smallskip
        \hrule
        \vspace*{-0.5ex}
        \begin{enumerate}[leftmargin=4ex,itemsep=-0.25ex]
        \item
            Update $g \gets g + 4 b_j \!\!\mod{8}$.
        \item
            Update $Q  \gets Q \!\!\;+\!\!\; 2 \;\!\mathrm{diag}(\vec e_j\trans \!\!\; A)\!\!\mod{4}$.
        \end{enumerate}
        \vspace*{-3ex}
        \rule{\textwidth}{1pt}
    \end{minipage}
    \hfill\hfill
    \begin{minipage}[t]{.35\textwidth}
        \rule{\textwidth}{1pt}
        $\algname{SimulateY}(j)$
        \smallskip
        \hrule
        \smallskip
        \begin{itshape}\small
        Simulate the effect of a $Y_{j}$ gate.
        \end{itshape}
        \smallskip
        \hrule
        \vspace*{-0.5ex}
        \begin{enumerate}[leftmargin=4ex,itemsep=-0.25ex]
        \item
            Update $g \gets g + 2 \!\!\mod{8}$.
        \item
            Call $\algname{SimulateZ}(j)$, then $\algname{SimulateX}(j)$.
        \end{enumerate}
        \vspace*{-3ex}
        \rule{\textwidth}{1pt}
    \end{minipage}
    \hfill\!\!\!\!\!\!~
\caption{%
    \label{alg:Paulis}%
    Procedures to simulate Pauli operations.
}
\end{figure}

\begin{lemma}
\label{lemma:simulatePaulis}
    Let $\ket{\psi}$ be represented by a quadratic form expansion as in Eqn.~\eqref{eqn:QFE-specific}, and $1 \le j \le n$.
    We may compute a quadratic form expansion for $X_j \ket{\psi}$ in time $\mathcal{O}(1)$, and for either $Y_j \ket{\psi}$ or $Z_j \ket{\psi}$ in time $\mathcal{O}(s_j) \subseteq \mathcal{O}(r)$.
\end{lemma}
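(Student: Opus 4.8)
The plan is to verify each Pauli separately: first that the procedure of Figure~\ref{alg:Paulis} outputs a valid quadratic form expansion (of the form of Eqn.~\eqref{eqn:QFE-specific}) for the intended state, and then to read off the run-time from the number of data-structure operations the procedure performs, using the sparsity bookkeeping of Section~\ref{sec:data-structures}.

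For $X_j$, I would use that $X_j \ket{A\vec x \oplus \vec b} = \ket{A\vec x \oplus (\vec b \oplus \vec e_j)}$ for every $\vec x$, so that flipping $b_j$ turns the given expansion into one for $X_j\ket{\psi}$ with $g$, $Q$, $A$, and $p$ all unchanged. Since $A$ is untouched, it remains in principal row form with the same index map and the same rank $r$, and the normalisation $\tau^g/\sqrt{2^r}$ is unaffected, so the output is again a quadratic form expansion for a normalised state. The procedure performs one bit flip, hence runs in $\mathcal{O}(1)$ time.

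For $Z_j$ I would invoke the identity $Z_j\ket{A\vec x \oplus \vec b} = (-1)^{\vec a_j\trans \vec x + b_j}\ket{A\vec x \oplus \vec b}$ together with the rewriting $\vec a_j\trans \vec x = \vec x\trans \mathrm{diag}(\vec a_j\trans)\vec x$ valid for $\vec x \in \{0,1\}^r$, where $\vec a_j\trans = \vec e_j\trans A$, both of which are established in the text preceding the Lemma. This shows $Z_j\ket{\psi}$ has a quadratic form expansion with $A$ and $\vec b$ unchanged, $g' = g + 4b_j$, and $Q' = Q + 2\,\mathrm{diag}(\vec e_j\trans A)$; as a diagonal matrix is symmetric, $Q'$ is symmetric, and since $A$ is untouched the expansion stays in principal row form of rank $r$. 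The one non-routine point is modular bookkeeping: the $\tau$-exponent may be reduced mod $8$ and $Q'$ mod $4$, but because only diagonal entries of $Q$ change here we need no further reduction (Lemma~\ref{lemma:Gram-matrix-mixed-moduli} would only be relevant if off-diagonal entries were altered). For the run-time, the update to $g$ is $\mathcal{O}(1)$, while the update to $Q$ amounts to adding $2$ to $Q_{c,c}$ for exactly those columns $c$ in which row $j$ of $A$ is non-zero; iterating over these costs $\mathcal{O}(s_j)$ by the sparse-row structure of $A$, and each diagonal modification is $\mathcal{O}(1)$ because diagonal entries of $Q$ are stored explicitly, giving total cost $\mathcal{O}(s_j) \subseteq \mathcal{O}(r)$.

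Finally, for $Y_j$ I would use the matrix identity $Y = iXZ$, so that $Y_j\ket{\psi} = i\,X_j Z_j\ket{\psi}$: the procedure records the scalar $i = \tau^2$ by adding $2$ to the $\tau$-exponent, then calls $\algname{SimulateZ}(j)$ followed by $\algname{SimulateX}(j)$, whose correctness has just been established, so the output is a quadratic form expansion for $Y_j\ket{\psi}$. Its cost is $\mathcal{O}(1) + \mathcal{O}(s_j) + \mathcal{O}(1) = \mathcal{O}(s_j) \subseteq \mathcal{O}(r)$. I do not expect any genuine obstacle: the only care required is tracking the two moduli ($g$ mod $8$, $Q$ mod $4$) and confirming that the assumed data structures provide constant-time access to $b_j$, to the $\tau$-exponent, and to the diagonal of $Q$.
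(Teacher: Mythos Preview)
Your proposal is correct and follows essentially the same approach as the paper: both rely on the analysis of Section~\ref{sec:simulatePaulis} preceding the Lemma to justify correctness of the procedures in Figure~\ref{alg:Paulis}, and then read off the run-times from the number of diagonal entries of $Q$ touched (namely $s_j$). Your write-up is simply more explicit than the paper's in verifying that the output remains a valid quadratic form expansion in principal row form.
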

\begin{proof}
    These follow from the run-time bounds for the procedures \algname{SimulateX}, \algname{SimulateY}, and \algname{SimulateZ}.
    The run-time bound of $\mathcal{O}(1)$ for \algname{SimulateX} is trivial, and the bound for \algname{SimulateY} and \algname{SimulateZ} is governed by the time required to modify the diagonal of $Q$ by adding the non-zero coefficients of row $j$ of $A$.
\end{proof}

\subsection{Hadamard operations}

Our techniques to simulate a Hadamard operation on a quadratic form expansion involve modifications to the matrix $A$.%
    \footnote{%
        It would not be difficult to modify our techniques to treat the special case that the qubit acted on is a $Y$-eigenstate, so that only the matrix $Q$ is modified in that case.
        We opt not to do so, to simplify the overall presentation of our analysis. 
    }
The complexity of these modifications to $A$ depend on its principal row form structure.
While the analysis of this operation involves a significant amount of case analysis in principle, we may describe it more simply using the results of Section~\ref{sec:ZeroColumnElim}.

\medskip\noindent
We represent the Hadamard operator using the commonplace formula for its coefficients,
\begin{equation}
        H
    \;=\;
        \frac{1}{\sqrt 2} \!\sum_{k,z \in \{0,1\}\!\!} \!\!
            (-1)^{kz} \,\ket{z}\!\!\!\;\bra{k}.
\end{equation}
This is itself a essentially a quadratic form expansion, in the terminology of Ref.~\cite{dBQuadratic}.
The image of the standard basis state $\ket{A \vec x \oplus \vec b}$ under the operator $\ket{z}\!\!\bra{k}_j \otimes I$ is non-zero, only if $k = \vec e_j\trans\!(A\vec x \oplus \vec b)$.
If this is the case, the state in the $j\textsuperscript{th}$ tensor factor of $\ket{A \vec x \oplus \vec b}$ is simply $\ket{k}$; the effect of the operator $\ket{z}\!\!\bra{k}$ is to replace this with $\ket{z}$.
Let $K_j = I \oplus \vec e_j\vec e_j\trans$ represent the map which acts on vectors and matrices by left-multiplication, to zero out row $j$.
We may then describe the effect of $H_j$ on a standard basis state $\ket{A\vec x \oplus \vec b}$ as
\begin{equation}
\label{eqn:Hadamard-zeroing-row}
\begin{aligned}[b]
        H_j \ket{A \vec x \oplus \vec b}
    \;&=\;
        \frac{1}{\sqrt 2} \!\!\sum_{k,z \in \{0,1\}\!\!} \!\!
            (-1)^{kz} \,\Bigl(\ket{z}\!\!\!\;\bra{k}_j \otimes I\Bigr) \ket{A \vec x \oplus \vec b}
    \\&=\;
        \frac{1}{\sqrt 2} \!\!\sum_{z \in \{0,1\}} \!\!
            (-1)^{(\vec e_j\trans\!(A\vec x \oplus \vec b))z} \ket{K_j(A \vec x \oplus \vec b) + z\vec e_j}
    \\&=\;
        \frac{1}{\sqrt 2} \!\!\sum_{z \in \{0,1\}} \!\!
            (-1)^{(\vec e_j\trans\!A\vec x)z + b_j z}
            \, \ket{(A' \vec x + z\vec e_j) \oplus \vec b'},
\end{aligned}
\end{equation}
where $A' = K_j A$ and $\vec b' = K_j \vec b$.
Applying this representation straightforwardly to a quadratic form expansion as in Eqn.~\eqref{eqn:QFE-specific} yields:
\begin{equation}
\begin{aligned}[b]
        H_j \ket{\psi}
    \;&=\;
        \frac{\tau^g}{\sqrt{2^{r{+}1}}}
        \!\!
        \sum_{\substack{\vec x \in \{0,1\}^r \\ z \in \{0,1\}}}
        \!\!\!\!
            i^{\;\! \vec x\trans \! Q \;\!\vec x \,+\, 2(\vec e_j\trans A \vec x)z + 2b_j z}
            \ket{(A' \vec x + z\vec e_j) \oplus \vec b'}
.
\end{aligned}
\end{equation}
We may condense this expression as follows, incorporating the vector $\vec x$ and bit $z$ into a single summation index $\vec y = {[\,x_1\;\;\cdots\;\;x_r\;\;z\,]}\:\!\trans$.
If we extend $Q$ to an $(r{+}1)\times(r{+}1)$ matrix $Q'$ with an additional row and column which is entirely zero, and let $\vec{\tilde a}\trans = {[\,A_{j,1}\;\;\cdots\;\;A_{j,r}\;\;0\,]}$ be row $j$ of $A$ extended by a further zero coefficient,
we have
\begin{equation}
\begin{aligned}[b]
        \vec x\trans \!Q\:\! \vec x \,+\, 2(\vec e_j\trans A \vec x)z \,+\, 2b_j z
    \;&=\;
        \vec y\trans \!Q' \vec y \,+\, y_{r{+}1} \vec{\tilde a}\trans \vec y \,+\, \vec y\trans \vec{\tilde a} \;\!y_{r{+}1} \,+\, 2b_j y_{r{+}1}
    \\&=\;
        \vec y\trans \!Q' \vec y \,+\, \vec y\trans \!(\vec e_{r{+}1} \vec{\tilde a}\trans + \vec{\tilde a} \;\!\vec e_{r{+}1}\trans) \vec y \,+\, 2b_j y_{r{+}1}
    \;=\;
        \vec y\trans\! Q'' \;\! \vec y\,,
\end{aligned}
\end{equation}
where we define $Q'' := Q' + \vec e_{r{+}1}\!\;\vec{\tilde a}\trans + \vec{\tilde a}\;\!\vec e_{r{+}1}\trans + 2 b_j \vec e_{r{+}1}\vec e_{r{+}1}\trans$\,.
Then, if we let $A'' = {\bigl[\,A'\;{;}\; \vec e_j\,\bigr]}$ be the matrix obtained by adjoining the vector $\vec e_j$ as an additional $(r{+}1)\textsuperscript{st}$ column to the matrix $A'$,
we have
\begin{equation}
\label{eqn:Hadamard-simple}
\begin{aligned}[b]
        H_j \ket{\psi}
    \;&=\;
        \frac{\tau^g}{\sqrt{2^{r{+}1}}}
        \!\!
        \sum_{\vec z \in \{0,1\}^{r{+}1}} 
        \!\!\!\!
            i^{\;\!
                \vec z\trans \! Q'' \vec z}
            \,
        \ket{A'' \vec z \oplus \vec b'}.
\end{aligned}
\end{equation}
Thus we may simulate the Hadamard by expanding the Gram matrix to an $(r{+}1) \times (r{+}1)$ matrix with a new row and column computed from $\vec b$ and from row $j$ of $A$; then zeroing out that row of $A$; then extending $A$ by a new column consisting of the vector $\vec e_j$.

If we were to impose no constraints on the number of columns of the expansion matrix $A''$, this would suffice to produce a representation of $H_j \ket{\psi}$.
However, it may be that $A''$ does not have rank $r{+}1$, so that it is not in principal row form as we require for our analysis of measurements.
We consider the following cases.
\begin{itemize}
\item 
    If $j$ is not a principal row of $A$, then the matrix $A'$ differs from $A$ only in row $j$, and not in any principal row of $A$.
    Extending this matrix to $A''$ only adds a further $0$ coefficient to each of the principal rows, and sets row $j$ to $\vec e_{r{+}1}\trans$.
    Then $A''$ actually is in principal row form in this case, and we may extend $p$ to obtain a principal index map for $A''$ by setting $p(r+1) = j$.
\item
    If $j = p(c)$ for some column $1 \le c \le r$, we may reduce our analysis to the case where $j$ is not a principal row --- provided that we can choose an alternative row to act as a principal row for column $c$.
    We may attempt to do this by invoking $\algname{ReselectPrincipalRow}(j,c)$.
    If afterwards $p(c) \ne j$, then $A$ has been transformed so that $j$ is not a principal row, and may proceed as above.
    
    If instead $\algname{ReselectPrincipalRow}(j,c)$ does not change the value of $p(c)$, it must be the case that $j$ is the only row in which column $c$ is non-zero.
    Then row $j$ of $A''$ would be $\vec e_{r{+}1}\trans$ rather than $\vec e_c\trans$\,, and column $c$ would be entirely zero.
    If we set $p'(r{+}1) = j$, then $A''$ is `nearly' in principal row form, precisely in the sense considered in Section~\ref{sec:ZeroColumnElim}: we have $\smash{\vec e_{p'(k)}\trans A''} = \vec e_k\trans$ for all $1 \le k \le r{+}1$ so long as $k \ne c$, and column $c$ of $A''$ is zero.
    Because Eqn.~\eqref{eqn:Hadamard-simple} expresses a normalised state, we also know that the procedure described on  page~\pageref{discn:normalised-zero-column-elim} will be able to find the non-zero coefficient in the Gram matrix, if this is needed to produce an expansion matrix in principal row form.%
        \footnote{%
            In fact, we can identify the location of one such coefficient analytically.
            Using the definitions following shortly after Eqn.~\eqref{eqn:Hadamard-zeroing-row}, and recalling in this case that $j$ is the principal row for column $c$, we have $Q'' \vec e_{r{+}1} = \vec{\tilde a} + 2b_j \vec e_{r{+}1} = \vec e_c + 2b_j \vec e_{r{+}1}$.
            Then $Q''_{c,r{+}1} = Q''_{r{+}1,c} = 1$; if we were to swap row $c$ and row $r{+}1$ (and similarly for the columns) of $Q''$, we would obtain a matrix $Q'''$ such that $Q'''_{c,r{+}1} = Q'''_{r{+}1,c} = 1$.}
    Therefore, to update the  quadratic form expansion for the operation $\ket{\psi'} = H_j \ket{\psi}$ as in Eqn.~\eqref{eqn:Hadamard-simple}, it then suffices to invoke $\algname{ZeroColumnElim}(c)$ in this case to obtain an equivalent representation in which the corresponding matrix $A$ has fewer columns, and in particular is full rank.
\end{itemize}
Figure~\ref{alg:SimulateH} presents a procedure $\algname{SimulateH}$, which summarises the remarks above.
Steps~1 and~2 serve to check whether $j$ is a principal row, and to attempt to choose an alternative principal row if necessary; Steps~3--6 represent the transformations described in the analysis above for when either $j$ is not a principal row, or when Step~2 succeeds in choosing a new principal row in place of row $j$.
In the latter case, Step~7 increases the rank parameter $r$; otherwise, if $j$ was a principal row which could not be re-selected, we invoke $\algname{ZeroColumnElim}(c)$ as described.

\begin{figure}[t]%
    \centering\small
    \begin{minipage}{.75\textwidth}
        \rule{\textwidth}{1pt}
        $\algname{SimulateH}(j)$
        \smallskip
        \hrule
        \smallskip
        \begin{itshape}\small 
        Simulate the effect of an $H_j$ gate.
        \end{itshape}
        \smallskip
        \hrule
        \begin{enumerate}[leftmargin=4ex,itemsep=.25ex]
        \item
            If $j$ is a principal row of $A$, let $1 \le c \le r$ be such that $j = p(c)$; otherwise let $c = 0$.
        \item
            If $c > 0$, call $\algname{ReselectPrincipalRow}(j,c)$.
            If afterwards $j \ne p(c)$, update $c \gets 0$.
        \item
            Let $\vec{\tilde a} = {[\,A_{j,1}\;\;\cdots\;\;A_{j,r}\;\;0\,}]\:\!\trans \in \{0,1\}^{r{+}1}$.
        \item
            Modify $A$ by updating $A_{j,k} \gets 0$ for each $1 \le k \le r$; 
            then extend $A$ by adjoining the column vector $\vec e_j$ and update $p(r{+}1) \gets j$.
        \item
            Modify $Q$ by extending by one row and one column, where the extra row is set to $\vec{\tilde a}\trans$ and the extra column is set to $\vec{\tilde a}$; then update $Q_{r{+}1,r{+}1} \gets 2b_j$\;.
        \item
            Update $b_j \gets 0$.
        \item
            If $c > 0$, call $\algname{ZeroColumnElim}(c)$; otherwise update $r \gets r + 1$.
        \end{enumerate}
        \vspace*{-3ex}
        \rule{\textwidth}{1pt}
    \end{minipage}
\caption{%
    \label{alg:SimulateH}
    A procedure to simulate a Hadamard operation.}
\end{figure}

\begin{lemma}
\label{lemH}
    Let $\ket{\psi}$ be represented by a quadratic form expansion as in Eqn.~\eqref{eqn:QFE-specific}, and $1 \le j\le n$.
    Then $\algname{SimulateH}(j)$ computes a quadratic form expansion for $H_{j} \ket{\psi}$ in time $\mathcal{O}((s{+}w)(t{+}w)) \subseteq \mathcal{O}(nr)$.
    If $j$ does not correspond to a principal row of $A$, then $\algname{SimulateH}(j)$ instead runs in time $\mathcal{O}(s_j) \subseteq \mathcal{O}(r)$.
\end{lemma}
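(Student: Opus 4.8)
The plan is to verify correctness and the run-time bound by inspecting the seven steps of $\algname{SimulateH}(j)$ in turn, appealing to the algebraic analysis preceding Figure~\ref{alg:SimulateH} and to the subroutine specifications collected in Section~\ref{sec:QFE-subroutines}. For correctness, I would first check that Steps~3--6 realise exactly the passage to Eqn.~\eqref{eqn:Hadamard-simple}: Step~3 records the current row $j$ of $A$ as $\vec{\tilde a}$ before it is overwritten; Step~4 produces $A' = K_j A$ and then adjoins $\vec e_j$ as column $r{+}1$ to obtain $A''$; Step~5 appends $\vec{\tilde a}$ as a new (symmetric) row and column of the Gram matrix and sets the new diagonal entry to $2 b_j$, obtaining $Q''$; and Step~6 produces $\vec b' = K_j \vec b$. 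Since a reselection in Step~2 only replaces the quadratic form expansion by an equivalent one, and the derivation of Eqn.~\eqref{eqn:Hadamard-simple} is generic in the current $A$ and $Q$, after Step~6 the tuple $\mathcal E$ describes a valid normalised quadratic form expansion for $H_j \ket{\psi}$ --- except that its expansion matrix need not be in principal row form.

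To finish the correctness argument I would split into the two cases of the pre-Figure analysis. If $j$ is not a principal row when Steps~3--6 run --- either because Step~1 already found $c = 0$, or because $\algname{ReselectPrincipalRow}(j,c)$ succeeded and Step~2 reset $c \gets 0$ --- then $A$ carries a principal index map whose image avoids $j$, so adjoining $\vec e_j$ as column $r{+}1$ and setting $p(r{+}1) = j$ (both done in Step~4) leaves $A''$ in principal row form with rank $r{+}1$; Step~7 then records this by incrementing $r$. Otherwise $c > 0$ survives Step~2, which happens precisely when $j$ is the unique row in which column $c$ of $A$ is non-zero; then after Step~4 column $c$ of $A''$ is identically zero, $A''$ has $r{+}1$ columns but rank $r$, and $\vec e_{p(k)}\trans A'' = \vec e_k\trans$ for every $k \ne c$ --- exactly the preconditions for $\algname{ZeroColumnElim}(c)$, which Step~7 invokes. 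That call is legitimate because $H_j$ is unitary, so Eqn.~\eqref{eqn:Hadamard-simple} represents a normalised state, whence the degenerate sub-cases $\vec q = \vec 0$ in Section~\ref{sec:ZeroColumnElim} cannot arise (the footnote to that analysis even pins down a non-zero coefficient $Q''_{c,r{+}1} = 1$). As $\algname{ZeroColumnElim}(c)$ itself returns a quadratic form expansion in principal row form, with stored rank reduced to $r$ or $r{-}1$ as appropriate, we obtain a valid principal-row-form quadratic form expansion for $H_j\ket{\psi}$ in every case.

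For the run-time I would tally the steps using the subroutine bounds of Section~\ref{sec:QFE-subroutines} together with the $\mathcal{O}(1)$ insertion/deletion and maintained non-zero counts of the sparse data structures. Step~1 is $\mathcal{O}(1)$: scanning at most two non-zero entries of row $j$ decides whether $s_j \ge 2$, and otherwise one look-up in $p$ settles whether $j$ is a principal row. Step~2 costs $\mathcal{O}(1)$ when $\algname{ReselectPrincipalRow}$ halts without change, and $\mathcal{O}\bigl(s_{j_\ast}(t_c + w_c)\bigr) \subseteq \mathcal{O}\bigl((s{+}w)(t{+}w)\bigr)$ when it succeeds. Steps~3--6 touch only the non-zero entries of the current row $j$ of $A$ plus $\mathcal{O}(1)$ bookkeeping, so they cost $\mathcal{O}(s_j)$; here I would note that a successful reselection can make row $j$ no denser than the reselected row $j_\ast$, so this is still $\mathcal{O}(s)$. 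The crucial observation for Step~7 is that $\algname{ZeroColumnElim}(c)$ is reached only when reselection \emph{failed}, in which case row $j$ is untouched and still equals $\vec e_c\trans$, so $\vec{\tilde a}$ has a single non-zero entry; consequently the row and column appended to $Q$ in Step~5 increase its per-row/column non-zero bound by at most a constant, and $\algname{ZeroColumnElim}(c)$ still runs in $\mathcal{O}(tw + w^2) \subseteq \mathcal{O}\bigl((s{+}w)(t{+}w)\bigr)$. Adding the steps gives $\mathcal{O}\bigl((s{+}w)(t{+}w)\bigr)$, which lies in $\mathcal{O}(nr)$ since $s,w \le r$ and $t \le n - r + 1$; and when $j$ is not a principal row the expensive branches of Steps~2 and~7 are both skipped, leaving $\mathcal{O}(1)$ work plus Steps~3--6 at cost $\mathcal{O}(s_j) \subseteq \mathcal{O}(r)$. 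The main obstacle I anticipate is exactly this interaction between reselection and sparsity --- verifying that Steps~3--6 stay within $\mathcal{O}(s)$ after a reselection rather than blowing up, and that the zero-column-elimination branch is entered only with a maximally sparse $\vec{\tilde a}$, so that the $w^2$ term in $\algname{ZeroColumnElim}$'s cost does not secretly inflate to $(s{+}w)^2$. Once those two bookkeeping facts are nailed down, what remains is a routine comparison of subroutine costs against the claimed bound.
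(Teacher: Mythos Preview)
Your proposal is correct and follows essentially the same step-by-step tallying as the paper's proof, invoking the same subroutine bounds from Section~\ref{sec:QFE-subroutines}. You are in fact more careful than the paper on two points it glosses over: that a successful reselection in Step~2 leaves row $j$ with exactly $s_{j_\ast} \le s$ non-zero entries (so Steps~3--6 remain $\mathcal{O}(s)$), and that $\algname{ZeroColumnElim}$ is only invoked when reselection fails, in which case $\vec{\tilde a}$ has a single non-zero entry and the appended row/column of $Q$ perturbs $w$ only by a constant.
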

\begin{proof}
    Step 1 takes $\mathcal{O}(1)$ operations, and Step 2 uses $\algname{ReselectPrincipalRow}$ at most once, taking $\mathcal{O}(s t + s w)$ operations.
    In Step~3, we copy row $j$ of $A$, taking time $\mathcal{O}(s_j)$.
    Step~4 involves adding a single non-zero entry to a new column of $A$ and assigning $p(r{+}1) \gets j$: using the sparse data structure for $A$, this takes time $\mathcal{O}(1)$.
    Step~5 extends $Q$ by a row and a column, each containing at most $s_j + 1$ non-zero entries,  requiring only time $\mathcal{O}(s_j)$ using the sparse data structure for $Q$.
    Step 6 takes $\mathcal{O}(1)$ operations, and finally Step 7 calls $\algname{ZeroColumnElim}$ at most once, which requires $\mathcal{O}(tw + w^2)$ operations.
    The time complexity is then dominated by Steps~2 and~7, which is therefore $\mathcal{O}(st + sw + tw + w^2)$; as these run-time costs are only incurred when $j$ is a principal row of $A$, the time complexity is $\mathcal{O}(s_j)$ when $j$ is not a principal row of $A$.
\end{proof}

\subsection{Diagonal Clifford operations}

Without much difficulty, we can show that the operators $S$ and $\mathrm CZ$ may be simulated on a quadratic form expansion as in Eqn.~\eqref{eqn:QFE-specific} just by modification of the scalar $g$ and the Gram matrix $Q$.

We first consider the $\mathrm CZ$ operation.
On an individual standard basis term $\ket{\vec z}$, we have $\mathrm CZ_{j,k} \ket{\vec z}
    \,=\,
        (-1)^{z_j z_k} \ket{\vec z}
    \,=\,
        (-1)^{(\vec e_j\trans\!\!\; \vec z)(\vec e_k\trans \!\!\; \vec z)} \ket{\vec z}$.
Let $\vec a_j\trans = \vec e_j\trans A$ and $\vec a_k\trans = \vec e_k\trans A$ be the $j\textsuperscript{th}$ and $k\textsuperscript{th}$ rows of $A$.
Then, on a standard basis state $\ket{A \vec x \oplus \vec b}$, we  obtain: 
    \begin{equation}{}
    \mspace{-18mu}
    \begin{aligned}[b]
        \mathrm CZ_{j,k} \ket{A \vec x \oplus \vec b}
    \,&=\,
        (-1)^{[\vec e_j\trans\! (A \vec x \,\oplus\, \vec b)][ \vec e_k\trans (A \vec x \oplus \vec b)]}
        \ket{A \vec x \oplus \vec b}
    \\&=\,
        (-1)^{[\vec a_j\trans\! \vec x \,+\, b_j] [\vec a_k\trans \vec x \,+\, b_k]}
        \ket{A \vec x \oplus \vec b}    
    \\&=\,
        (-1)^{\;\!(\vec a_j\trans \vec x)(\vec a_k\trans \vec x)\,+\, b_k(\vec a_j\trans \vec x) \,+\, b_j(\vec a_k\trans \vec x) \,+\, b_j b_k}
        \ket{A \vec x \oplus \vec b}
    \\&=\,
        i^{\;\!2(\vec a_j\trans \vec x)(\vec a_k\trans \vec x)\,+\, 2b_k(\vec a_j\trans \vec x) \,+\, 2b_j(\vec a_k\trans \vec x) \,+\, 2b_j b_k}
        \ket{A \vec x \oplus \vec b}
        .
    \end{aligned}
    \mspace{-12mu}
    \end{equation}
    Using the fact that $\vec u\trans \vec v = \vec v\trans \vec u$ for vectors $\vec u, \vec v \in \{0,1\}^r$, we have  $2(\vec a_j\trans \vec x)(\vec a_k\trans \vec x) = \vec x\trans\!(\vec a_j \vec a_k\trans + \vec a_k \vec a_j\trans)\vec x$\,;
    and using the fact that $\vec a_j\trans \vec x = \vec x\trans \!\mathrm{diag}(\vec a_j\trans) \vec x$ and $\vec a_k\trans \vec x = \vec x\trans \! \mathrm{diag}(\vec a_k\trans) \vec x$ for $\vec x \in \{0,1\}^r$,
    we then have
    \begin{equation}
    \begin{aligned}[b]
        \mathrm CZ_{j,k} \ket{\psi}
       \,&=\,
            \frac{\tau^g}{\sqrt{2^r}}
            \!\!\!
            \sum_{\mathbf x \in \{0,1\}^r} 
            \!\!\!\!
                i^{\;\!
                    \vec x\trans \! Q \vec x
                    \,+\,
                    \vec x\trans \!(\vec a_j \vec a_k\trans + \vec a_k \vec a_j\trans + 2 \,\mathrm{diag}(b_k \vec a_j\trans + b_j \vec a_k\trans)) \vec x
                    \,+\,
                    2b_j b_k}
                \,
                \ket{A \mathbf{x} \oplus \mathbf{b}}
        \\[1ex]&=\,
            \frac{\tau^{g'}}{\sqrt{2^r}}
            \!\!
            \sum_{\mathbf x \in \{0,1\}^r}
            \!\!\!\!
                i^{\:\!
                    \vec x\trans \!  Q' \!\: \mathbf x}
                \ket{A \mathbf{x} \oplus \mathbf{b}}    \;,
    \end{aligned}
\end{equation}
where $g' := g + 4b_jb_k$ and $Q' := Q \,+\,  \vec a_j \vec a_k\trans + \vec a_k \vec a_j\trans + 2\,\mathrm{diag}(b_k \vec a_j\trans + b_j \vec a_k\trans)$\,.
That is, we can simulate $\mathrm CZ$ by adding terms to the Gram matrix which may be easily computed from rows of $A$ and coefficients of $\vec b$; in particular, given that $Q$ is symmetric, $Q'$ is as well.

To simulate the effect of $S = \ket{0}\!\bra{0} + i \ket{1}\!\bra{1}$ on a basis state $\ket{A \vec x \oplus \vec b}$, the outcome of a matrix multiplication which is evaluated modulo~$2$ in a ket may come to affect an expression evaluated modulo~$4$ in an imaginary exponent.
To do this, we use the same operation $\ast$ which we defined in Eqn.~\eqref{eqn:ast-operation}, apply the formula of Eqn.~\eqref{eqn:evalAstMod4} to express it as a matrix operation modulo~4, and also use the formula of Eqn.~\eqref{eqn:parity-formula-01} for parities of bits.
We may then analyse the $S$ gate similarly to the $\mathrm CZ$ gate.
On an individual standard basis term $\ket{\vec z}$, we have $S_j \ket{\vec z} = i^{\:\!z_j} \ket{\vec z} = i^{\:\!(\vec e_j\trans\!\!\; \vec z)} \ket{\vec z}$\,: then, again letting $\vec a_j\trans = \vec e_j\trans A$, we have
    \begin{equation}{}
    \mspace{-18mu}
    \begin{aligned}[b]
        S_j \ket{A \vec x \oplus \vec b}
    \,=\,
        i^{\,\vec e_j\trans\! (A \ast \vec x \,\oplus\, \vec b)}
        \ket{A \vec x \oplus \vec b}
    \,&=\,
        i^{\;\!(\vec e_j\trans \!\!\: A \ast \vec x) \,\oplus\,  (\vec e_j\trans \vec b)}
        \ket{A \vec x \oplus \vec b}
    \\&=\,
        i^{\;\!(\vec x\trans \!\!\!\: \vec a_j \vec a_j\trans \vec x) \,\oplus\,  b_j}
        \ket{A \vec x \oplus \vec b}
    \\&=\,
        i^{\;\!(\vec x\trans \!\!\!\; \vec a_j \vec a_j\trans  \vec x) \,+\, b_j \,-\, 2 b_j (\vec x\trans \!\!\!\; \vec a_j \vec a_j\trans  \vec x)}
        \ket{A \vec x \oplus \vec b}
        .
    \end{aligned}
    \mspace{-12mu}
    \end{equation}
    We may then describe the representation of $S_j$ on a quadratic form expansion, by
    \begin{equation}
    \begin{aligned}[b]
        S_j \ket{\psi}
       \,&=\,
            \frac{\tau^g}{\sqrt{2^r}}
            \!\!\!
            \sum_{\mathbf x \in \{0,1\}^r} 
            \!\!\!\!
                i^{\;\!
                    \vec x\trans \! Q \vec x
                    \,+\,
                    (1 - 2b_j) \vec x\trans (\vec a_j\vec a_j\trans) \vec x + b_j}
                \,
                \ket{A \mathbf{x} \oplus \mathbf{b}}
        \,=\,
            \frac{\tau^{g'}}{\sqrt{2^r}}
            \!\!
            \sum_{\mathbf x \in \{0,1\}^r}
            \!\!\!\!
                i^{\:\!
                    \vec x\trans \!  Q' \!\: \mathbf x}
                \ket{A \mathbf{x} \oplus \mathbf{b}}    \;,
    \end{aligned}
\end{equation}
where $g' := g + 2b_j$ and $Q' := Q \,+\, (1 - 2b_j) \vec a_j \vec a_j\trans$\,.
We then simulate $S_j$ by again adding symmetric terms to the Gram matrix which may be easily computed from rows of $A$ and coefficients of $\vec b$.

\begin{figure}[t]%
    \small
    ~\hfill
    \begin{minipage}{.375\textwidth}
        \rule{\textwidth}{1pt}
        $\algname{SimulateS}(j)$
        \smallskip
        \hrule
        \smallskip
        \begin{itshape}\small 
        Simulate the effect of an $S_j$ gate.
        \end{itshape}
        \smallskip
        \hrule
        \begin{enumerate}[leftmargin=4ex,itemsep=.25ex]
        \item
            Let $\vec a_j = {[\,A_{j,1}\;\;\cdots\;\;A_{j,r}\,}]\:\!\trans \in \{0,1\}^r$.
        \item
            Update $Q \gets Q + (1 - 2b_j) \vec a_j \vec a_j\trans$.
        \item
            For each $1 \le k \le r$ such that $A_{j,k} \ne 0$: ~\\
                call $\algname{ReduceGramRowCol}(k)$.
        \item
            Update $g \gets g + 2b_j$\,.
        \end{enumerate}
        \vspace*{-2ex}
        \rule{\textwidth}{1pt}
    \end{minipage}
    ~\hfill\hfill~
    \begin{minipage}{.575\textwidth}
        \rule{\textwidth}{1pt}
        $\algname{SimulateCZ}(j,k)$
        \smallskip
        \hrule
        \smallskip
        \begin{itshape}\small
        Simulate the effect of a $\mathrm CZ_{j,k}$ gate.
        \end{itshape}
        \smallskip
        \hrule
        \begin{enumerate}[leftmargin=4ex,itemsep=.25ex]
        \item
            Let $\vec a_j = {[\,A_{j,1}\;\;\cdots\;\;A_{j,r}\,}]\:\!\trans\!,\; 
            \vec a_k = {[\,A_{k,1}\;\;\cdots\;\;A_{k,r}\,}]\:\!\trans \in \{0,1\}^r$.
        \item
            Update $Q \gets Q + \vec a_j\vec a_k\trans + \vec a_k\vec a_j\trans + 2\:\!\mathrm{diag}(b_k\vec a_j\trans + b_j \vec a_k\trans)$.
        \item
            For each $1 \le h \le r$ such that $A_{j,h} \ne 0$ or $A_{k,h} \ne 0$: ~\\
                call $\algname{ReduceGramRowCol}(h)$.
        \item
            Update $g \gets g + 4 b_j b_k$\,.
        \end{enumerate}
        
        \vspace*{-2ex}
        \rule{\textwidth}{1pt}
    \end{minipage}
    \hfill~
\caption{%
    \label{alg:diagonal-Cliffords}%
    Procedures to simulate the $S = \mathrm{diag}(1,i)$ and $\mathrm CZ = \mathrm{diag}(+1,+1,+1,-1)$ operations.
}
\end{figure}

\medskip\noindent
Figure~\ref{alg:diagonal-Cliffords} presents procedures $\algname{SimulateS}$ and $\algname{SimulateCZ}$ which summarise the analysis described above.
\begin{lemma}
\label{lemS}
    Let $\ket{\psi}$ be represented by a quadratic form expansion as in Eqn.~\eqref{eqn:QFE-specific}, and $1 \le j \le n$.
    We may compute a quadratic form expansion for $S_j \ket{\psi}$ in time $\mathcal{O}(s_j^2 + s_j w) \subseteq \mathcal{O}(r^2)$.
\end{lemma}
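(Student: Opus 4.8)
The plan is to treat correctness as essentially settled by the analysis immediately preceding the Lemma, which already establishes the identity
$S_j\ket\psi = \tfrac{\tau^{g'}}{\sqrt{2^r}}\sum_{\vec x}i^{\;\!\vec x\trans Q'\vec x}\ket{A\vec x\oplus\vec b}$
with $g' = g + 2b_j$ and $Q' = Q + (1-2b_j)\vec a_j\vec a_j\trans$, where $\vec a_j\trans = \vec e_j\trans A$. It then remains only to check that $\algname{SimulateS}(j)$ implements exactly this: Step~2 adds the rank-one symmetric matrix $(1-2b_j)\vec a_j\vec a_j\trans$ to $Q$ and Step~4 adds $2b_j$ to $g$, while $A$ and $\vec b$ are untouched, so $A$ stays in principal row form with the same index map and $Q'$ stays symmetric. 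The $\algname{ReduceGramRowCol}$ calls in Step~3 reduce the off-diagonal entries of the touched rows and columns mod~2 and the touched diagonal entries mod~4; by Lemma~\ref{lemma:Gram-matrix-mixed-moduli} this leaves $\vec x\trans Q'\vec x \pmod 4$ — and hence the state — unchanged, but it stops the sparse representation of $Q$ from accumulating redundant even coefficients. So the output is a legitimate quadratic form expansion as in Eqn.~\eqref{eqn:QFE-specific} for $S_j\ket\psi$.

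For the run-time I would account for the four steps in turn. Step~1 copies the $s_j$ non-zero entries of row $j$ of $A$, in time $\mathcal{O}(s_j)$, and Step~4 is a single integer addition. The substance is Steps~2 and~3. For Step~2, observe that $\vec a_j\vec a_j\trans$ is supported on the $s_j\times s_j$ submatrix indexed by $\mathrm{supp}(\vec a_j)$, so I would apply the update one row at a time: for each $h\in\mathrm{supp}(\vec a_j)$, merge the sorted support list against row $h$ of $Q$, updating or inserting each of the $\le s_j$ affected entries $Q_{h,k}$ in $\mathcal{O}(1)$ amortised (using the constant-time insertion-between-neighbours property of the data structure, and noting that diagonal entries are always stored so $Q_{h,h}$ is $\mathcal{O}(1)$ to adjust), while keeping one running pointer into each column list $k\in\mathrm{supp}(\vec a_j)$ so that the cross-links stay consistent — legitimate because, as $h$ increases, the insertion position in column $k$ only advances. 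Merging against row $h$ costs $\mathcal{O}(s_j + w_h)$, and over the whole update each column pointer traverses at most $\mathcal{O}(w_k + s_j)$ entries, so summing over the $s_j$ rows and columns of $\mathrm{supp}(\vec a_j)$ gives $\mathcal{O}(s_j^2 + s_j w)$. For Step~3, after Step~2 each row/column $k\in\mathrm{supp}(\vec a_j)$ has at most $w_k + s_j$ non-zero entries, so each of the $s_j$ calls $\algname{ReduceGramRowCol}(k)$ runs in $\mathcal{O}(w_k + s_j) \subseteq \mathcal{O}(w + s_j)$ time, for a total of $\mathcal{O}(s_j^2 + s_j w)$. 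Adding up, $\algname{SimulateS}(j)$ runs in $\mathcal{O}(s_j^2 + s_j w)$; and since $s_j \le s \le r$ and $w \le r$, this lies in $\mathcal{O}(r^2)$.

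The main obstacle is precisely the bookkeeping in Step~2: one must be confident that the rank-one update can be threaded through the sparse representation of $Q$ (which is linked along rows, columns, and the diagonal) without incurring a per-entry search that would inflate the bound to $\mathcal{O}(s_j^2 w)$. The monotone-pointer argument above is the one non-routine point; everything else is direct substitution into the identity derived before the Lemma, together with the quoted $\mathcal{O}(w_c)$ cost of $\algname{ReduceGramRowCol}$. I would also remark in passing that if an off-diagonal coefficient happens to become literally $0$ during the merge it should be deleted on the spot, so that the invariant of storing only non-zero off-diagonal entries is preserved before $\algname{ReduceGramRowCol}$ is even invoked; this has no effect on the asymptotics.
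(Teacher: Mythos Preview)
Your proof is correct and follows the same step-by-step accounting of $\algname{SimulateS}(j)$ as the paper. The paper is terser --- it simply asserts $\mathcal{O}(s_j^2)$ for Step~2 and $\mathcal{O}(w_k)$ per call in Step~3 --- whereas you give a more careful justification of the sparse-matrix bookkeeping (the monotone-pointer argument for the rank-one update, and the observation that rows touched in Step~2 may temporarily grow to $w_k + s_j$ entries before reduction); the overall bound and the line of argument are the same.
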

\begin{proof}
In $\algname{SimulateS}(j)$, the vector $\vec a_j$ can simply be read from row $j$ of $A$ in Step~1 in time $\mathcal{O}(s_j)$.
Note that $\vec a_j \vec a_j\trans$ is non-zero in precisely $s_j^2$ entries.
Step~2 then modifies the Gram matrix $Q$ in time $\mathcal{O}(s_j^2)$.
In Step~3, we call $\algname{ReduceGramRowCol}(k)$ for those indices $0 \le k \le r$ corresponding to non-zero entries of $\vec a_j$.
This has complexity $\mathcal{O}(w_k)$ for a maximum of $s_j$ values of $k$, taking $\mathcal{O}(s_j w)$ time in total.
Step~4 modifies the value of $g$, in constant time; the total complexity of $\mathcal{O}(s_j^2 + s_j w)$ then follows.%
    \footnote{%
            \label{fn:tightenDiagonalCliffordRuntimes}%
            The run-times of $\algname{SimulateS}(j)$ and of $\algname{SimulateCZ}(j,k)$ can be easily be sharpened to $\mathcal{O}(s_j^2)$ and $\mathcal{O}(s_j s_k)$ respectively, by modifying them to only reduce the entries where $Q'$ and $Q$ differ.
            We use the definitions presented in Figure~\ref{alg:diagonal-Cliffords} to simplify the overall presentation of our results.
        }
\end{proof}

\begin{lemma}
\label{lemCZ}
    Let $\ket{\psi}$ be represented by a quadratic form expansion as in Eqn.~\eqref{eqn:QFE-specific}, and $1 \le j,k \le n$. 
    We may then compute a quadratic form expansion for $\mathrm CZ_{j,k} \ket{\psi}$ in time $\mathcal{O}(s_j s_k + s_j w + s_k w) \subseteq \mathcal{O}(r^2)$.
\end{lemma}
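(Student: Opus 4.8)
The plan is to verify correctness from the derivation already carried out just before Figure~\ref{alg:diagonal-Cliffords}, and then bound the run-time by walking through the four steps of $\algname{SimulateCZ}(j,k)$, much as in the proof of Lemma~\ref{lemS}.

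For correctness, the computation preceding Figure~\ref{alg:diagonal-Cliffords} shows that $\mathrm CZ_{j,k}\ket\psi$ admits a quadratic form expansion with the same expansion matrix $A$, the same $\vec b$, and the same rank $r$, with new global phase exponent $g' = g + 4 b_j b_k$ and new Gram matrix $Q' = Q + \vec a_j \vec a_k\trans + \vec a_k \vec a_j\trans + 2\,\mathrm{diag}(b_k \vec a_j\trans + b_j \vec a_k\trans)$, where $\vec a_j\trans = \vec e_j\trans A$ and $\vec a_k\trans = \vec e_k\trans A$. I would note that $Q'$ is symmetric because $Q$ is and the added term is manifestly symmetric; that $A$, $\vec b$, $r$, and the index map $p$ are left unchanged, so $A$ stays in principal row form and all data-structure invariants are preserved; and that the $\algname{ReduceGramRowCol}$ calls in Step~3 only rewrite $Q'$ within its equivalence class under Lemma~\ref{lemma:Gram-matrix-mixed-moduli}, hence do not alter the represented state. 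This shows $\algname{SimulateCZ}(j,k)$ computes a valid quadratic form expansion for $\mathrm CZ_{j,k}\ket\psi$.

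For the run-time I would go step by step. Step~1 reads rows $j$ and $k$ of $A$ from the sparse structure in time $\mathcal{O}(s_j + s_k)$. In Step~2, the matrix $\vec a_j \vec a_k\trans + \vec a_k \vec a_j\trans$ is non-zero in at most $2 s_j s_k$ positions (it is supported on the $s_j$ rows where $\vec a_j$ is non-zero, each a copy of $\vec a_k\trans$, and the $s_k$ rows where $\vec a_k$ is non-zero, each a copy of $\vec a_j\trans$), while the diagonal correction touches at most $s_j + s_k$ entries; merging these into the sparse representation of $Q$ costs $\mathcal{O}(s_j s_k + s_j w + s_k w)$. Step~4 is $\mathcal{O}(1)$.

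The one place that needs a careful count is Step~3. After Step~2 a row (and column) $h$ of $Q$ with $A_{j,h}=1$ has gained at most $s_k$ new non-zero entries, one with $A_{k,h}=1$ at most $s_j$, and one with $A_{j,h}=A_{k,h}=1$ at most $s_j + s_k$; so in every case row $h$ now has at most $w_h + s_j + s_k$ non-zero entries, and $\algname{ReduceGramRowCol}(h)$ runs in $\mathcal{O}(w_h + s_j + s_k)$ time. The indices $h$ on which it is called split into those with $A_{j,h}=1,\,A_{k,h}=0$ (at most $s_j$ of them, contributing $\mathcal{O}(s_j w + s_j s_k)$), those with $A_{j,h}=0,\,A_{k,h}=1$ (at most $s_k$, contributing $\mathcal{O}(s_k w + s_j s_k)$), and those with both (at most $\min(s_j,s_k)$, where the elementary bound $\min(s_j,s_k)(s_j+s_k) \le 2 s_j s_k$ keeps the contribution $\mathcal{O}(s_j w + s_j s_k)$). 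Hence Step~3 costs $\mathcal{O}(s_j w + s_k w + s_j s_k)$, and adding the four step costs gives $\mathcal{O}(s_j s_k + s_j w + s_k w)$; since $s_j, s_k, w \le r$ this lies in $\mathcal{O}(r^2)$. I expect this bookkeeping for Step~3 — controlling how much the rows handed to $\algname{ReduceGramRowCol}$ can have grown — to be the only real obstacle, and I would also remark, as for $\algname{SimulateS}$, that the bound sharpens to $\mathcal{O}(s_j s_k)$ if one reduces only the entries where $Q'$ and $Q$ differ.
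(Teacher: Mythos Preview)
Your proposal is correct and follows essentially the same step-by-step approach as the paper's proof, with the same overall bound. If anything, you are more careful than the paper in Step~3: the paper simply bounds each call to $\algname{ReduceGramRowCol}(h)$ by $\mathcal{O}(w_h)$ and sums over at most $s_j+s_k$ indices, whereas you explicitly account for the growth of row~$h$ of $Q$ after Step~2 before invoking the subroutine bound --- a refinement that does not change the final answer but tightens the argument.
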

\begin{proof}
    In $\algname{SimulateCZ}(j,k)$, the vectors $\vec a_j$ and $\vec a_k$ can simply be read from rows $j$ and $k$ of $A$ in Step~1 in time $\mathcal{O}(s_j + s_k)$.
    Note that $\vec a_j \vec a_k\trans$ and $\vec a_k \vec a_j\trans$ are each non-zero in precisely $s_j s_k$ entries.
    Step~2 then modifies the Gram matrix $Q$ in time $\mathcal{O}(s_j s_k)$.
    In Step~3, we call $\algname{ReduceGramRowCol}(h)$ for those indices $0 \le k \le r$ corresponding to non-zero entries either of $\vec a_j$ or $\vec a_k$.
    This has complexity $\mathcal{O}(w_h)$ for a maximum of $s_j + s_k$ values of $k$, taking $\mathcal{O}(s_j w + s_k w)$ time in total.
    Step~4 modifies the value of $g$, in constant time; the total complexity of $\mathcal{O}(s_j s_k + s_j w + s_k w)$ then follows.%
    \addtocounter{footnote}{-1}%
    \footnotemark
\end{proof}

\subsection{Controlled-NOT operations}

The way in which we simulate $\mathrm CX$ operations, with control qubit $h$ and target qubit $j$ for $h \ne j$, depends on whether or not $j$ corresponds to a principal row of $A$.
In the case that it does not, we may simulate it in a straightforward way, by noting that on standard basis states we have
\begin{equation}
    \mathrm CX_{h,j} \ket{\vec z}
    \;=\;
    \ket{\vec z \oplus z_h \vec e_j}
    \;=\;
    \ket{E_{j,h} \vec z},
\end{equation}
where $E_{j,h} = I + \vec e_j \vec e_h\trans$ acts on vectors via left-multiplication by adding row $h$ into row $j$ (and similarly for matrices).
Thus we have
\begin{equation}
    \begin{aligned}[b]
            \mathrm CX_{h,j} \ket{\psi}
        \,=\,
            \frac{\tau^g}{\sqrt{2^r}}
            \!\!
            \sum_{\mathbf x \in \{0,1\}^r} 
            \!\!\!\!
                i^{\;\! \vec x\trans \! Q \vec x}
                \,
                \ket{E_{j,h} A \mathbf{x} \oplus E_{j,h} \mathbf{b}}
        \,=\,
            \frac{\tau^g}{\sqrt{2^r}}
            \!\!
            \sum_{\mathbf x \in \{0,1\}^r} 
            \!\!\!\!
                i^{\;\! \vec x\trans \! Q \vec x}
                \,
                \ket{A' \mathbf{x} \oplus \mathbf{b}'}
            \;,
    \end{aligned}
\end{equation}
where $A' = E_{j,h} A$ is obtained from $A$ by adding row $h$ into row $j$ modulo 2, and $\vec b' = E_{j,h} \vec b$ differs from $\vec b$ in that $b'_j = b_j \oplus b_h$.

If $j = p(c)$ for some column $c$ of $A$, then $A'$ will not be in principal row form unless row $h$ of $A$ happened to be entirely zero.
However, as $E_{h,j}$ is invertible and $A$ is full rank, then $A'$ will also be full rank; and it will only fail to be in principal row form in that there is no row which contains $\vec e_c\trans$.
Precisely because $A'$ is full rank, column $c$ of $A'$ will not be entirely zero --- in particular, either $A_{j,c} = 1$ (as in the case that $A_{h,c} = 0)$, or row $h$ is itself not a principal row and $A_{h,c} = 1$. 
In either case, there is at least one row $1 \le j_\ast \le n$, for which row $j_\ast$ of $A'$ could be made into a principal row for column $c$ (yielding a matrix $A''$ in principal row form) by suitable column operations, without disturbing the other principal rows.
We may find such a row $j_\ast$\,, and perform the appropriate change of variables, by simply invoking the subroutine $\algname{ReselectPrincipalRow}(0,c)$ --- in particular, setting the first argument to $0$ to allow the possibility of selecting $j_\ast = j$ if this is the only row (or the row with the smallest number of non-zero entries) such that $A_{j_\ast,c} = 1$.

\begin{figure}[t]%
    \small
    ~\hfill
    \begin{minipage}[t]{.75\textwidth}
        \rule{\textwidth}{1pt}
        $\algname{SimulateCX}(h,j)$
        \smallskip
        \hrule
        \smallskip
        \begin{itshape}\small
        Simulate the effect of a $\mathrm CX_{h,j}$ gate.
        \end{itshape}
        \smallskip
        \hrule
        \vspace*{-0.5ex}
        \begin{enumerate}[leftmargin=4ex,itemsep=-0.25ex]
        \item
            If $j$ is a principal row of $A$, let $1 \le c \le r$ be such that $j = p(c)$; otherwise let $c = 0$.
        \item
            For each $1 \le k \le r$ such that $A_{h,k} \ne 0$: update $A_{j,k} \gets A_{j,k} \oplus 1$.
        \item
            Update $b_j \gets b_j \oplus b_h$.
        \item
            If $c \ne 0$, call $\algname{ReselectPrincipalRow}(0,c)$.
        \end{enumerate}
        \vspace*{-3ex}
        \rule{\textwidth}{1pt}
    \end{minipage}
    \hfill~
\caption{%
    \label{alg:CNOT}%
    Procedure to simulate a controlled-NOT operation.
}
\end{figure}

Figure~\ref{alg:CNOT} presents a procedure $\algname{SimulateCX}(h,j)$ which summarises the above analysis.
Using the run-time bound for $\algname{ReselectPrincipalRow}$ described in in Lemma~\ref{lemma:runtime-ReselectPrincipalRow}, we may easily show the following:
\begin{lemma}
    Let $\ket{\psi}$ be represented by a quadratic form expansion as in Eqn.~\eqref{eqn:QFE-specific}, and $1 \le h,j \le n$.
    Then $\algname{SimulateCX}(h,j)$ computes a quadratic form expansion for $\mathrm CX_{h,j} \ket{\psi}$ in time $\mathcal{O}(s_h t + s_h w + t + w) \in \mathcal{O}(nr)$.
    If $j$ does not correspond to a principal row of $A$, then $\algname{SimulateCX}(h,j)$ instead runs in time $\mathcal{O}(s_h) \subseteq \mathcal{O}(r)$.    
\end{lemma}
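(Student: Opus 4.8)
The plan is to take correctness from the analysis in the paragraphs preceding Figure~\ref{alg:CNOT}, and then bound the running time step by step against the subroutine costs catalogued in Section~\ref{sec:QFE-subroutines}. For correctness, I would recall that on standard basis states $\mathrm CX_{h,j}\ket{\vec z} = \ket{E_{j,h}\vec z}$, where $E_{j,h} = I + \vec e_j\vec e_h\trans$ and $h\ne j$; hence applying $\mathrm CX_{h,j}$ to the quadratic form expansion of $\ket{\psi}$ amounts exactly to replacing $A$ by $A' := E_{j,h}A$ and $\vec b$ by $\vec b' := E_{j,h}\vec b$, which is what Step~2 (adding row $h$ into row $j$ of $A$ modulo~2) and Step~3 ($b_j \gets b_j \oplus b_h$) do. Steps~2--3 leave $Q$ and $g$ unchanged, which is correct because $\mathrm CX$ merely permutes standard basis states.

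It then remains to verify that $A$ is returned to principal row form. If $j$ is not a principal row, Step~1 sets $c=0$, only row $j$ of $A$ is altered, no principal row is disturbed, so $A'$ is already in principal row form with the unchanged index map and Step~4 is skipped. If $j = p(c)$ with $c\ge 1$, then $A'$ is still full rank because $E_{j,h}$ is invertible modulo~2 ($E_{j,h}^2 \equiv I \pmod{2}$, since $\vec e_h\trans\vec e_j = 0$), so in particular column $c$ of $A'$ is non-zero; hence $\algname{ReselectPrincipalRow}(0,c)$ in Step~4 is guaranteed to locate a row $j_\ast$ with $A'_{j_\ast,c}=1$ and, by its specification (which rests on $\algname{MakePrincipal}$ and Lemma~\ref{lemma:QFE-change-of-variables}), performs the column operations transforming row $j_\ast$ into $\vec e_c\trans$ without disturbing the other principal rows or the state represented. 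Thus the output is a valid quadratic form expansion for $\mathrm CX_{h,j}\ket{\psi}$.

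For the running time, Steps~1 and~3 are $\mathcal{O}(1)$, and Step~2 performs at most $s_h$ constant-time insertions/deletions in the sparse structure for $A$ (one per non-zero entry of row $h$), hence $\mathcal{O}(s_h)$. The only substantive estimate is Step~4: after Step~2 there is always a candidate row for column $c$ with at most $s_h+1$ non-zero entries --- if $A_{h,c}=0$ then row $j$ of $A'$ is the modulo~2 sum of $\vec e_c\trans$ and row $h$ of $A$, so it still has $A'_{j,c}=1$ and at most $s_h+1$ non-zero entries; if $A_{h,c}=1$ then row $h$ of $A'$ is unchanged, has $A'_{h,c}=1$, and has $s_h$ non-zero entries. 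Since $\algname{ReselectPrincipalRow}(0,c)$ picks a candidate with the fewest non-zero entries, $s_{j_\ast} \le s_h+1$, and by Lemma~\ref{lemma:runtime-ReselectPrincipalRow} it runs in time $\mathcal{O}(s_{j_\ast}t_c + s_{j_\ast}w_c) \subseteq \mathcal{O}((s_h+1)(t+w)) = \mathcal{O}(s_h t + s_h w + t + w)$. Summing gives the claimed $\mathcal{O}(s_h t + s_h w + t + w) \subseteq \mathcal{O}(nr)$ (using $s_h,w\le r$ and $t\le n-r+1$), and when $j$ is not a principal row Step~4 is not executed, so the cost collapses to $\mathcal{O}(s_h)\subseteq\mathcal{O}(r)$.

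The main --- and essentially the only non-routine --- point is this Step~4 estimate: one must observe that because a principal row of $A$ has exactly one non-zero entry, the row-addition in Step~2 enlarges the cheapest available candidate row by at most one beyond $s_h$, so that $\algname{ReselectPrincipalRow}$ costs $\mathcal{O}((s_h+1)(t+w))$ rather than being governed by the generic row-sparsity bound $s$. Everything else is straightforward bookkeeping against the subroutine run-times.
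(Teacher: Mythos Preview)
Your proof is correct and follows essentially the same approach as the paper: bound Steps~1--3 directly, then invoke Lemma~\ref{lemma:runtime-ReselectPrincipalRow} for Step~4 with $s_{j_\ast} \le s_h + 1$. Your Step~4 analysis is in fact slightly more complete than the paper's own proof, which only cites the bound on row~$j$ (``the number of non-zero entries in row $j$ at this step is at most $s_h+1$''); you correctly observe that when $A_{h,c}=1$ row~$j$ has $A'_{j,c}=0$ and so is not a candidate, and that row~$h$ (with $s_h$ entries) serves instead --- a case the paper handles in the discussion before the algorithm but omits from the proof itself.
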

\begin{proof}
    We may perform Step~1 in time $\mathcal{O}(1)$.
    Step~2 involves iterating over the $s_h$ indices $0 \le k \le r$ for which $A_{h,k} \ne 0$, performing $\mathcal{O}(1)$ operations on each iteration; and Step~3 can also be performed in time $\mathcal{O}(1)$.
    In Step~4, which is only invoked if $j$ was a principal row of the expansion matrix $A$ at the input, we invoke $\algname{ReselectPrincipalRow}(0,c)$, which runs in time $\mathcal{O}((s_h+1)t + (s_h+1)w)$, using the fact that the number of non-zero entries in row $j$ at this step is at most $s_h + 1$.
\end{proof}

\subsection{Pauli basis measurements}

As quadratic form expansions emphasise the decomposition of a state in the standard basis, this leads to simpler procedures to simulate $Z$-basis measurements compared to $X$- or $Y$-basis measurements: $Z$-basis measurements may only decrease the number of columns of $A$, whereas $X$- and $Y$-basis measurements have an analysis which is closer to that of the Hadamard operation.
However, these operations all can be done particularly quickly when the qubit being measured is disentangled from the other qubits --- for instance, in those cases where the measurement outcome is in principle deterministic.

We note that, in principle, one may simulate an X- or a Y-basis measurement by performing a suitable single-qubit unitary $U$ (respectively: $U = H$ or $U = S\herm  H$), followed by a Z-measurement, followed by $U\herm$.
If our aim were only to demonstrate that these operations could be simulated in $\mathcal{O}(n^2)$ time, this would suffice.
However, one might wish to simulate these measurement operations without \emph{necessarily} reducing them to Z-basis measurements, if avoiding that reduction could provide a savings in operations.
We find that this is possible (see Lemma~\ref{lemma:SimulateXYMeas} below), using an analysis which builds on the similarity between simulating these measurements and simulating the Hadamard transformation.
For this reason, we treat each of the Pauli measurements on an equal footing.

\subsubsection{Simulating measurements with deterministic outcomes}
\label{sec:simulDeterministicMeas}

As we maintain the expansion matrix $A$ in principal row form, none of the terms in Eqn.~\eqref{eqn:QFE-specific} may cancel.
Then, a $Z$-basis measurement on qubit $j$ is deterministic if qubit $j$ is in the some particular state $\ket{\beta}$ (for a bit $\beta \in \{0,1\}$) in every term of the quadratic form expansion --- and in particular, does not depend on any bit of the summation index $\vec x$.
This occurs if and only if row $j$ of $A$ is entirely zero, which can be determined in time $\mathcal{O}(1)$.
If this is the case, the outcome will be $\beta = b_j$\,, which again can be computed in time $\mathcal{O}(1)$.

Remarkably, we may show that $X$- and $Y$-basis measurements with deterministic outcomes can also be simulated in time $\mathcal{O}(1)$.
Consider a state $\ket{\psi}$, in which either an $X$-basis measurement or a $Y$-basis measurement on some qubit $j$ would have a determinstic outcome.
Such measurements cannot be deterministic if qubit $j$ is in a $Z$-eigenbasis state, so they can be deterministic only if row $j$ of $A$ has a non-zero entry.
Furthermore, the outcome of such a measurement can only be deterministic if there is no entanglement between qubit $j$ and any other qubits.
In particular:
\begin{itemize}
\item 
    There can be no correlations between the outcomes of $Z$-basis measurements on qubit $j$ and any other qubits.
    This implies that it must be possible to reindex the sum by column operations on $A$, so that $\vec e_j$ is a column of $A$: that is, there must be a solution to the set of equations $\vec e_j = A \vec v$ for some $\vec v \in \{0,1\}^r$.
    However, for $A$ in principal row form, row $p(k)$ of $A \vec v$ will be non-zero for any $k$ such that $v_k = 1$.
    It follows that $\vec e_j = A \vec v$ is only solvable if $\vec v$ has exactly one non-zero entry: that is, if $\vec e_j$ is itself a column of $A$.
    Furthermore, as only row $j$ of $A$ is non-zero in column $c$, this implies that row $j$ is a principal row with $j = p(c)$.
    
    We may determine whether this is the case in constant time, by determining whether row $j$ of $A$ has exactly one non-zero entry, finding the column $c$ in which that non-zero entry occurs, and then checking whether column $c$ has exactly one non-zero entry.
\item
    Given that the above holds, the state of qubit $j$ in each term of the quadratic form expansion is $\ket{x_c}$ for some $1 \le c \le r$, and only other variables $x_k$ for $k \ne c$ are involved in the state of the other qubits.
    For qubit $j$ to be unentangled from the others, the relative phases of the terms in the expansion must be a product of the relative phases for the state of qubit $j$, and the relative phases for the state of the other qubits, with no contribution from cross-terms $x_c x_k$.
    By Lemma~\ref{lemma:Gram-matrix-mixed-moduli}, this is equivalent the off-diagonal coefficients in row/column~$c$ of $Q$ all being even.
    
    As our subroutines all evaluate the off-diagonal terms modulo $2$, it suffices to check whether all coefficients in row/column $c$ of $Q$ are zero, except possibly $Q_{c,c}$.
    We may determine this in constant time by checking  whether row $c$ of $Q$ has at most one non-zero coefficient, and (if one such coefficient exists) whether that coefficient is on the diagonal.
\end{itemize}
Together, these conditions suffice to show that qubit $j$ is in one of the states $\ket{\texttt +}$, $\ket{\texttt{+i}}$, $\ket{\texttt -}$, or $\ket{\texttt{-i}}$.
\emph{Which} of these states it is in, is determined by  whether $(-1)^{b_j} Q_{c,c} = 0$, $1$, $2$, or $3$ respectively (corresponding to a relative phase of $+1 = i^0$, or $i = i^1$, or $-1 = i^2$, or $-i = i^3$); this may also be determined in time $\mathcal{O}(1)$.

By construction, much of this analysis generalises to the case where the qubit to be measured is in a single-qubit stabiliser state, unentangled with any others, even if the measurement outcome is \emph{not} deterministic: it will be in an eigenstate of \emph{some} Pauli operator $\{X_j,Y_j,Z_j\}$.
Whether this is the case may be determined in time $\mathcal{O}(1)$; a random outcome and an update to the quadratic form expansion can then be computed in constant time as well.

\subsubsection{Simulating \textit{Z}-basis measurements}

To describe the effect of a measurement with a random outcome $\beta$, we perform a change of variables so that row $j$ is a principal row --- and so that in particular, the value of qubit $j$ in any term in the superposition depends only on the bit $x_r$.
(We may attempt to do so in a way to minimise the amount of work required.) 
We then fix the value of that particular bit, using the subroutine $\algname{FixFinalBit}(\beta\oplus b_j)$ from Section~\ref{sec:fixBits} --- where we take the argument $\beta \oplus b_j$ to over-write the value of $b_j$ with $\beta$.

\begin{figure}[t]%
    \centering\small
    \begin{minipage}{.9125\textwidth}
        \rule{\textwidth}{1pt}
        $\algname{SimulateMeasZ}(j)$
        \smallskip
        \hrule
        \smallskip
        \begin{itshape}\small 
            Simulate a $Z$-basis measurement on qubit $j$, storing the result in a bit $\beta$ produced as output.
        \end{itshape}
        \smallskip
        \hrule
        \begin{enumerate}[leftmargin=4ex,itemsep=.25ex]
        \item
            If row $j$ of $A$ is zero, set $\beta \gets b_j$ and stop; otherwise set $\beta$ to a uniformly random bit-value and proceed.
        \item
            Find $1 \le k \le r$ such that $A_{j,k} \ne 0$,
            minimising the number of non-zero entries in column $k$ of $A$.
        \item
            Call $\algname{ReindexSwapColumns}(k,r)$.
        \item 
            Call $\algname{MakePrincipal}(r,j)$.
        \item
            Call $\algname{FixFinalBit}(\beta\oplus b_j)$.
        \end{enumerate}
        \vspace*{-3ex}
        \rule{\textwidth}{1pt}
    \end{minipage}
\caption{%
    \label{alg:SimulateMeasZ}
    A procedure to simulate a $Z$-basis measurement.
}
\end{figure}

Figure~\ref{alg:SimulateMeasZ} presents an algorithm $\algname{SimulateMeasZ}(j)$, which supplements the analysis of deterministic $Z$-basis measurements with the operations described above.

\begin{lemma}
\label{lemZMeas}
    Let $\ket{\psi}$ be represented by a quadratic form expansion as in Eqn.~\eqref{eqn:QFE-specific}, and $1 \le j\le n$.
    Then $\algname{SimulateMeasZ}(j)$ computes a quadratic form expansion for a post-measurement state arising from performing a $Z$-basis measurement on the state $\ket{\psi}$ in time ${\mathcal{O}(s_j t + s_j w)} \subseteq \mathcal{O}(nr)$ in general, generating a uniformly random outcome if required.
    In particular: if $j$ is a principal row of $A$, $\algname{SimulateMeasZ}(j)$ runs in time ${\mathcal{O}(t {+} w)} \subseteq \mathcal{O}(n)$.
    Furthermore, in the case where the measurement outcome is deterministic, $\algname{SimulateMeasZ}(j)$ returns the measurement outcome in time $\mathcal{O}(1)$ without modifying the input.
\end{lemma}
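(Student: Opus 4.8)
The plan is to argue correctness and then tally the cost, relying on Lemma~\ref{lemma:QFE-change-of-variables}, the bit-fixing analysis of Section~\ref{sec:fixBits}, and the run-time bounds of the subroutines listed in Section~\ref{sec:QFE-subroutines}.

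\emph{Correctness and the random outcome.} First I would observe that if row $j$ of $A$ is zero, then $(A\vec x \oplus \vec b)_j = b_j$ in every term of Eqn.~\eqref{eqn:QFE-specific}, so qubit $j$ is in the product state $\ket{b_j}$, the $Z$-measurement is deterministic with outcome $b_j$, and the procedure returns this and halts without touching $\mathcal E$. Otherwise set $\vec a_j\trans = \vec e_j\trans A \ne \vec 0$. Since $A$ has rank $r$ the $2^r$ kets $\ket{A\vec x \oplus \vec b}$ are orthonormal, so for the projector $P_z = \ket{z}\!\bra{z}_j \otimes I$ we have $\| P_z\ket{\psi}\|^2 = 2^{-r}\,\#\{\vec x\in\{0,1\}^r : \vec a_j\trans\vec x \equiv z \oplus b_j \pmod 2\} = \tfrac12$, since a non-trivial affine equation over $\mathbb{F}_2$ in $r$ unknowns has exactly $2^{r-1}$ solutions; hence the outcome is a uniformly random bit $\beta$, as Step~1 produces. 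For the update, Steps~2--4 are changes of variables and leave $\ket{\psi}$ unchanged by Lemma~\ref{lemma:QFE-change-of-variables}: $\algname{ReindexSwapColumns}(k,r)$ moves a column in which row $j$ is non-zero into position $r$, and $\algname{MakePrincipal}(r,j)$ reduces row $j$ to $\vec e_r\trans$ by column operations, so that afterwards $p(r) = j$ and $(A\vec x\oplus\vec b)_j = x_r \oplus b_j$. Projecting onto outcome $\beta$ then keeps exactly the terms with $x_r = \beta \oplus b_j$, i.e.\ the sub-normalised vector $\ket{\psi'}$ of Eqn.~\eqref{eqn:bitFixedQFE} with $z = \beta\oplus b_j$, which by Eqn.~\eqref{eqn:bitFixedQFE-reduced} equals $\tfrac{1}{\sqrt2}$ times a rank-$(r{-}1)$ quadratic form expansion. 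As $\| P_\beta\ket{\psi}\| = \tfrac{1}{\sqrt2}$, the normalised post-measurement state is precisely that rank-$(r{-}1)$ expansion, which is exactly the output of $\algname{FixFinalBit}(\beta\oplus b_j)$: it removes column $r$ of $A$, updates $Q$ and $g$, over-writes $b_j$ with $\beta$, and decrements $r$.

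\emph{The main obstacle.} The one point I expect to need care is the normalisation bookkeeping just used: Section~\ref{sec:fixBits} deliberately does not account for the factor $\tfrac{1}{\sqrt2}$ appearing in Eqn.~\eqref{eqn:bitFixedQFE-reduced}, so one must check that this factor is exactly what is needed to renormalise the projected state --- and that is guaranteed precisely because the Born probability computed above equals $\tfrac12$. Everything else is a routine application of the subroutine specifications.

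\emph{Run-time.} Step~1 reads the stored non-zero count of row $j$ of $A$ in $\mathcal{O}(1)$, and in the deterministic case the procedure halts there without modifying the input, giving the $\mathcal{O}(1)$ bound. Otherwise, Step~2 scans the $s_j$ non-zero entries of row $j$ in $\mathcal{O}(s_j)$ time; $\algname{ReindexSwapColumns}(k,r)$ costs $\mathcal{O}(t_k + t_r + w_k + w_r)\subseteq\mathcal{O}(t + w)$; $\algname{MakePrincipal}(r,j)$ performs at most $s_j$ column subtractions against column $r$ (whose weight is $t_k \le t$ after the swap), costing $\mathcal{O}(s_j t_r + s_j w_r)\subseteq\mathcal{O}(s_j t + s_j w)$; and $\algname{FixFinalBit}$ costs $\mathcal{O}(t_r + w_r)\subseteq\mathcal{O}(t + w)$. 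Summing gives $\mathcal{O}(s_j t + s_j w)\subseteq\mathcal{O}(nr)$, using $s_j \le r$ and $t \le n - r + 1$. When $j$ is a principal row, row $j$ is some $\vec e_c\trans$, so $s_j = 1$: the general bound already collapses to $\mathcal{O}(t + w)\subseteq\mathcal{O}(n)$, and moreover Step~2 returns $k = c$ in $\mathcal{O}(1)$ and $\algname{MakePrincipal}(r,j)$ takes its $\mathcal{O}(1)$ fast path. Finally, I would note that choosing $k$ in Step~2 to minimise the column weight $t_k$ is what lets these bounds be tightened further when $A$ is sparse.
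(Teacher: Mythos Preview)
Your proposal is correct and follows essentially the same approach as the paper: the paper's own proof is a terse step-by-step tally of the costs of Steps~1--5 (Step~1 is $\mathcal{O}(1)$ and handles the deterministic case; Step~2 is $\mathcal{O}(s_j)$; Steps~3,~4,~5 cost $\mathcal{O}(t+w)$, $\mathcal{O}(s_j t + s_j w)$, $\mathcal{O}(t+w)$ respectively), with Step~4 dominating. Your write-up adds a correctness argument and spells out the $s_j=1$ specialisation for principal rows, both of which the paper leaves implicit in the surrounding discussion, but the reasoning and cost accounting are the same.
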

\begin{proof}
    Step 1 takes $\mathcal{O}(1)$ operations: this is all that is required in the deterministic case.
    Step 2 iterates over the non-zero entries of row $j$ of $A$, performing simple integer comparisons and assignments in each iteration, taking $\mathcal{O}(s_j)$ operations in total.
    Steps~3, 4, and~5 invoke the subroutines $\algname{ReindexSwapColumns}$, $\algname{MakePrincipal}$, and $\algname{FixFinalBit}$, using $\mathcal{O}(t+w)$, $\mathcal{O}(s_j t + s_j w)$ and $\mathcal{O}(t+w)$ operations respectively.
    The total complexity is then dominated by Step~4, which has run-time $\mathcal{O}(s_j t + s_j w)$.
\end{proof}

\subsubsection{\textit{X}- and \textit{Y}-basis measurements with random outcomes}
\label{sec:XY-random-meas}

For the sake of convenience, we define $\ket{\mathsf{x}_\beta} = \smash{\tfrac{1}{\sqrt 2}\bigl(\ket{0} + i^{\:\!2\beta} \ket{1}\bigr)}$ and $\ket{\mathsf{y}_\beta} = \smash{\tfrac{1}{\sqrt 2}\bigl(\ket{0} + i^{\:\!2\beta{+}1}\ket{1}\bigr)}$ as notation for the eigenstates of $X$ and $Y$ respectively, where $\beta \in \{0,1\}$.
To simulate an $X$- or $Y$-basis measurement on qubit $j$ which has a random outcome, we may generate a measurement outcome $\beta \in \{0,1\}$ uniformly at random, and then determine the effect of applying the
projectors
\begin{equation}
        \ket{\mathsf x_\beta}\!\!\bra{\mathsf x_\beta}
    \;=\;
        \frac{1}{2} \! \sum_{k,z \in \{0,1\}\!\!} \!
            (-1)^{\beta(z-\:\!k)} \ket{z}\!\!\bra{k}\,,
    \qquad
        \ket{\mathsf y_\beta}\!\!\bra{\mathsf y_\beta}
    \;=\;
        \frac{1}{2} \! \sum_{k,z \in \{0,1\}\!\!} \!\!\!
            i^{\:\!(2\beta+1)(z-\:\!k)} \ket{z}\!\!\bra{k}
\end{equation}
Let $\ket{\psi^{(X)}_\beta} = \smash{\sqrt 2 \;\!\bigl(\ket{\mathsf x_\beta}\!\!\bra{\mathsf x_w}_j \otimes I\bigr) \ket{\psi}}$ denote the state after applying an $X$-basis measurement on qubit $j$ of  $\ket{\psi}$, given that the outcome $\beta \in \{0,1\}$ is not deterministic (where the factor of $\sqrt 2$ is to renormalise the state); similarly, let $\ket{\psi^{(Y)}_\beta} = \smash{\sqrt 2 \;\!\bigl(\ket{\mathsf y_\beta}\!\!\bra{\mathsf y_\beta}_j \otimes I\bigr) \ket{\psi}}$ denote the state after applying an $Y$-basis measurement on qubit $j$ of $\ket{\psi}$, given that the outcome $\beta \in \{0,1\}$ is not deterministic.
Following a similar analysis as for the Hadamard: if we let $K_j = I \oplus \vec e_j \vec e_j\trans$, and define $A' = K_j A$ and $\vec b'  = K_j \vec b$, we may show
\begin{subequations}%
\label{eqn:QFE-XYmeas-1}%
\begin{align}
        \ket{\psi^{(X)}_\beta}
    \;&=\;
        \frac{\tau^g}{\sqrt{2^{r{+}1}}}
        \!\!
        \sum_{\substack{\vec x \in \{0,1\}^r \\ z \in \{0,1\}}}
        \!\!\!\!
            i^{\;\! \vec x\trans \! Q \;\!\vec x}\,(-1)^{\beta(z\,-\,\vec e_j\trans A \vec x \,-\, b_j)}\,
            \ket{(A' \vec x + z\vec e_j) \oplus \vec b'} ,
    \\[1ex]
        \ket{\psi^{(Y)}_\beta}
    \;&=\;
        \frac{\tau^g}{\sqrt{2^{r{+}1}}}
        \!\!
        \sum_{\substack{\vec x \in \{0,1\}^r \\ z \in \{0,1\}}}
        \!\!\!\!
            i^{\;\! \vec x\trans \! Q \;\!\vec x
                    \,+\, (2\beta+1)(z \,-\, \vec e_j\trans\!(A\ast \vec x) \,-\, b_j \,+\, 2 \vec e_j\trans\!(A\ast \vec x) b_j)}\,
            \ket{(A' \vec x + z\vec e_j) \oplus \vec b'} ,
\end{align}%
\end{subequations}%
using the operation $\ast$ as defined in Eqn.~\eqref{eqn:ast-operation}, and applying the formula $u \oplus v = u + v - 2uv$ to coefficient $j$ of $(A\!\ast\!\vec x \oplus \vec b)$ in the imaginary exponent.
Let $\vec a\trans = \vec e_j\trans\! A$: we may then simplify the imaginary exponents in Eqns.~\eqref{eqn:QFE-XYmeas-1}, and in particular factor out constant terms which only contribute global phase factors, to obtain
\begin{subequations}%
\allowdisplaybreaks
\label{eqn:QFE-XYmeas-2}%
\begin{align}
        \ket{\psi^{(X)}_\beta }
    \;&=\;
        \frac{\tau^{g-4 \beta b_j}}{\sqrt{2^{r{+}1}}}
        \!\!
        \sum_{\substack{\vec x \in \{0,1\}^r \\ z \in \{0,1\}}}
        \!\!\!\!
            i^{\;\! \vec x\trans \! Q \;\!\vec x \,+\, 2 \beta  z\,-\,2 \beta \vec  a\trans\! \vec x}\,
            \ket{(A' \vec x + z\vec e_j) \oplus \vec b'} ,
    \\[1ex]
        \ket{\psi^{(Y)}_\beta}
    \;&=\;
        \frac{\tau^{g\;\!-\;\!(4\beta+2)b_j}}{\sqrt{2^{r{+}1}}}
        \!\!
        \sum_{\substack{\vec x \in \{0,1\}^r \\ z \in \{0,1\}}}
        \!\!\!\!
            i^{\;\! \vec x\trans \! Q \;\!\vec x
                    \,+\, (2\beta+1)z \,+\, (2\beta+1)(2b_j-1)(\vec a\trans\! \ast \;\!\vec x)}\,
            \ket{(A' \vec x + z\vec e_j) \oplus \vec b'} .
\end{align}%
\end{subequations}%
We may again condense the indices of summation by defining $\vec y = {[\,x_1\;\;\cdots\;\;x_r\;\;z\,]\:\!\trans}$, and letting $A'' = {\bigl[\,A' \;{;}\; \vec e_j\,\bigr]}$ be the matrix obtained by adjoining the vector $\vec e_j$ as an additional $(r{+}1)\textsuperscript{st}$ column to the matrix $A'$.
Let $\vec{\tilde a}\trans = {[\,A_{j,1}\;\;\cdots\;\;A_{j,r}\;\;0\,]}$ be an extension of $\vec a\trans$ by a further zero coefficient, and let $Q'$ be an $(r{+}1)\times(r{+}1)$ matrix which extends $Q$ with an additional row and column which is entirely zero.
We then define Gram matrices
\begin{subequations}%
\begin{align}
        Q^{(X)}
    \;&=\;
        Q' \,+\, 2\beta \;\!\mathrm{diag}(\vec{\tilde a}\trans \!+ \vec e_{r{+}1}\trans)    \,,
    \\[1ex]
        Q^{(Y)}
    \;&=\;
        Q'
        \,+\,
        (2b_j+2\beta-1) \vec{\tilde a}\;\!\vec{\tilde a}\trans
        \,+\,
        (2\beta +1)\vec e_{r{+}1}\vec e_{r{+}1}\trans
        \,,
    \end{align}
\end{subequations}
so that we may express the imaginary exponents in Eqns.~\eqref{eqn:QFE-XYmeas-2} as
\begin{subequations}%
\begin{align}
        \vec x\trans \!Q\:\! \vec x \,+\, 2\beta z - & 2\beta \vec a\trans\!\vec x \nonumber 
    \\&=\;
        \vec y\trans \!Q' \vec y \,+\, 2\beta y_{r{+}1} - 2\beta \vec{\tilde a}\trans\! \vec y
    \notag\\&\equiv\;
        \vec y\trans \!Q' \vec y \,+\, \vec y\trans \!\bigl(2\beta\:\!\mathrm{diag}(\vec e_{r{+}1}\trans) + 2\beta\:\!\mathrm{diag}(\vec{\tilde a}\trans)\bigr) \vec y  \mod{4}
        \mspace{-9mu}
    \notag\\&=\;
        \vec y\trans\! Q^{(X)} \;\! \vec y\,,
    \\[2ex]\mspace{-9mu}
        \vec x\trans \!Q\:\! \vec x \,+\, (2\beta\!+\!1)z + & (2\beta\!+\!1)(2b_j\!-\!1)(\vec a\trans\!\!\ast\!\!\; \vec x) \nonumber
    \\&=\;
        \vec y\trans \!Q' \vec y \,+\, (2\beta\!+\!1) y_{r{+}1} + (4\beta b_j + 2b_j - 2\beta  - 1)(\vec{\tilde a}\trans\!\! \ast\!\!\; \vec y)
    \notag\\&\equiv\;
        \vec y\trans \!Q' \vec y \,+\, (2\beta+1) y_{r{+}1} + ( 2b_j + 2\beta - 1) (\vec y\trans\!\vec{\tilde a}\;\!\vec{\tilde a}\trans\! \vec y) \mod{4}
        \mspace{-9mu}
    \notag\\&=\;
        \vec y\trans\!Q^{(Y)} \vec y\;.
    \end{align}%
\end{subequations}
Then, if we let $g^{(X)} := g - 4 \beta b_j$ and $g^{(Y)} := g - (4\beta +2)b_j$, we may re-express Eqns.~\eqref{eqn:QFE-XYmeas-2} as
\begin{equation}
\label{eqn:QFE-XYmeas-3}%
{}\mspace{-9mu}
        \ket{\psi^{(X)}_w}
    \,=\,
        \frac{\tau^{g^{(X)}}}{\sqrt{2^{r{+}1}}}
        \!\!\!\!
        \sum_{\vec y \in \{0,1\}^{r{+}1}\!\!}
        \!\!\!\!\!\!
            i^{\;\! \vec y\trans \! Q^{(X)} \vec y}
            \ket{A'' \vec y \oplus \vec b'} ,
    \qquad
        \ket{\psi^{(Y)}_w}
    \,=\,
        \frac{\tau^{g^{(Y)}}}{\sqrt{2^{r{+}1}}}
        \!\!\!\!
        \sum_{\vec y \in \{0,1\}^{r{+}1}\!\!}
        \!\!\!\!\!\!
            i^{\;\! \vec y\trans \! Q^{(Y)} \vec y}
            \ket{A'' \vec y \oplus \vec b'} .
    \mspace{-6mu}
\end{equation}%

As with simulating the Hadamard, $A''$ may not be in principal row form.
Furthermore, as $A''$ is constructed in the same way as in the analysis of the Hadamard operation, we may treat this possibility in exactly the same way:
\begin{itemize}
\item 
    If $j$ is not a principal row of $A$, we may apply the above analysis without modifications, and extend $p$ to obtain a principal index map for $A''$ by setting $p(r{+}1) = j$.
\item
    If $j = p(c)$ for some column $1 \le c \le r$, we may attempt to reduce to the case where $j$ is not a principal row, by invoking $\algname{ReselectPrincipalRow}(j,c)$.
    If afterwards $p(c) \ne j$, then the expansion $A$ has been transformed so that $j$ is not a principal row, and may proceed as above.
    Otherwise, if $\algname{ReselectPrincipalRow}(j,c)$ does not change the value of $p(c)$, it must be the case that $j$ is the only row in which some column $c$ of $A$ is non-zero, row $j$ of $A''$ would be $\vec e_{r{+}1}\trans$ rather than $\vec e_c\trans$\,, and column $c$ would be entirely zero.
    We may then setting $p'(r+1) = j$, and invoke $\algname{ZeroColumnElim}(c)$ to transform the quadratic form expansion into a form where the corresponding matrix $A'$ is in principal row form.
\end{itemize}

Figure~\ref{alg:SimulateMeasXY} presents  algorithms $\algname{SimulateMeasY}(j)$ and $\algname{SimulateMeasX}(j)$, which supplement the analysis of deterministic measurements in Section~\ref{sec:simulDeterministicMeas} with the operations described in the analysis above.
\begin{figure}[!t]%
    \centering\small
    ~\!\!\!\!\!\!
    \begin{minipage}{.4875\textwidth}
        \rule{\textwidth}{1pt}
        $\algname{SimulateMeasX}(j)$
        \smallskip
        \hrule
        \smallskip
        \begin{itshape}\small 
            Simulate an $X$-basis measurement on qubit $j$, storing the result in a bit $\beta$ produced as output.%
        \end{itshape}
        \smallskip
        \hrule
        \begin{enumerate}[leftmargin=4ex,itemsep=.25ex]
        \item
            If $j$ is a principal row of $A$, let $1 \le c \le r$ be such that $j = p(c)$; otherwise let $c = 0$.
        \item
            If $c > 0$, call $\algname{ReselectPrincipalRow}(j,c)$.
            If afterwards $j \ne p(c)$, update $c \gets 0$.
        \item            
            If $c = 0$, or if row $c$ of $Q$ has a non-zero entry off of the diagonal, select $\beta \in \{0,1\}$ uniformly at random and proceed.
            Otherwise:
            \begin{itemize}[topsep=0ex, leftmargin=4ex]
            \item
                If $Q_{c,c} = 2u$ for $u \!\in\! \{0,1\}$, set $\beta \!=\! u$ and stop;
            \item
                Otherwise, select $\beta \in \{0,1\}$ uniformly at random, set $Q_{c,c} \gets 2\beta$, and stop.
            \end{itemize}
        \item
            Let $\vec{\tilde a} = {[\,A_{j,1}\;\;\cdots\;\;A_{j,r}\;\;0\,}]\:\!\trans \in \{0,1\}^{r{+}1}$.
        \item
            Modify $A$ by updating $A_{j,k} \gets 0$ for each $1 \le k \le r$; 
            then extend $A$ by adjoining the column vector $\vec e_j$ and update $p(r{+}1) \gets j$.
        \item
            Modify $Q$ by extending by one row and one column, initially set to zero.
        \item 
            Update ${Q \gets Q + 2\beta \;\! \mathrm{diag}(\vec{\tilde a}\trans \!\!+\!\!\; \vec e_{r{+}1}\trans)}$.
        \item
            For each $1 \le k \le r$ such that $\tilde a_k \ne 0$: ~\\
                reduce $Q_{k,k}$ modulo $4$.
        \item
            Update $b_j \gets 0$.
        \item
            If $c > 0$, call $\algname{ZeroColumnElim}(c)$; \\ otherwise update $r \gets r + 1$.
        \end{enumerate}
        \vspace*{-3ex}
        \rule{\textwidth}{1pt}
    \end{minipage}
    \hfill\hfill
        \begin{minipage}{.5\textwidth}
        \rule{\textwidth}{1pt}
        $\algname{SimulateMeasY}(j)$
        \smallskip
        \hrule
        \smallskip
        \begin{itshape}\small 
            Simulate a $Y$-basis measurement on qubit $j$, storing the result in a bit $\beta$ produced as output.%
        \end{itshape}
        \smallskip
        \hrule
        \begin{enumerate}[leftmargin=4ex,itemsep=.25ex]
        \item
            If $j$ is a principal row of $A$, let $1 \le c \le r$ be such that $j = p(c)$; otherwise let $c = 0$.
        \item
            If $c > 0$, call $\algname{ReselectPrincipalRow}(j,c)$.
            If afterwards $j \ne p(c)$, update $c \gets 0$.
        \item            
            If $c = 0$, or if row $c$ of $Q$ has a non-zero entry off of the diagonal, select $\beta \in \{0,1\}$ uniformly at random and proceed.
            Otherwise:
            \begin{itemize}[topsep=0ex, leftmargin=4ex]
            \item
                If $Q_{c,c} \!=\! 2u{+}1$ for $u \!\in\! \{0,1\}$, set $\beta \!=\! u {\!\:\oplus\!\:} b_j$ and stop;
            \item
                Otherwise, select $\beta \in \{0,1\}$ uniformly at random, set $b_j \gets 0$ and $Q_{c,c} \gets 2\beta+1$, and stop.
            \end{itemize}
        \item
            Let $\vec{\tilde a} = {[\,A_{j,1}\;\;\cdots\;\;A_{j,r}\;\;0\,}]\:\!\trans \in \{0,1\}^{r{+}1}$.
        \item
            Modify $A$ by updating $A_{j,k} \gets 0$ for each $1 \le k \le r$; 
            then extend $A$ by adjoining the column vector $\vec e_j$ and update $p(r{+}1) \gets j$.
        \item
            Modify $Q$ by extending by one row and one column, initially set to zero.
        \item
            Update ${Q \gets Q + (2b_j {+} 2\beta {-} 1)\vec{\tilde a}\:\!\vec{\tilde a}\trans\! + (2\beta{+}1) \vec e_{r\!\!\;{+}\!\!\;1}\vec e_{r\!\!\;{+}\!\!\;1}\trans}$\;.
        \item
            For each $1 \le k \le r$ such that $\tilde a_k \ne 0$: ~\\
                call $\algname{ReduceGramRowCol}(k)$.
        \item
            Update $b_j \gets 0$.
        \item
            If $c > 0$, call $\algname{ZeroColumnElim}(c)$; \\ otherwise update $r \gets r + 1$.
        \end{enumerate}
        \vspace*{-3ex}
        \rule{\textwidth}{1pt}
    \end{minipage}
    \!\!\!\!\!\!~
\caption{%
    \label{alg:SimulateMeasX}%
    \label{alg:SimulateMeasY}%
    \label{alg:SimulateMeasXY}%
    Procedures to modify a quadratic form expansion, to represent the effects of $X$-basis or $Y$-basis measurements.}
\end{figure}
The treatment of deterministic measurements is in each case captured by the first three steps, which determines whether $j$ is a principal row $j = p(c)$, attempts to reselect it if so (thereby testing whether the column $c$ has more than one non-zero entry), and then tests whether the conditions for qubit $j$ to be unentangled from the others.
Apart from Step~3, and the modifications made to the Gram matrix $Q$, both procedures are essentially identical to each other and to $\algname{SimulateH}(j)$ as presented in Figure~\ref{alg:SimulateH}.

\begin{lemma}
\label{lemma:SimulateXYMeas}
    Let $\ket{\psi}$ be represented by a quadratic form expansion as in Eqn.~\eqref{eqn:QFE-specific}, and $1 \le j\le n$.
    Then:
    \begin{itemize}
    \item
        $\algname{SimulateMeasX}(j)$ computes a quadratic form expansion for a post-measurement state arising from performing a $X$-basis measurement on the state $\ket{\psi}$ in time ${\mathcal{O}(s_j + tw + w^2)} \subseteq \mathcal{O}(nr)$ in general, generating a uniformly random outcome if required; this may be tightened to $\mathcal{O}(s_j + 1)$ in case $j$ is not a principal row of $A$.
    \item
        $\algname{SimulateMeasY}(j)$ computes a quadratic form expansion for a post-measurement state arising from performing a $Y$-basis measurement on the state $\ket{\psi}$ in time ${\mathcal{O}(s_j^2 + s_jw + tw + w^2)} \subseteq \mathcal{O}(nr)$ in general, generating a uniformly random outcome if required; this may be tightened to $\mathcal{O}(s_j^2 + s_j w + 1)$ in case $j$ is not a principal row of $A$.
    \item
        If $\ket{\psi}$ is an eigenstate of the Pauli $X$, $Y$ or $Z$ operators, both procedures terminate in time $\mathcal{O}(1)$, and do not modify their input if the outcome is deterministic.
    \end{itemize}
\end{lemma}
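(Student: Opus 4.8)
The correctness of both procedures is already in hand: Eqns.~\eqref{eqn:QFE-XYmeas-3} show that Steps~4--10 build a valid quadratic form expansion for $\ket{\psi^{(X)}_\beta}$ or $\ket{\psi^{(Y)}_\beta}$, the treatment of a non-principal-row $j$ versus a principal row $j = p(c)$ is handled exactly as for the Hadamard (invoking $\algname{ZeroColumnElim}$, justified in Section~\ref{sec:ZeroColumnElim}, when row $j$ is a principal row that cannot be re-selected), and the discussion of Section~\ref{sec:simulDeterministicMeas} justifies the early-exit branches of Step~3. So the plan is to establish the three run-time statements by a step-by-step accounting, feeding in the subroutine run-times collected in Section~\ref{sec:QFE-subroutines}.

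For $\algname{SimulateMeasX}(j)$ I would observe: Step~1 costs $\mathcal{O}(1)$; Step~2 invokes $\algname{ReselectPrincipalRow}$ at most once; Step~3 does only $\mathcal{O}(1)$ comparisons and at most one diagonal write; Steps~4--9 touch only row $j$ of $A$, the corresponding new row/column of $Q$, and a constant number of diagonal entries, hence cost $\mathcal{O}(s_j)$ (with $s_j$ here the number of non-zeros in row $j$ \emph{after} Step~2); and Step~10 invokes $\algname{ZeroColumnElim}$ at most once, or else does $\mathcal{O}(1)$ work. Substituting the bounds $\mathcal{O}(tw + w^2)$ for $\algname{ZeroColumnElim}$ and $\mathcal{O}(1)$-or-$\mathcal{O}(s_{j_\ast}(t_c{+}w_c))$ for $\algname{ReselectPrincipalRow}$, and using that $s_j = 1$ whenever $j$ is a principal row at the input, yields the bound $\mathcal{O}(s_j + tw + w^2)$, subsumed in $\mathcal{O}(nr)$; when $j$ is not a principal row, Step~2 is skipped and Step~10 is $r \gets r+1$, so the cost collapses to $\mathcal{O}(s_j + 1)$. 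For $\algname{SimulateMeasY}(j)$ the same accounting applies, except that Step~7 adds the rank-one term $\vec{\tilde a}\,\vec{\tilde a}\trans$, which has $\mathcal{O}(s_j^2)$ non-zero entries, and Step~8 calls $\algname{ReduceGramRowCol}(k)$ for each of the $\le s_j$ indices $k$ with $\tilde a_k \ne 0$, contributing a further $\mathcal{O}(s_j^2 + s_j w)$; this gives $\mathcal{O}(s_j^2 + s_j w + tw + w^2) \subseteq \mathcal{O}(nr)$, collapsing to $\mathcal{O}(s_j^2 + s_j w + 1)$ when $j$ is not a principal row.

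For the eigenstate claim, I would split on which Pauli $\ket{\psi}$ is an eigenstate of. If it is a $Z_j$-eigenstate then row $j$ of $A$ is zero, so $c = 0$ in Step~1, Step~2 and the $\algname{ZeroColumnElim}$ branch of Step~10 are skipped, and every remaining step touches only the empty row $j$ of $A$ and a constant-size extension of $Q$, giving $\mathcal{O}(1)$ (the input is modified, which is permitted since the measured outcome is random here). If it is an $X_j$- or $Y_j$-eigenstate then, by Section~\ref{sec:simulDeterministicMeas}, $\vec e_j$ is a column of $A$, so $j = p(c)$ and column $c$ has a unique non-zero entry; hence $\algname{ReselectPrincipalRow}$ halts in $\mathcal{O}(1)$ without touching $A$, and row/column $c$ of $Q$ has no off-diagonal non-zero entry, so Step~3 reaches a terminating branch after $\mathcal{O}(1)$ work. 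When the measured basis matches the Pauli in question the outcome is deterministic and that branch reads $\beta$ off $Q_{c,c}$ and $b_j$ and halts without writing to $A$, $Q$, $\vec b$, or $g$; otherwise (random outcome) it writes a single entry of $Q$ (and possibly $b_j$) and halts. In every case the procedure terminates in $\mathcal{O}(1)$ and leaves the expansion untouched precisely when the outcome is deterministic.

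I expect the main obstacle to be the second paragraph, specifically the bookkeeping in the branch where $j$ is a principal row and $\algname{ReselectPrincipalRow}$ succeeds: one must track that row $j$ can gain up to $s_{j_\ast}$ non-zero entries there, confirm that Steps~4--9 then cost only $\mathcal{O}(s_{j_\ast})$, and check that across the three sub-cases (reselection succeeds / reselection fails / deterministic early exit) the per-step costs combine into $\mathcal{O}(nr)$ in the worst case --- using $s_j, s_{j_\ast} \le s \le r$, $t_c \le t \le n-r+1$, and $w_c \le w \le r$ --- with no step double-counted.
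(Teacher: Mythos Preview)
Your proposal is correct and follows essentially the same approach as the paper: a step-by-step run-time accounting against the subroutine bounds of Section~\ref{sec:QFE-subroutines}, with the eigenstate case handled via the characterisation in Section~\ref{sec:simulDeterministicMeas}. If anything you are slightly more explicit than the paper in separating the $Z_j$-eigenstate sub-case and in flagging the ``reselection succeeds'' branch as the place where the bookkeeping is most delicate; the paper glosses over exactly this point (asserting Step~2 costs $\mathcal{O}(t_c)$ and that Step~10 dominates), so your caution there is well placed rather than a divergence.
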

\begin{proof}
    Step 1 takes $\mathcal{O}(1)$ operations, and if $j$ is not a principal row, so does Step~2.
    If $j = p(c)$ is a principal row, Step~2 instead invokes $\algname{ReselectPrincipalRow}(j,c)$, which takes time $\mathcal{O}(t_c)$.
    Note that in the case that qubit $j$ is not entangled with any other qubit $\ket{\psi}$, we have $t_c = 1$.
    If this succeeds in choosing a different principal row, then Step~3 again takes $\mathcal{O}(1)$ operations.
    Otherwise, Step~3 tests for the special case that qubit $j$ is an eigenstate of $X_j$ or $Y_j$ by testing whether $Q$ has any non-zero off-diagonal elements in row $c$.
    If not, it performs special operations to allow for a fast run-time in this special case:
    \begin{itemize}
    \item
        $\algname{SimulateMeasX}(j)$ checks whether the outcome is deterministic, which occurs when $Q_{c,c} = 2u$ for $u \in \{0,1\}$.
        If so, it sets $\beta = u$ as the measurement outcome; otherwise it selects a random measurement outcome, and updates $Q_{c,c} \gets 2\beta$ to represent the corresponding post-measurement state.
    \item
        $\algname{SimulateMeasY}(j)$ checks whether the outcome is deterministic, which occurs when $Q_{c,c} = 2u+1$ for $u \in \{0,1\}$.
        If so, it sets $\beta = u$ as the measurement outcome; otherwise it selects a random measurement outcome, and updates $Q_{c,c} \gets 2\beta+1$ to represent the corresponding post-measurement state.
    \end{itemize}
    In each case above, the procedures then stop, having taken $\mathcal{O}(1)$ time.
    If instead $Q$ does have non-zero off-diagonal elements, both procedures instead select a random measurement outcome $\beta$, and the procedure continues.
    
    In Step~4, we initialise a vector $\vec{\tilde a}$ from row $j$ of $A$, taking time $\mathcal{O}(s_j)$; Step~5 then modifies row $j$ of $A$ by setting all of its entries to $0$, except for an entry $1$ in a new $(r+1)\textsuperscript{st}$ column, which may be done at the same time as Step~4.
    In Step~6, both procedures extend $Q$ by an additional row and column of zeroes, which may in principle be done without any modification of the sparse data structure storing $Q$.
    Then Steps~7 and~8 realise different updates to the Gram matrix according to the measurement type:
    \begin{itemize}
    \item
        $\algname{SimulateMeasX}(j)$ adds up to $s_j + 1$ entries to the diagonal, and then reduces them modulo~4, taking time $\mathcal{O}(s_j)$;
    \item
        $\algname{SimulateMeasX}(j)$ adds two terms to $Q$, together having precisely $s_j^2 + 1$ non-zero entries.
        It then invokes $\algname{ReduceGramRowCol}(k)$ on all rows $1 \le k \le r$ in which those terms have non-zero entries, taking time $\mathcal{O}(s_j w)$.
    \end{itemize}
    After this, Step~9 takes time $\mathcal{O}(1)$.
    If row $j$ was not initially a principal row of $A$, or if Step~2 succeeded in selecting a new principal row, then Step~10 simply updates $r$ in time $\mathcal{O}(1)$; otherwise, column $c$ is now zero and must be removed using $\algname{ZeroColumnElim}(c)$, taking time $\mathcal{O}(tw + w^2)$.
    
    In the above, if $j$ was not a principal column, the two procedures $\algname{SimulateMeasX}(j)$ and $\algname{SimulateMeasX}(j)$ respectively take time $\mathcal{O}(s_j + 1)$ and $\mathcal{O}(s_j^2 + s_j w + 1)$.
    Note that, if row $j$ of the initial expansion matrix is entirely zero, this is $\mathcal{O}(1)$ in both cases, matching the run-time for the case that qubit $j$ is initially in an $X$- or $Y$-basis state.
    If instead $j$ is a principal row of the original expansion matrix $A$, and qubit $j$ is entangled with some other qubits in $\ket{\psi}$, the run-times of these subroutines takes an additional amount of time which is domiated by Step~10, with respective totals of $\mathcal{O}(s_j + tw + w^2)$ and $\mathcal{O}(s_j^2 + s_j w + t w + w^2)$.
\end{proof}

\subsection{Summary of simulation complexities of stabiliser operations}
\label{runtimes}

Ignoring refinements which are possible when the expansion matrix $A$ or Gram matrix $Q$ are sparse, we may summarise the run-time bounds for the subroutines presented above as follows:

\vspace*{-1ex}
\paragraph{Pauli operations ---\!\!\!}
    Pauli $X$ operations may be simulated in $\mathcal{O}(1)$ time, and both $Y$ and $Z$ operations may be simulated in $\mathcal{O}(r)$ time, governed by the number of non-zero entries in the row of the expansion matrix $A$ corresponding to the qubit which these operators act upon.

\vspace*{-1ex}
\paragraph{Diagonal operations ---\!\!\!}
    The diagonal $S$ and $\mathrm CZ$ operations may both be simulated in $\mathcal{O}(r^2)$ time, governed by the number of non-zero entries in the row(s) of the expansion matrix $A$ for the qubit(s) which these operations act upon.

\vspace*{-1ex}
\paragraph{Hadamard and controlled-NOT operations ---\!\!\!}
    The Hadamard operation may be simulated in time $\mathcal{O}(nr)$, or in time $\mathcal{O}(r)$ when the qubit it acts on does not correspond to a principal row of the expansion matrix $A$.
    Similarly, the controlled-NOT operation may be simulated in time $\mathcal{O}(nr)$, or in time $\mathcal{O}(r)$ when its target qubit does not correspond to a principal row of $A$.
    In both cases, the $\mathcal{O}(nr)$ bound when acting on qubits corresponding to principal rows, arises from performing a change of variables to maintain the expansion matrix in principal row form; the factor of $n$ in particular arises from an $n - r + 1$ bound on the number of non-zero entries in each column.
    
\vspace*{-1ex}
\paragraph{Pauli observable measurements ---\!\!\!}
The result of measuring a qubit $j$ which is unentangled from any others can be computed in time $\mathcal{O}(1)$.
In particular, any single-qubit measurement with a deterministic outcome can be simulated in constant time.
Apart from this special case, each of these measurement operations can be simulated in time $\mathcal{O}(nr)$ in general, dominated by the time required to maintain the principal row form; $X$- and $Y$-basis measurements can be performed more efficiently (in time $\mathcal{O}(r)$ and $\mathcal{O}(r^2)$ respectively) when performed on a qubit which \emph{does not} correspond to a principal row of $A$, and $Z$-basis measurements can be performed more efficiently (in time $\mathcal{O}(n)$) when performed on a qubit which \emph{does} correspond to a principal row of $A$.

\bigskip\noindent
In each case, $r$ is bounded above by $n$ as a result of maintaining the principal row form.
If we abandon this requirement, for example to allow controlled-NOT gates to be universally simulatable in time $\Theta(r)$, then we must do more work when performing a Hadamard or $X$- or $Y$-basis measurement  (potentially up to $\mathcal{O}(n^3)$ to perform Gaussian elimination).

\section{Simulation complexity of composite procedures}
\label{sec:complexity-procedures}

We now prove a number of results regarding the complexity of simulating procedures consisting of multiple stabiliser operations.

\subsection{Simulating stabiliser circuits in general}
\label{sec:generalCircuitSim}

The above Lemmata allow us to show the following result, which is the most general summary regarding the effectiveness of our techniques for weak simulation of stabiliser circuits:
\begin{thm}
\label{thm:weakSimulationAsymptotic}
A stabiliser circuit consisting of $M$ stabiliser operations from the set of operations described in Section~\ref{sec:elementaryStabiliserOpns}, whose initial state is expressed as a quadratic form expansion with an expansion matrix $A$ in principal row form, can be weakly simulated in $\mathcal{O}(Mn^2)$ operations.
\end{thm}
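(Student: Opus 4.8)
The plan is to argue by induction on the number $m$ of operations simulated so far, maintaining the invariant that the stored tuple $\mathcal E = (n,r,g,Q,A,\vec b,p)$ is a valid quadratic form expansion as in Eqn.~\eqref{eqn:QFE-specific} — with $A$ an $n\times r$ matrix in principal row form with index map $p$ — which represents the state of the circuit on the branch determined by the measurement outcomes sampled so far, and moreover that $0 \le r \le n$. The base case $m=0$ is exactly the hypothesis on the initial state.

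For the inductive step, the next operation is one of the elementary gates $X$, $Y$, $Z$, $H$, $S$, $\mathrm{CZ}$, $\mathrm{CX}$ of Section~\ref{sec:elementaryStabiliserOpns}, or a single-qubit $X$-, $Y$-, or $Z$-basis measurement, possibly classically controlled on earlier outcomes. Resolving the classical control costs $\mathcal{O}(1)$ (a lookup in the stored bit-array of past outcomes), after which the corresponding subroutine from Section~\ref{sec:simulCliffOpns} is invoked. The per-operation Lemmas of that Section establish, in each case, both that the resulting tuple is again a quadratic form expansion with $A$ in principal row form representing the correct post-operation state, and — in the measurement cases — that the returned bit $\beta$ equals the forced value when the outcome is deterministic and is otherwise uniformly distributed on $\{0,1\}$. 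Since the invariant $\mathrm{rank}\,A = r$ guarantees that the terms of the expansion in Eqn.~\eqref{eqn:QFE-specific} are pairwise orthogonal, ``uniformly random when not deterministic'' is precisely the correct conditional distribution of that Pauli measurement given the history (each of the two projectors captures exactly half of $\|\ket{\psi}\|^2$). Composing these conditional distributions over the measurements in the circuit, the record of sampled outcomes is distributed exactly as the output of the circuit, which is what weak simulation demands.

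For the run-time, note first that since $A$ is $n\times r$ of rank $r$ we have $r\le n$ throughout; the sparsity parameters then obey $s,w\le r$ and $t\le n-r+1\le n$, so every bound tabulated in Section~\ref{runtimes} — whether $\mathcal{O}(1)$, $\mathcal{O}(r)$, $\mathcal{O}(r^2)$, or $\mathcal{O}(nr)$ — is contained in $\mathcal{O}(n^2)$. The temporary growth of $A$ to $r{+}1\le n{+}1$ columns during a Hadamard or an $X$-/$Y$-basis measurement does not affect this (and is the reason $p$ is stored with length $n{+}1$). Thus each of the $M$ operations costs $\mathcal{O}(n^2)$, and $\mathcal{O}(Mn^2)$ in total.

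I expect the only real obstacle to be bookkeeping rather than mathematics: one must check carefully that the invariant is genuinely restored after the operations that can take $A$ temporarily out of principal row form — the Hadamard, the controlled-NOT, and the $X$- and $Y$-basis measurements — and in particular that $r$ can never exceed $n$ across these (when $r=n$, every row of $A$ is already a principal row, so the branch that would have increased $r$ is not taken, and $\algname{ZeroColumnElim}$ is invoked instead). This is exactly what the cited subroutines $\algname{ReselectPrincipalRow}$ and $\algname{ZeroColumnElim}$ and the per-operation Lemmas already guarantee, so no genuinely new argument is needed beyond assembling them.
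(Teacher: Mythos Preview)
Your proposal is correct and follows essentially the same approach as the paper, which simply cites Lemmas~\ref{lemma:simulatePaulis}--\ref{lemma:SimulateXYMeas} to bound each operation by $\mathcal{O}(nr)\subseteq\mathcal{O}(n^2)$ and sums. You have spelled out more explicitly the invariant being maintained, the correctness of the sampled distribution, and the bookkeeping ensuring $r\le n$, but these are exactly the ingredients the paper takes as established by those per-operation lemmas.
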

\begin{proof}
    As each of the $M$ stabiliser operations in the circuit can be simulated in time $\mathcal{O}(nr) \subseteq \mathcal{O}(n^2)$ by Lemmas~\ref{lemma:simulatePaulis}--\ref{lemma:SimulateXYMeas}, the Theorem follows.
\end{proof}

\noindent
This asymptotic result in principle matches the worst-case performance of each of the stabiliser formalism~\cite{Aaronson2004}, graph-state-based representations~\cite{Anders2006}, and the phase-sensitive Clifford simulator of Bravyi~\emph{et al}~\cite{Bravyi2019}. 
While each of those other techniques have faster asymptotic performance for certain operations, the worst-case performance described in Theorem~\ref{thm:weakSimulationAsymptotic} also obscures faster performance of different operations under various conditions of sparsity.

In common with any weak simulation technique, the result above may easily be `upgraded' in certain cases to a result for \textit{strong} simulation (in which we are interested in the probability of obtaining certain outcomes for a subset of the measurement operations).
Specifically: suppose that we are interested in the probability that some $k$ of the measurements yield an outcome $\boldsymbol \beta \in \{0,1\}^k$.
Let us say that a measurement on a qubit $j$ is `of interest' if it is one of the measurements whose outcomes we consider.
We may recursively define the `history' of such a measurement --- representing a sort of `causal past' of the measurement --- as:
\begin{enumerate}[label=\parit{\roman*}]
\item
    any operation which is performed on the measured qubit $j$, prior to that measurement of interest, or
\item
    an operation (possibly classically controlled) which acts on some other qubit $j'$, before an operation which also acts on $j'$ and is also in the history of the measurement of interest on $j$;
\item
    a measurement operation whose outcome is used as a classical control, for an operation in the history of the measurement of interest on $j$. 
\end{enumerate}
If every measurement in the history of some measurement of interest, is itself a measurement of interest, we may compute the probability of the outcome $\boldsymbol \beta$ by sequentially computing the probability that each measurement of interest produces a particular outcome.
Without imposing this constraint, the strong simulation problem may in some cases be tractible, but in general is as difficult as strong simulation of a Hadamard+Toffoli circuit, which is $\textbf{\texttt\#P}$-hard~\cite[Theorem~2]{JozsaClassical}.

\subsection{Improved simulation of multiple `terminal' measurements}
\label{sec:parallelMeasurements}

Quadratic form expansions allow for more efficient techniques to strongly simulate `terminal' measurements: that is, measurements on distinct qubits, which are the last measurements of interest to be performed on a state (so that the post-measurement state is not needed to compute any other outcomes).
For the sake of simplicity, we may consider the case where all of the measurements to be performed are $Z$-basis measurements, by reducing $X$-basis and $Y$-basis measurements to $Z$-basis measurements preceded (and followed) by appropriate single-qubit unitaries.
We may consider improvements in the run-time for strong simulation of the measurement stage itself, by reduction to solving a system of equations involving the expansion matrix $A$ (or a sub-matrix of $A$) for the state just prior to measurement.

\begin{thm}
\label{thm:strongSubsetMeasurement}
Let $\ket{\psi}$ be represented by a quadratic form expansion as in Eqn.~\eqref{eqn:QFE-specific}, on which we perform $0 \le k \le n$ measurements in the $Z$-basis on distinct qubits.
We may determine the probability of the outcomes of those measurements yielding a particular string $\boldsymbol \beta \in \{0,1\}^k$ in time $\mathcal{O}(k^2 n)$.
\end{thm}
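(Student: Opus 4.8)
The plan is to exploit the fact that, because the expansion matrix $A$ is maintained in principal row form and hence has rank $r$, the standard-basis terms $\ket{A\vec x \oplus \vec b}$ in Eqn.~\eqref{eqn:QFE-specific} are pairwise orthogonal (as discussed in Section~\ref{sec:roleOfRank}). As a consequence, no interference among terms can occur, and the outcome probabilities reduce to counting solutions of a linear system modulo~$2$. Writing $j_1,\ldots,j_k$ for the measured qubits, the projector corresponding to the outcome $\boldsymbol\beta$ selects exactly those $\vec x \in \{0,1\}^r$ for which qubit $j_m$ of $A\vec x \oplus \vec b$ equals $\beta_m$ for every $1 \le m \le k$; that is, for which $M\vec x \equiv \vec c \pmod 2$, where $M$ is the $k \times r$ submatrix of $A$ consisting of rows $j_1,\ldots,j_k$, and $\vec c = \bigl(\beta_1 \oplus b_{j_1},\ \ldots,\ \beta_k \oplus b_{j_k}\bigr)\trans$. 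Since the surviving terms $\ket{A\vec x \oplus \vec b}$ are orthonormal, the phases $i^{\vec x\trans Q \vec x}$ have unit modulus, and $C = \tau^g/\sqrt{2^r}$ satisfies $|C|^2 = 2^{-r}$, the probability of obtaining $\boldsymbol\beta$ equals $2^{-r}$ times the number of solutions $\vec x \in \{0,1\}^r$ of $M\vec x \equiv \vec c$.

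Next I would invoke the elementary fact that a linear system $M\vec x \equiv \vec c \pmod 2$ in $r$ unknowns is either inconsistent or has exactly $2^{\,r - \mathrm{rank}\,M}$ solutions (forming a coset of the nullspace of $M$ modulo~$2$). Hence the probability of $\boldsymbol\beta$ is $0$ if the system is inconsistent, and $2^{-\mathrm{rank}\,M}$ otherwise; in particular it is always either $0$ or a power of $\tfrac12$ with exponent at most $k$, and one can check as a sanity test that these values sum to $1$ over all $\boldsymbol\beta \in \{0,1\}^k$. So the task reduces to extracting $M$ and $\vec c$, and then determining both $\mathrm{rank}\,M$ and the consistency of $M\vec x \equiv \vec c$.

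For the complexity, I would form the augmented matrix $[\,M \mid \vec c\,]$ of shape $k \times (r{+}1)$ by reading rows $j_1,\ldots,j_k$ of $A$ and the corresponding bits of $\vec b$, which costs $\mathcal{O}(kr) \subseteq \mathcal{O}(kn)$ (and less when those rows of $A$ are sparse). Performing Gaussian elimination over $\mathrm{GF}(2)$ on this augmented matrix uses at most $k$ pivot steps, each eliminating from at most $k{-}1$ other rows with each elimination touching $r{+}1$ entries, for a total of $\mathcal{O}(k^2 r) \subseteq \mathcal{O}(k^2 n)$ operations (using $r \le n$); the resulting echelon form simultaneously exhibits $\mathrm{rank}\,M$ as the number of pivots among the first $r$ columns, and the consistency of the system as the absence of a pivot in the final column. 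Returning $2^{-\mathrm{rank}\,M}$ or $0$ accordingly completes the procedure within the stated bound.

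The main thing to be careful about is the reduction itself: one must confirm that orthogonality of the terms (which rests on $A$ having full rank $r$) genuinely rules out all destructive or constructive interference, so that the probability is a pure count of solutions rather than an amplitude-weighted sum; and one must run the Gaussian elimination only on the $k \times r$ submatrix $M$ --- not on the full $n \times r$ matrix $A$, which would cost $\mathcal{O}(nr^2)$ --- while still correctly extracting the rank and consistency data. The probability arithmetic and the operation-counting are otherwise routine once this reduction is in place.
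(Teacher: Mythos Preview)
Your proposal is correct and follows essentially the same approach as the paper: form the $k \times r$ submatrix of $A$ on the measured rows, solve the augmented system by Gaussian elimination in $\mathcal{O}(k^2 r) \subseteq \mathcal{O}(k^2 n)$, and return $0$ if inconsistent or $2^{-\mathrm{rank}}$ otherwise. Your write-up is in fact more explicit than the paper's about why orthogonality of the terms reduces the probability to a solution count, but the argument and complexity analysis are the same.
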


\begin{proof}
Consider the sub-matrix $A'$ of $A$, obtained by restricting to the rows of $A$ corresponding to the qubits whose outcomes we are interested in.
This gives rise to a system of $k$ equations in $r$ unknowns $A' \vec x' = (\boldsymbol \beta \oplus \vec b)$, which may be solved by Gaussian elimination in time $\mathcal{O}(k^2 r) \subseteq \mathcal{O}(k^2 n)$.
If this system of equations is unsatisfiable, then $\boldsymbol \beta$ is not a possible measurement outcome.
Otherwise, the probability with which $\boldsymbol \beta$ occurs is given by $2^{-r'}$, where $r' \ge 0$ is the rank of $A'$ (\emph{i.e.},~the number of bits required to determine the outcome $\boldsymbol \beta$ among the possible outcomes on the measured qubits).
\end{proof}

\noindent
We would often expect better performance than $\mathcal{O}(k^2 n)$ for strong simulation of the measurements in practise.
An elementary observation is that Gaussian elimination in fact takes time $\mathcal{O}(k^2 r)$, for the rank parameter $r$ which is itself merely bounded by $n$; and in fact this may be easily sharpened to ${\mathcal{O}(\kappa \!\: k  \!\: r)}$, where $\kappa = {\min\,\{k,r\}}$, as $\kappa$ is a bound on the rank of $A'$.
Furthermore, in the case $k \approx n$, we would expect the problem to become simpler due to the involvement of principal rows corresponding to some of the measured qubits.
The $k$ qubits whose measurement outcomes we are interested in, will include some number $0 \le k_p \le k$ of qubits corresponding to principal rows of the expansion matrix $A$.
The presence of such rows may be used to speed up the computation of a reduced row-echelon form for $A'$, as the corresponding rows of $A'$ have only one non-zero entry to account for in row reduction.%
    \footnote{%
        One may alternatively consider a pre-processing stage in which one uses back-substitution of the values of the indices $x_j$ for such principal rows, fixing them to the corresponding entry in $\boldsymbol \beta \oplus \vec b$, reducing the number of rows and columns of the matrix $A'$ to be considered in doing so.
    }
The complexity in this case would then be $\mathcal{O}(k_p k + \tilde \kappa \!\;\tilde k \!\;\tilde r)$, where $\tilde k = k - k_p$, $\tilde r = r - k_p$, and $\tilde \kappa = \min\,\{\tilde k, \tilde r\}$.
When $k_p = k$ (\emph{i.e.},~all of the measured qubits correspond to principal rows of $A$) or when $k_p = r \le k$ (\emph{i.e.},~when all of the principal rows of $A$ correspond to measured qubits), this leads to $\tilde \kappa = 0$, providing a potentially significant drop in run-time complexity.
In particular:

\begin{thm}
\label{thm:strongTotalMeasurement}
Let $\ket{\psi}$ be represented by a quadratic form expansion as in Eqn.~\eqref{eqn:QFE-specific}, on which we measure all but $0 \le \ell \le n$ of the qubits in the $Z$-basis.
We may determine the probability of the outcomes of those measurements yielding a particular string $\boldsymbol \beta \in \{0,1\}^{n-\ell}$ in time $\mathcal{O}(n^2 + \ell^2 n)$.
\end{thm}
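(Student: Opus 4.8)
The plan is to refine the Gaussian-elimination algorithm behind Theorem~\ref{thm:strongSubsetMeasurement} in the manner sketched in the remarks immediately following it, supplying one combinatorial observation which exploits the fact that $A$ is in principal row form. Write $k = n - \ell$ for the number of measured qubits, and let $A'$ be the sub-matrix of $A$ consisting of the $k$ rows indexed by those qubits, with $\vec b'$ the corresponding restriction of $\vec b$. As in the proof of Theorem~\ref{thm:strongSubsetMeasurement}, the probability of the outcome $\boldsymbol\beta$ is $0$ if the affine system $A'\vec x' = \boldsymbol\beta \oplus \vec b'$ is infeasible over $\mathbb F_2$, and otherwise is $2^{-r'}$ where $r'$ is the rank of $A'$; so it suffices to bound the time to bring $A'$ into a form from which $r'$ and feasibility can be read off.

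The key observation is this. Since $A$ is in principal row form with index map $p\colon\{1,\dots,r\}\to\{1,\dots,n\}$, the $r$ principal rows $p(1),\dots,p(r)$ are distinct qubit indices; as only $\ell$ qubits are left unmeasured, at most $\ell$ of these principal rows can belong to unmeasured qubits. Hence the number $k_p$ of principal rows belonging to \emph{measured} qubits satisfies $k_p \ge r - \ell$, so that $\tilde r := r - k_p \le \ell$, and also $\tilde\kappa := \min\{\tilde k, \tilde r\} \le \ell$ where $\tilde k := k - k_p$.

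Now carry out the back-substitution pre-processing described in the footnote following Theorem~\ref{thm:strongSubsetMeasurement}. For each measured qubit $j = p(c)$, row $j$ of $A'$ equals $\vec e_c\trans$, so $x_c$ is forced to the corresponding entry of $\boldsymbol\beta \oplus \vec b'$; we record this, delete row $j$ and column $c$, and propagate the value of $x_c$ into the (at most $k$) other rows of $A'$ that have a non-zero entry in column $c$, charging $\mathcal{O}(1)$ per update and reporting infeasibility if any forced equation is violated. Eliminating all $k_p$ such variables costs $\mathcal{O}(k_p k) \subseteq \mathcal{O}(n^2)$, since $k_p \le \min\{k,r\}$ and $k,r \le n$. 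What remains is a system with $\tilde k$ equations in $\tilde r$ unknowns, solvable by Gaussian elimination in time $\mathcal{O}(\tilde\kappa\,\tilde k\,\tilde r) \subseteq \mathcal{O}(\ell \cdot n \cdot \ell) = \mathcal{O}(\ell^2 n)$, using $\tilde\kappa, \tilde r \le \ell$ and $\tilde k \le k \le n$. Summing the two contributions gives the stated $\mathcal{O}(n^2 + \ell^2 n)$ bound; if the reduced system is feasible, $r'$ equals $k_p$ plus its rank, and the probability is $2^{-r'}$.

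I expect the main obstacle to be making the pre-processing step fully rigorous: verifying that deleting the $k_p$ principal rows and columns genuinely leaves a system with exactly $\tilde k$ rows and $\tilde r$ columns, that propagating each forced variable costs only $\mathcal{O}(1)$ per affected row under the sparse data structures for $A$, and that the feasibility test is correctly interleaved with the propagation. This is the substance of the footnoted remark after Theorem~\ref{thm:strongSubsetMeasurement}, which I would make explicit here; the counting bound $k_p \ge r - \ell$ is the only genuinely new ingredient, and is immediate.
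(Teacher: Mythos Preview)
Your proposal is correct and follows essentially the same approach as the paper's proof: both invoke the pre-processing/back-substitution analysis discussed (and footnoted) immediately before the theorem, establish the pigeonhole bound $k_p \ge r - \ell$, and then bound the total cost as $\mathcal{O}(k_p k + \tilde\kappa\,\tilde k\,\tilde r) \subseteq \mathcal{O}(n^2 + \ell^2 n)$. The paper's write-up is terser (it simply cites ``the analysis above'' and plugs in the inequalities), while you spell out the back-substitution step more explicitly, but the argument is the same.
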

\begin{proof}
    Following the analysis above, take $k = n - \ell$ and $k_p \ge r - \ell$, so that $\tilde k = k - k_p \le n - r$, and $\tilde r = r - k_p \le \ell.$
    Then $\tilde \kappa \le \ell$ as well.
    By definition, we have $k_p k \le (n-\ell)r = nr - \ell r$; then strong simulation of ${n - \ell}$ qubits can be done in time ${\mathcal{O}(k_p k + \tilde \kappa \;\! \tilde k \;\! \tilde r)}
    = \mathcal{O}(nr - \ell r + \ell^2 n - \ell^2 r) \subseteq \mathcal{O}(n^2 + \ell^2 n)$.
\end{proof}

\noindent
Thus, for circuits with a final round of $k$ parallel $Z$-basis measurements where $k \in \mathcal{O}(1)$ or $k = n - \mathcal{O}(\sqrt{n})$, we may perform a strong simulation of these measurements with an amortised cost of $\mathcal{O}(n)$ per qubit, whether or not the outcome is deterministic.%
\footnote{%
    In this setting of total final measurements, our results may be compared to those of Guan and Regan~\cite{GuanRegan2019}: we remark on this comparison in Section~\ref{disc}.
    }
In this spirit, we add a further observation on performing a round of near-total measurement, for weak simulation:
\begin{thm}
\label{thm:weakTotalMeasurement}
Let $\ket{\psi}$ be represented by a quadratic form expansion as in Eqn.~\eqref{eqn:QFE-specific}, on which we measure all but $0 \le \ell \le n$ of the qubits in the $Z$-basis.
Then these measurements may be weakly simulated in time $\mathcal{O}((\ell{+}1) n^2)$.
\end{thm}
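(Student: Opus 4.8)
The plan is to realise the $n-\ell$ standard-basis measurements one at a time by repeated calls to $\algname{SimulateMeasZ}$, but scheduled in a carefully chosen order. Since all of these measurements act on distinct qubits, they commute and the post-measurement state does not depend on the order in which we simulate them, so we are free to reorder. The starting observation is a counting fact: the principal index map $p$ is injective, so $A$ has exactly $r$ principal rows, and at most $\ell$ of them can correspond to the $\ell$ unmeasured qubits; hence at least $r-\ell$ principal rows correspond to qubits that are to be measured. Writing $k_p \ge r-\ell$ for the number of principal rows among the measured qubits, the schedule is: \emph{(Phase 1)} measure all $k_p$ qubits that correspond to principal rows of $A$; then \emph{(Phase 2)} measure the remaining $(n-\ell)-k_p$ qubits.

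For Phase 1 I would argue, by tracing through $\algname{SimulateMeasZ}(j)$ in the case $j = p(c)$, that each such measurement (i) produces a uniformly random outcome, (ii) decreases the rank by one while leaving every \emph{other} principal row of $A$ intact as a principal row of the new expansion matrix --- this is exactly the behaviour recorded for $\algname{FixFinalBit}$ just after Eqn.~\eqref{eqn:bitFixedQFE-reduced} --- and (iii) runs in time $\mathcal{O}(t+w) \subseteq \mathcal{O}(n)$, since row $j$ equals $\vec e_c\trans$ and so has a single nonzero entry (so that the internal call to $\algname{MakePrincipal}$ is $\mathcal{O}(1)$), invoking the principal-row clause of Lemma~\ref{lemZMeas}. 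Point (ii) guarantees that the set of measured qubits that are principal rows only ever shrinks as Phase 1 proceeds (it never gains members), so all $k_p$ of them can indeed be measured this way; the cost of Phase 1 is then $\mathcal{O}(k_p\, n) \subseteq \mathcal{O}(rn) \subseteq \mathcal{O}(n^2)$, and afterwards the rank has dropped to $r - k_p \le \ell$.

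For Phase 2 I would use the general bound of Lemma~\ref{lemZMeas}, under which $\algname{SimulateMeasZ}(j)$ costs $\mathcal{O}(s_j t + s_j w)$. At the start of Phase 2 the rank is at most $\ell$, and it can only decrease further; hence throughout Phase 2 we have $s_j, w \le \ell$ while $t \le n$, so each of the at most $n - \ell \le n$ remaining measurements costs $\mathcal{O}(n\ell)$ (with deterministic ones costing $\mathcal{O}(1)$). Phase 2 therefore costs $\mathcal{O}(n^2\ell)$. Adding the two phases, and using $\ell \le n$, the total is $\mathcal{O}(n^2 + n^2\ell) = \mathcal{O}((\ell{+}1)n^2)$; the post-measurement quadratic form expansion is produced along the way, since each $\algname{SimulateMeasZ}$ call maintains it.

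The main obstacle is the Phase 1 bookkeeping: one must be careful to confirm, from the steps of $\algname{SimulateMeasZ}$ and the rank-reduction analysis of Section~\ref{sec:fixBits}, both that measuring a principal-row qubit genuinely leaves the remaining columns in principal row form --- so that every Phase 1 measurement really does hit the fast $\mathcal{O}(n)$ branch of Lemma~\ref{lemZMeas} rather than the $\mathcal{O}(nr)$ branch --- and that none of the Phase 1 operations inflate the row/column supports beyond the bounds used above. Everything else is a direct application of the per-operation run-times established in Section~\ref{runtimes}.
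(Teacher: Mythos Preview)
Your proposal is correct and takes essentially the same approach as the paper: both identify the $k_p \ge r-\ell$ principal-row qubits among those to be measured, dispatch them first at cost $\mathcal{O}(n)$ each (driving the rank down to at most $\ell$), and then handle the remaining measurements at cost $\mathcal{O}(n\ell)$ each. The only cosmetic difference is that the paper unpacks Phase~1 explicitly as a batch of $\algname{ReindexSwapColumns}$ calls followed by $k_p$ calls to $\algname{FixFinalBit}$, whereas you invoke $\algname{SimulateMeasZ}$ directly and appeal to the principal-row clause of Lemma~\ref{lemZMeas}; since $\algname{SimulateMeasZ}$ on a principal row performs exactly those two steps internally (the $\algname{MakePrincipal}$ call being trivial when $s_j=1$), the two descriptions are mechanically equivalent.
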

\begin{proof}
    For each of the $k_p$ columns $c$, for which the principal row $j = p(c)$ corresponds to a measured qubit, we swap column $c$ with one of the last $k_p$ columns in order, using $\algname{ReindexSwapColumns}$.
    Doing this for $k_p$ such columns will take time $\mathcal{O}(k_p n)$ in total.
    Having done so, we may simulate the measurements on the corresponding principal rows by selecting measurement outcomes at random and performing $\algname{FixFinalBit}$ a total of $k_p$ times, with run-time cost $\mathcal{O}(k_p n)$ again.
    This leaves us with a quadratic form expansion with $\tilde r = r - k_p$ columns: we may simulate the final $\tilde k = k - k_p$ measurements by repeatedly calling $\algname{SimulateMeasZ}$, taking time ${\mathcal{O}(\tilde k \;\! \tilde r \;\! n)}$.
    The total run-time of this procedure is then ${\mathcal{O}(k_p n + \tilde k \;\! \tilde r \;\! n)}$.
    If $k = n - \ell$ and $k_p \ge r - \ell$, we then have $\tilde k \le n - r$ and $\tilde r \le \ell$, so that $k_p n \le n^2$ and $\tilde k \;\! \tilde r \;\! n \le \ell n^2$.
\end{proof}
\noindent
Thus, for $\ell \in \mathcal{O}(1)$, we have an amortised cost of $\mathcal{O}(n)$ for weak simulation of $n - \ell$ parallel $Z$-basis measurements, regardless of whether the outcomes are deterministic; more generally, for any $0 < \ell \ll n$, we have an amortised cost of $\mathcal{O}(\ell n)$ for parallel $Z$-basis measurements.

\subsection{Simulation of stabiliser measurements}
\label{stabmeas}

A particularly useful feature of our techniques is that any single-qubit measurement whose outcome is in principle deterministic, can be simulated in constant time.
This is potentially significant for an important use-case of stabiliser circuit simulation: prototyping and simulating error correction procedures under the influence of a Pauli noise model.

The most prominent error correction procedures are stabiliser codes, whose syndrome measurement procedures correspond to measuring \emph{multi-qubit} Pauli operators as observables, \emph{e.g.},~measuring ${Z {\otimes} Z {\otimes} Z {\otimes} Z}$ as an observable.
In practice, such observable measurements would be performed indirectly, by interacting a `syndrome qubit' with some $k > 1$ `code' qubits using Clifford operations, and then measuring the data qubit.
These syndrome qubits are initially prepared independently of the other qubits in the computation.
Using a quadratic form expansion representation, these syndrome qubits would therefore correspond to rows and columns of $A$ and $Q$ which have a constant number of non-zero elements.
By taking account of this fact, we may show that the syndrome measurements could be simulated in time $\mathcal{O}(kn)$ on an arbitrary state of the rest of the system (regardless of whether the other rows and columns of $A$ and $Q$ are sparse):
\begin{thm}
\label{thm:stabiliserMeasurement}
    Let $\ket{\psi}$ be represented by a quadratic form expansion as in Eqn.~\eqref{eqn:QFE-specific}, which contains at least one `syndrome' qubit, disentangled from the others, which is set aside for the purpose of facilitating multi-qubit measurements.
    Suppose that $\ket{\psi}$ is a $\pm 1$-eigenvector of some multi-qubit Pauli operator $P$ acting on $1 \le k \le n$ qubits.
    Then the outcome of a $P$-measurement on $\ket{\psi}$ can be computed in time $\mathcal{O}(kn)$.
\end{thm}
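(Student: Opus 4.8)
The plan is to read the measurement outcome directly off the quadratic form expansion, exploiting the principal row form rather than simulating the syndrome-extraction circuit. First I would write $P = i^\lambda X^{\vec u} Z^{\vec v}$ in standard symplectic form, where $\vec u,\vec v \in \{0,1\}^n$ are supported on the $\le k$ qubits on which $P$ acts non-trivially and $\lambda \in \mathbb{Z}_4$ counts (modulo~$4$) the tensor factors $\sigma_i = Y$; computing $\lambda,\vec u,\vec v$ costs $\mathcal{O}(k)$. Then $P \ket{A\vec x \oplus \vec b} = i^\lambda (-1)^{\vec v\trans(A\vec x \oplus \vec b)} \ket{A\vec x \oplus \vec b \oplus \vec u}$, so expanding $\bra\psi P \ket\psi$ over Eqn.~\eqref{eqn:QFE-specific}, and using the crucial fact that $A$ in principal row form has $\mathrm{rank}\,A = r$ (so $\mathrm{null}\,A = \{\vec 0\}$ and distinct terms are orthogonal), the overlaps $\bra{A\vec x'\oplus\vec b}P\ket{A\vec x\oplus\vec b}$ vanish unless $\vec x' = \vec x \oplus \vec w$ for the \emph{unique} $\vec w$ with $A\vec w \equiv \vec u \pmod 2$ --- such a $\vec w$ existing precisely because the eigenvector hypothesis forces $\vec u$ into the column space of $A$. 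This collapses the double sum to a single sum whose summand, again by the eigenvector hypothesis, is the constant $\pm1$; evaluating it at $\vec x = \vec 0$ exhibits the measurement eigenvalue as $i^{\,\lambda - \vec w\trans Q \vec w}(-1)^{\vec v\trans \vec b}$, and the outcome bit is then $0$ or $1$ according to whether this value is $+1$ or $-1$.

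It then remains to compute $\vec w$ and the scalar $\vec w\trans Q \vec w$ within the stated budget. For this I would use the principal row structure directly: restricting $A\vec w \equiv \vec u$ to a principal row $j = p(c)$ gives $w_c = u_{p(c)}$ at once, since $\vec e_{p(c)}\trans A = \vec e_c\trans$; thus $\vec w$ is obtained in a single $\mathcal{O}(n)$ pass over $p$ (aided by an $\mathcal{O}(n)$-initialised Boolean table marking $\mathrm{supp}(\vec u)$), and, since $p$ is injective and $\vec u$ is supported on $\le k$ qubits, $\vec w$ has at most $k$ nonzero coordinates. One may optionally certify $A\vec w \equiv \vec u$ by summing the corresponding $\le k$ columns of $A$, at cost $\mathcal{O}(kt) \subseteq \mathcal{O}(kn)$. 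Writing $T = \mathrm{supp}(\vec w)$ with $|T|\le k$, we have $\vec w\trans Q\vec w \equiv \sum_{j\in T} Q_{j,j} + 2\!\sum_{j<\ell;\,j,\ell\in T} Q_{j,\ell} \pmod 4$, which --- marking $T$ in another $\mathcal{O}(n)$-initialised table --- is evaluated by iterating over the stored nonzero entries of rows $j \in T$ of $Q$, a total of $\mathcal{O}(\sum_{j\in T} w_j) \subseteq \mathcal{O}(kw) \subseteq \mathcal{O}(kn)$ work; note only the diagonal of $Q$ is needed mod~$4$ and the off-diagonal only mod~$2$, matching what the sparse representation maintains (Lemma~\ref{lemma:Gram-matrix-mixed-moduli}). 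Finally $\vec v\trans\vec b$ costs $\mathcal{O}(k)$. Summing, and using $k\ge 1$ so that $\mathcal{O}(n) \subseteq \mathcal{O}(kn)$, the whole computation is $\mathcal{O}(kn)$; and since the outcome is deterministic, the post-measurement state equals $\ket\psi$ and needs no update.

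The main thing to get right is the book-keeping for the two moduli in the phase $i^{\,\lambda - \vec w\trans Q\vec w}(-1)^{\vec v\trans\vec b}$: one must verify that the integer $\lambda - \vec w\trans Q\vec w$ is even --- which is part of the content of the hypothesis that $\ket\psi$ is a $\pm1$-eigenvector of $P$, and can be checked on the fly --- so that the eigenvalue is genuinely $\pm1$, with an inconsistency flagged otherwise. I would also remark that this argument does not in fact use the `syndrome qubit' hypothesis; that hypothesis appears only because this is how multi-qubit Pauli measurements arise in practice, and an alternative route --- simulating the controlled-$P$ gate, a Hadamard, and a $Z$-basis measurement on the syndrome qubit --- does lean on the syndrome qubit's constant-weight rows and columns of $A$ and $Q$ to stay within $\mathcal{O}(kn)$, but is messier because the controlled-$P$ step temporarily breaks the principal row form of $A$.
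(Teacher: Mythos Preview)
Your argument is correct and takes a genuinely different route from the paper's. The paper proves the theorem by \emph{simulating the syndrome-extraction circuit}: it prepares the dedicated syndrome qubit $a$ in $\ket{\texttt+}$ (time $\mathcal{O}(1)$), applies the $k$ controlled-Pauli gates with control on $a$ (each in time $\mathcal{O}(n)$, because the row of $A$ for $a$ remains $\mathcal{O}(1)$-sparse throughout), and then performs the final deterministic single-qubit $X$-basis measurement on $a$ in time $\mathcal{O}(1)$ via Section~\ref{sec:simulDeterministicMeas}. You instead bypass the circuit entirely and compute $\bra\psi P\ket\psi$ in closed form, using the principal-row structure to solve $A\vec w\equiv\vec u$ by the one-line observation $w_c=u_{p(c)}$, and then reading off the eigenvalue $i^{\lambda-\vec w\trans Q\vec w}(-1)^{\vec v\trans\vec b}$. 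This is more direct for the theorem as stated and, as you note, shows the syndrome-qubit hypothesis is not strictly needed to obtain the outcome. What the paper's approach buys is that it simulates an actual physical procedure, so the same run-time analysis extends verbatim to fault-tolerant syndrome-extraction circuits (Shor-style with a cat state, flag qubits, etc.) in which some intermediate measurements are non-deterministic --- the content of Figure~\ref{fig:syndromeMeasurement} and the paragraph following the theorem --- whereas your eigenvalue formula addresses only the deterministic-outcome case. One small correction to your final aside: in the paper's route the controlled-$P$ is decomposed into individual two-qubit gates simulated by \algname{SimulateCX}, \algname{SimulateCZ}, etc., each of which maintains principal row form throughout; there is no ``temporary breaking'' of the form.
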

\noindent
We prove this below.
In fact, we will demonstrate something stronger: \parit{i}~that a $P$-measurement may be simulated in $\mathcal{O}(kn)$ time, but also possibly in $\mathcal{O}(kr)$ time or $\mathcal{O}(k)$ time in situations which would be easy to identify; and \parit{ii}~that this remains true for simulations of certain fault-tolerant procedures to realise a $P$-measurement.
In a stabiliser code such as surface or colour codes~\cite{Fowler2012, landahl2011, Terhal2015} --- in which the stabilisers to be measured are `local', in the sense that $k$ is bounded by some constant --- the above run-time bounds may be tightened further to $\mathcal{O}(n)$, $\mathcal{O}(r)$, and $\mathcal{O}(1)$.

\begin{figure}[t]
    \begin{align*}
        \begin{gathered}
        \begin{tikzpicture}
            \foreach \j in {0,...,3} {%
                \coordinate (x\j) at (0,{-\j*.4});
                \draw [line width=0.5pt]  (x\j) -- ++(4,0);
            }
            \coordinate (a) at (0,-2);
            \draw ($(a) + (.5,0)$) -- ++(3.5,0);
            \node at ($(a) + (.1875,0)$) {$\lvert \texttt+ \rangle$};
            \foreach \j in {0,...,3} {%
                \filldraw [line width=.75pt] ($(x\j) + ({1+\j/2},5pt)$) -- ($(a) + ({1+\j/2},0)$) circle (2.5pt);
            }
            \node [line width=.75pt,draw=black,fill=white,inner ysep=2.5pt, inner xsep=-8pt,minimum width=1.125em] 
                    at ($(x0) + (1.0,0)$) {$X$};
            \node [line width=.75pt,draw=black,fill=white,inner ysep=2.5pt, inner xsep=-8pt,minimum width=1.125em] 
                    at ($(x1) + (1.5,0)$) {$Y$};
            \node [line width=.75pt,draw=black,fill=white,inner ysep=2.5pt, inner xsep=-8pt,minimum width=1.125em] 
                    at ($(x2) + (2.0,0)$) {$Z$};
            \node [line width=.75pt,draw=black,fill=white,inner ysep=2.5pt, inner xsep=-8pt,minimum width=1.125em] 
                    at ($(x3) + (2.5,0)$) {$X$};
        \node [
            draw=black, fill=white, inner sep=2pt, label distance=-5mm, minimum height=1.45em, minimum width=2em, line width=.75pt
        ] (meas) at ($(a) + (3.25,0)$) {};
        \draw [line width=.75pt] ($(meas.south) + (-.75em,1mm)$) arc (150:30:.85em);
        \draw  [line width=.75pt] ($(meas.south) + (0,1mm)$) -- ++(.8em,1em);
        \node [
            anchor=north west, inner sep=1.5pt, font=\sffamily\bfseries\footnotesize
        ] at (meas.north west) {X};
        \end{tikzpicture}
        \end{gathered}
    &\qquad&
        \begin{gathered}
        \begin{tikzpicture}
            \foreach \j in {0,...,3} {%
                \coordinate (x\j) at (0,{-\j*.4});
                \draw [line width=0.5pt] (x\j) -- ++(7.5,0);
            }
            \foreach \j in {0,...,3} {%
                \coordinate (a\j) at (0,{-2-\j*.6});
                \draw [line width=0.5pt] ($(a\j) + (1.5,0)$) -- ++(6,0);
                \ifnum\j<3
                    \node at ($(a\j) + (1.1875,0)$) {$\lvert \texttt0 \rangle$};
                \else
                    \node at ($(a\j) + (1.1875,0)$) {$\lvert \texttt+ \rangle$};
                \fi
            }
            \foreach \j in {0,1,2} {%
                \filldraw [line width=.75pt] ($(a\j) + ({2+\j/2},5pt)$) -- ($(a3) + ({2+\j/2},0)$) circle (2.5pt);
                \draw [line width=.75pt] ($(a\j) + ({2+\j/2},0)$) circle (5pt);
            }
            \foreach \j in {0,...,3} {%
                \filldraw [line width=.75pt] ($(x\j) + ({4.5+\j/2},5pt)$) -- ($(a\j) + ({4.5+\j/2},0)$) circle (2.5pt);
            }
            \node [line width=.75pt,draw=black,fill=white,inner ysep=2.5pt, inner xsep=-8pt,minimum width=1.125em] 
                    at ($(x0) + (4.5,0)$) {$X$};
            \node [line width=.75pt,draw=black,fill=white,inner ysep=2.5pt, inner xsep=-8pt,minimum width=1.125em] 
                    at ($(x1) + (5.0,0)$) {$Y$};
            \node [line width=.75pt,draw=black,fill=white,inner ysep=2.5pt, inner xsep=-8pt,minimum width=1.125em] 
                    at ($(x2) + (5.5,0)$) {$Z$};
            \node [line width=.75pt,draw=black,fill=white,inner ysep=2.5pt, inner xsep=-8pt,minimum width=1.125em] 
                    at ($(x3) + (6.0,0)$) {$X$};
                \filldraw [white] ($(a0) + (3.5,0.25)$) -- ++(.5,0) -- ($(a3) + (4,-0.25)$) -- ++(-.5,0) -- cycle;
                \node at ($(a0)!0.5!(a3) + (0.25,0)$)
                    {$\left(\begin{matrix} \\[18mm] \end{matrix}\right.$\!\!\emph{e.g.},};
                \node at ($(a0)!0.5!(a3) + (3.75,0)$)
                    {$\left.\begin{matrix} \\[18mm] \end{matrix}\right)$};
        \foreach \j in {0,1,2,3} {
        \node [
            draw=black, fill=white, inner sep=2pt, label distance=-5mm, minimum height=1.45em, minimum width=2em, line width=.75pt
        ] (meas) at ($(a\j) + (6.75,0)$) {};
        \draw [line width=.75pt] ($(meas.south) + (-.75em,1mm)$) arc (150:30:.85em);
        \draw  [line width=.75pt] ($(meas.south) + (0,1mm)$) -- ++(.8em,1em);
        \node [
            anchor=north west, inner sep=1.5pt, font=\sffamily\bfseries\footnotesize
        ] at (meas.north west) {X};
        }
        \end{tikzpicture}
        \end{gathered}
    \\[-.5ex]
        \textbf{(a)}\mspace{96mu} && \textbf{(b)}\mspace{216mu}
    \end{align*}
    \vspace*{-4ex}
    \caption{%
        \label{fig:syndromeMeasurement}
            Procedures to perform a syndrome measurement, using auxiliary qubits to measure a Pauli observable --- in this case, an operator $X {\otimes} Y {\otimes} Z {\otimes} X$.
            Under a Pauli noise model, any particular error operations which accrue on the upper four `data' qubits result in a deterministic syndrome bit --- though the procedure to obtain that bit may involve non-deterministic processes.
            These procedures may be simulated in time $\mathcal{O}(n)$ in each case.
            \textbf{(a)}~A~simple procedure for eigenvalue estimation of the Pauli operator to be measured.
            The syndrome qubit, which is initially prepared in the state $\lvert \texttt+ \rangle$, would be simulated with a principal row of an expansion matrix $A$, which in particular has exactly one non-zero entry.
            The controlled-Pauli operations can then be simulated variously in time $\mathcal{O}(1)$, $\mathcal{O}(r)$, or $\mathcal{O}(n)$, and the final $X$-basis measurement can be realised in time $\mathcal{O}(1)$.
            This procedure may be elaborated with a flag qubit \cite{Chao2018} to make it fault-tolerant, adding only a constant amount of work for the simulation.
            \textbf{(b)}~A~procedure to measure the syndrome bit using Shor-style syndrome extraction \cite{Shorfault, DiVincenzo2007}.
            Multiple syndrome qubits are initially prepared in a `cat' state, involving a single principle row.
            For example, the preparation of this state could be simulated using non-principal rows representing qubits initially prepared in the state $\lvert \texttt0 \rangle$ for all but one syndrome qubit; these qubits are entangled with a final qubit prepared in the state $\lvert \texttt+ \rangle$, corresponding to a principal row.
            This preparation time can be simulated in time $\mathcal{O}(1)$, as they involve only a constant number of qubits independent of the larger system.
            We interact these syndrome qubits with the data qubits to realise a distributed eigenvalue estimation procedure: each such interaction may be simulated in time $\mathcal{O}(1)$, $\mathcal{O}(r)$, or $\mathcal{O}(n)$.
            All but the last of the $X$-basis measurements involve a non-principal row with only one non-zero element, and so can be simulated in time $\mathcal{O}(1)$; the final syndrome measurement produces the syndrome bit, and is therefore a deterministic measurement also simulated in time $\mathcal{O}(1)$.
            If the simulation of the cat state involves Pauli noise, this noise may be countered with further parity checks~\cite{Shorfault,NandC} or distillation \cite{Bennett1996, Nigmatullin2016, Dur2007}, each round of which may also be simulated in time $\mathcal{O}(1)$ using the same techniques as above.
        }
\end{figure}

Figure~\ref{fig:syndromeMeasurement} demonstrates some typical presentations of syndrome measurement procedures, for a Pauli observable $X{\otimes}Y{\otimes}Z{\otimes}X$ selected as an example.
This measurement is simulated using controlled-$X$, controlled-$Y$, and controlled-$Z$ operations to realise phase-kicks onto one or more syndrome qubits.
(We may decompose a controlled-$Y$ operation as $\mathrm CY = \bigl[\begin{smallmatrix} I & 0 \\ 0 & Y \end{smallmatrix}] = \bigl[\begin{smallmatrix} I & 0 \\ 0 & i I \end{smallmatrix}] \bigl[\begin{smallmatrix} I & 0 \\ 0 & X \end{smallmatrix}] 
\bigl[\begin{smallmatrix} I & 0 \\ 0 & Z \end{smallmatrix}] = (S \otimes I) \, \mathrm CX \, \mathrm CZ$ for the purposes of simulation.)
These operations involve the syndrome qubits as controls: simulating these operations therefore do not affect the number of non-zero elements in the rows of $A$ which correspond to these syndrome qubits, which thus remain $\mathcal{O}(1)$ throughout.
Using the analysis of the simulation runtimes of these operations in terms of sparsity parameters, we may then bound the time to simulate an operation $\mathrm CZ_{a,j}$ by $\mathcal{O}(s_j {+} w) \subseteq \mathcal{O}(r)$, and to simulate an operation $\mathrm CX_{a,j}$ by $\mathcal{O}(t) \subseteq \mathcal{O}(n)$.
(These operations may in fact be simulatable in time $\mathcal{O}(1)$, depending on whether row $j$ of $A$ happens to be a principal row in the case of $\mathrm CZ_{a,j}$ operations, or whether row $j$ happens \emph{not} to be a principal row in the case of $\mathrm CX_{a,j}$.)
Because $S_a$ can be simulated in time $\mathcal{O}(1)$ in this case, $\mathrm CY_{a,j}$ operations can then be simulated in time $\mathcal{O}(s_j + t) \subseteq \mathcal{O}(n)$.
Furthermore, the measurements on these qubits each have either a deterministic outcome, or may be simulated without any reselection of principal rows.
As a result, each of the measurements of the syndrome qubits may be simulated in time $\mathcal{O}(1)$.
The syndrome measurement procedure as a whole then takes time $\mathcal{O}(kn)$, dominated by the two-qubit operations involved; and this bound may be loose, depending on the particular Pauli operation being measured and the qubits on which they are being measured.

\begin{proof}[Proof of Theorem~\ref{thm:stabiliserMeasurement}]
    We prove the result for the non-fault-tolerant technique to realise $P$-measurements by a phase kick with a single syndrome qubit (as illustrated in Figure~\ref{fig:syndromeMeasurement}a): similar remarks may be applied to other techniques.
    Let $a$ be the dedicated `syndrome' qubit.\footnote{%
        While it is not ideal programming practise, it would be enough to have some qubit $a$ which is known to be disentangled from the rest, whether or not it is set aside to simulate multi-qubit measurements.
        The initial state of this qubit $a$ can be read and stored in time $\mathcal{O}(1)$, and restored in time $\mathcal{O}(1)$ after the simulated measurement.}
    In $\mathcal{O}(1)$ time, we may simulate a procedure which prepares $a$ in the state $\ket{\texttt+}$.
    Each of the two-qubit measurements involved in simulating the measurement of $P$ may be simulated in time $\mathcal{O}(n)$ by the analysis above; and the final measurement on $a$ itself may be simulated in time $\mathcal{O}(1)$, by hypothesis that $\ket{\psi}$ is an eigenvector of $P$.
    The total run-time is then dominated by the time required to simulate the $k$ two-qubit operations, which is $\mathcal{O}(kn)$.
\end{proof}

In the case that $k$ is bounded above by a constant, the above allows us to match the asymptotic complexity obtained by Gidney~\cite{gidney2021stim}, of $\mathcal{O}(n)$ to simulate syndrome measurement  under a Pauli noise model.
Our techniques allow us to do so without needing to record the circuit being simulated (in order to simulate the stabiliser measurement in the Heisenburg picture).

We note that for sufficiently large values of $k$, our techniques may be expected to be slower in practise than those of Ref.~\cite{gidney2021stim}, for simulating deterministic Pauli stabiliser measurements \emph{as such}.
This is a result of the fact that our techniques rely on two-qubit unitaries to represent a \emph{particular physical realisation} of a $k$-qubit Pauli observable mearsurement.
We note that such a realisation is necessary for a sufficiently detailed simulation of a physical procedure to realise such $P$-measurements.
The techniques described by Gidney~\cite{gidney2021stim} would also yield $\mathcal{O}(kn)$ run-times to simulate certain procedures for stabiliser measurements, such as that in Figure~\ref{fig:syndromeMeasurement}a; for any which produce non-deterministic measurement outcomes, such as that represented in Figure~\ref{fig:syndromeMeasurement}b, the run-time required by those techniques appear to scale as $\mathcal{O}(kn^2)$.
In this sense, our techniques seem to provide an advantage for simulating proedures which make frequent syndrome measurements in the presence of Pauli noise.

It is plausible that qubits in an error correcting code will have enough structure in their entanglement relations to allow for a relatively sparse representation, allowing for faster simulation than is represented by the unconditional bound of $\mathcal{O}(n)$.
Because of the importance of syndrome measurement to simulation of error corrected architectures, we conjecture that this is an application for which our techniques will be particularly well-suited.

\section{Discussion}
\label{disc}

In this paper, we have developed on the notion of a quadratic form expansion, as described in Ref.~\cite{dBQuadratic} and informed by presentation of similar path-sum like representations of stabiliser states and stabiliser circuits~\cite{CalderbankGoodQEC, dehaene, JBVClassical, VDNClassical, Amy_2019}.
Procedures to efficiently simulate one- or two-qubit stabiliser circuits on $n$-qubit stabiliser states using such a representation, are implicit in many of these works.
We have presented \emph{explicit} procedures to simulate such operations in time $\mathcal{O}(n^2)$, matching the worst-case asymptotic complexity achievable under the stabiliser formalism~\cite{Aaronson2004} among others~\cite{Anders2006,Bravyi2019}.
We obtain this result by considering quadratic form expansions which are subject to certain constraints: notably, involving an `expansion matrix' $A$ in principal row form.
Furthermore, the bound of $\mathcal{O}(n^2)$ is in some cases loose.
As we describe in Section~\ref{runtimes}, when the state has $2^r$ standard basis components for $r \ll n$, the worst-case complexity to simulate a stabiliser operation is $\mathcal{O}(nr)$.
For each stabiliser operation, we present still more refined bounds on simulation complexity, when $A$ or the quadratic form can be represented by sparse data structures.

We briefly consider the way in which our techniques fail to extend to Clifford+T circuits (\emph{i.e.},~including a $T = \sqrt{S} = \mathrm{diag}(1,\tau)$ gate for $\tau = \sqrt{i}$), or Clifford-plus-controlled-S circuits.
\label{discn:Clifford+T}%
Consider a representation of states similar to Eqn~\eqref{eqn:QFE-specific}, in which the relative phases are expressed as powers of $\tau$ instead of powers of $i$.
This would complicate the analysis of the simulation of the Hadamard operation (or more precisely the procedure analogous to \algname{ZeroColumnElim}), requiring the analysis of a new case in which some entry of the Gram matrix $Q$ represents an odd power of $\tau$, to extend the analysis of Eqn.~\eqref{eqn:zeroColumnElim-Q2}.
There is no algebraic `coincidence' regarding $1 + \tau$, which is analogous to the equality  $1 + i = \sqrt{2}\cdot\tau$, which would allow us to reduce the rank $r$ to simplify a quadratic form expansion involving relative phases which are powers of $\tau$; more sophisticated (and computationally demanding) techniques would be required.
In the case of controlled-$S$ gates, any attempt to simulate them directly on a state as in Eqn.~\eqref{eqn:QFE-specific} would require that we abandon the constraint that $Q$ is symmetric; however, this complicates the analysis of Eqns.\eqref{eqref:eq10}--\eqref{eqn:zeroColumnElim-Q2} in \algname{ZeroColumnElim} as well, as well as any analysis involving a change of variables (as Lemma~\ref{lemma:QFE-change-of-variables} relies on the symmetry of $Q$).
The existence of such obstacles, is no surprise: the ability to efficiently simulate either of these circuit classes, suffices to simulate arbitrary quantum computations with bounded error~\cite{NandC, Dawson2005}.
However, extensions of our techniques in this direction could yield modest reductions to the simulation complexity of more complex quantum procedures.

We note that the sparsity of the expansion matrix $A$ and the Gram matrix $Q$ (at least in certain rows and columns) is crucial to the usefulness of our techniques for any given application.
For a random stabiliser circuit --- in which the probability of any one of a Hadamard gate, $X$~measurement, or $Y$~measurement is bounded below by a constant --- one may expect the value of the rank parameter $r$ above to approach $\tfrac{1}{2}n$, and then vary only slightly from this value on average.
In this case of $r \in \Theta(n)$, the bound $\mathcal{O}(nr) = \mathcal{O}(n^2)$ for our techniques to simulate various unitary gates, compares poorly to other simulation techniques~\cite{Aaronson2004,Anders2006}.
However, as $r$ increases, the number of principal rows of $A$ --- each of which has exactly one non-zero entry --- also increases.
By selecting certain qubits to correspond to principal rows of $A$, one may in some cases achieve significant improvements in simulation time for certain procedures, as we describe in Section~\ref{stabmeas}.
For structured stabiliser circuits as opposed to randomly generated ones, this may yield significant advantages for simulation.
We expect this may be the case for simulations of stabiliser circuits which realise operations involving certain error correction codes.

The run-times of our techniques are broadly similar to those described by Anders and Briegel~\cite{Anders2006}, who describe stabiliser states as the image of graph states $\ket{G}$~\cite{Hein2004, Hein2006} under a tensor product of local Clifford operations (essentially products of $H$ and $S$).
Note that our techniques represent the state $\ket{\texttt +}^{\otimes n}$ using a rank of $r = n$, and expansion matrix $A = I_n$, and a Gram matrix $Q = 0$.
From the way that the Gram matrix $Q$ updates under $\mathrm CZ$ operations with our techniques, it follows that a quadratic form expansion for a graph state $\ket{G}$ is essentially to set $r = n$, $A = I_n$, and to set $Q$ to the adjacency matrix of $G$.
The degree parameters of Ref.~\cite{Anders2006} then coincide with the sparsity parameters $w$ for the Gram matrix $Q$.
Quadratic form expansions of the sort of Eqn.~\eqref{eqn:QFE-specific} may then be considered the image of a graph state under additional $S$ operations, followed by a `CNOT circuit' --- which is to say, a linear isometry of the form $\ket{\vec x} \mapsto \ket{A\vec x}$.
The principal advantage of our techniques over those of Ref.~\cite{Anders2006} is in the use of the that isometry, represented by the expansion matrix $A$, to represent correlations between $Z$-basis measurement outcomes.
This provides us with improved simulation of parallel $Z$-basis measurements as in Section~\ref{sec:parallelMeasurements}, and a way to try to systematically reduce the complexity of operations in structured circuits as suggested by the results of Section~\ref{stabmeas}.

Our techniques have certain advantages and disadvantages compared to simulation with the stabiliser formalism, and its elaborations in Refs.~\cite{Aaronson2004,Bravyi2019,gidney2021stim}.
Those techniques uniformly require $\mathcal{O}(n)$ time to simulate unitary operations such as $S$, $H$, $\mathrm CZ$, and $\mathrm CX$, whereas in the case $r \in \Theta(n)$, our techniques frequently require $\Theta(n^2)$.
Our techniques are no worse than the stabiliser formalism for simulating measurements, as stabiliser-based techniques require time $\mathcal{O}(n^2)$ in general to perform a weak simulation of a single measurement.
More notably, in the case of strong simulation of a few qubits (or of nearly all of the qubits) in the $Z$-basis, our techniques provide asymptotic improvements over the stabiliser formalism for $Z$-basis measurements with random outcomes.
Furthermore, whereas Gidney's refinement~\cite{gidney2021stim} enables $\mathcal{O}(n)$-time simulation of multi-qubit measurements with deterministic outcomes, our techniques can simulate single-qubit measurements with deterministic outcomes in time $\mathcal{O}(1)$.
As we describe in Section~\ref{stabmeas}, our techniques also extends to $\mathcal{O}(n)$-time simulation of fault-tolerant syndrome measurement of `local' Pauli operators $P$ under Pauli noise models including ones which yield non-deterministic measurement outcomes as part of the process.
Thus, our techniques may prove more practical for simulations in which such measurements are frequent.
We speculate that the setting of operations on error-corrected qubits may also have enough additional structure to allow simulation using sparse data structures: this may provide further opportunities to improve the simulation complexity of these circuits, using our techniques.

Quadratic form expansions could possibly be regarded as \emph{less general} than the representation for stabiliser states presented by Bravyi~\emph{et~al.}~\cite{Bravyi2019}.
The sense in which this is the case is as follows.
Ref.~\cite{Bravyi2019} describes a representation of stabiliser states in the form
\begin{equation}
    \ket{\psi} = \omega \;\! U_C \:\! U_H \ket{\vec b},
\end{equation}
where $\omega \in \mathbb{C}$ is a scalar, $\vec b \in \{0,1\}^n$, $U_H$ is a unitary realisable by Hadamard operations applied to some of the qubits, and $U_C$ is an operator such that $U_C \ket{\texttt0}^{\otimes n} = \ket{\texttt0}^{\otimes n}$.
In our work, the choice of $0 \le r \le n$ and an initial expansion matrix of only principal rows and zero rows corresponds to $U_H$; then the choice of Gram matrix and expansion matrix can be described by an operator $U_C$.
By using a representation for $U_C$ using the stabiliser formalism, Ref.~\cite{Bravyi2019} is able to simulate one- and two-qubit Clifford operations in time $\mathcal{O}(n)$. 
Our techniques are better suited to cases where the expansion matrix $A$ or Gram matrix $Q$ are expected to be close to $\mathcal{O}(\sqrt n)$-sparse, so that unitary gates may be simulated in time $\mathcal{O}(n)$, and where measurements are frequent enough that the techniques of Sections~\ref{sec:parallelMeasurements} and~\ref{stabmeas} provide an advantage over the $\mathcal{O}(n^2)$-time simulation time using the techniques of Ref.~\cite{Bravyi2019}.

Guan and Regan~\cite{GuanRegan2019} describe results in strong circuit simulation, which also makes use of what we call a quadratic form expansion, albeit allowing a summation index $\mathbf y \in \{0,1\}^h$ for $h$ possibly much larger than $n$, and using a different approach to that used in our analysis to navigate mixed-modulus arithmetic in the quadratic form.
Motivated by the connection between computing binary matrix rank and strong simulation, they describe techniques to relate the problem of evaluating $p = \lvert \bra{00 \cdots 0} C \ket{00\cdots0} \rvert^2$ for a Clifford circuit $C$ (consisting of only $\mathrm{CZ}$ gates, $S$ gates, and Hadamard gates) to transformations of the quadratic form.
This allows them to compute such probabilities $p$ in time $\mathcal{O}(M + n + M_{\!\!\;h}^{\;\!\omega})$, where $n$ the number of qubits on which $C$ acts, $M$ is the number of Clifford gates in $C$, and $M_{\!\!\;h}$ is specifically the number of Hadamard gates (which contribute to the length $h$ of the summation index $\vec y$).
These results rely on allowing $h$ to grow without an upper bound of $n$.
We may contrast this run-time for strong simulation of a  `total' measurement, to that of Theorem~\ref{thm:strongTotalMeasurement}, which yields an upper bound of $\mathcal{O}(Mn^2)$ to evaluate $p$ (dominated by the time to construct a quadratic form expansion for the pre-measurement state).
The results of Ref.~\cite{GuanRegan2019} are advantageous in the setting that $M_{\!\!\;h}^{\;\!\omega} \ll M n^2$.
If we consider the case $M_h / M \in \Theta(1)$ in which the Hadamard gates make up a significant fraction of the total number of gates in $C$, this is equivalent to a bound of $M \in \mathcal{O}(n^{2/(\omega-1)})$ on the circuit size.
For the known bound $\omega < 2.3729$, this implies that the techniques of Ref.~\cite{GuanRegan2019} scale asymptotically similarly to ours (or better) to compute $p$ for circuits of size about $\mathcal{O}(n^{1.4568})$.

There is potential for our techniques involving quadratic form expansions, to work well in conjunction with techniques for quantum circuit minimisation or verification.
For instance, consider a Clifford circuit expressing an $n$-qubit unitary $U$.
By computing a quadratic form expansion for the $2n$-qubit `Choi' state ${(U \otimes I) \ket{\Phi_n}}$ where $\ket{\Phi_n} = {\tfrac{1}{\sqrt {2^n}} \sum_{\vec x} \ket{\vec x}\ket{\vec x}}$, we may determine whether or not $U$ is in fact the identity operation.
This observation may also be easily extended to circuits with measurement operations, provided that we consider the transformation realised for specific measurement outcomes.
In this respect, our techniques may be used as an alternative realisation of a subset of the techniques of Amy~\cite{Amy_2019} to verify quantum circuit equalities.
At the same time, our representation of stabiliser states have a certain similarity to a representation of operations by Heyfron and Campbell~\cite{Heyfron2017}, who use a symmetric \emph{order 3} tensor $S$ where our representation uses a symmetric Gram matrix $Q$, to represent certain operations from the third level of the Clifford hierarchy~\cite{gottesman1999, Zeng2008}.
Perhaps by reconsidering the results of Ref.~\cite{Heyfron2017} through a similar careful management of mixed-modulus arithmetic, our techniques may be extended to provide further advances in circuit simplification, for circuits involving controlled-controlled-$Z$, controlled-$S$, or $T$ gates.
The fact that our techniques represent the global phase of a simulated state, means that they may be also used in conjunction with techniques described in Ref.~\cite{Bravyi2019} for extending beyond simulation of stabiliser circuits.
Of course, Ref.~\cite{Bravyi2019} also provides simulation techniques which extend the stabiliser formalism and track the global phase; it remains to be seen whether there exists applications for which we may use the structure of the circuit to be simulated, to more effectively simulate them using sparse data structures.

Finally, we suggest that quadratic form expansions may be useful pedagogically, for discussions about stabiliser circuits in an educational setting.
Most students who first learn of quantum computation will be familiar with the idea of expanding a state in terms of standard basis components, and performing computation \emph{within} a standard basis vector, both of which are clear features of the quadratic form expansion representation.
Entanglement, in the form of states such as $\tfrac{1}{\sqrt 2} \sum_{x} \ket{x}\ket{x}$, is also represented explicitly by quadratic form expansions: and while students of physics may find quantum correlations to be adequately represented by correlations of $Z$ and $X$ observables, not all students of quantum computation may find it as easy to reason about observables in this way.
This issue also applies more generally to the stabiliser formalism: while the usefulness of the stabiliser formalism is beyond doubt, it is perhaps more conceptually sophisticated than necessary for a first encounter with circuit simulation.
In contrast, our techniques may be simplified to be more approachable at an introductory level --- for instance, by abandoning the constraint of keeping $A$ in principal row form, and considering simulation techniques requiring $\mathcal{O}(n^3)$ time using Gaussian elimination to maintain the constraint $r = \mathrm{rank}(A)$.
This would yield techniques which may be more suitable to teach the simulatability of stabiliser circuits in a first encounter.
At the same time, it is possible to describe the limits of extending these techniques to circuits with $T$ or controlled-$S$ gates, by showing how specific steps fail when the relative phase is described as a power of $\tau = \sqrt{i}$, or when the Gram matrix $Q$ is instead allowed to be replaced with a matrix which is not symmetric.
This is a possible avenue to describing efficiently simulatable quantum computations, in a way which may be more approachable at an introductory level.

\section*{Acknowledgements}

This work was partly undertaken whilst N.\,de\,B.\ was the Oxford--Tencent Fellow, and S.\,J.\,H.\ was working part-time on the EPSRC-funded NQIT project, in the Department of Computer Science at the University of Oxford.
The authors thank Matthew Amy and Zen Harper for their helpful feedback on a draft of this article; Alec Edgington for checking the pseudocode and spotting a number of bugs whilst undertaking a code implementation \cite{simplex} of the techniques described in this article; Alex Kerzner for contacting us to report some minor bugs in an earlier version of the article; and the anonymous reviewers at \textit{Quantum} for their careful review and helpful comments.

\newpage
\appendix

\section{Purely number-theoretic Lemmata on Gram matrices modulo~4}




We call a function $\mathbf Q: \mathbb Z^r \to \mathbb Z$ a \emph{quadratic form} if $\mathbf Q(\vec x)$ is an integer multivariate polynomial in which each term has degree $2$, \emph{i.e.},~if can be expressed as
$
        \mathbf Q(\mathbf x)
    =
        \sum_{j \le k} 
        u_{j,k} \;\! x_j \;\! x_k
$  
for some coefficients $u_{j,k} \in \mathbb Z$\,.
When working over a field such as $\mathbb R$ or $\mathbb Z_p$ for prime $p > 2$, it is common to consider a symmetric matrix $Q$ such that $Q_{j,j} = u_{j,j}$, $Q_{j,k} = \tfrac{1}{2}u_{j,k}$ for $j < k$, and $Q_{j,k} = \tfrac{1}{2}u_{k,j}$ for $j > k$.
The matrix $Q$ is then called the \emph{Gram matrix} of $\mathbf Q$.
While $\mathbb Z$ does not have a multiplicative inverse for $2$ (so that not all quadratic forms over $\mathbb Z$ can be represented in this way), our techniques are concerned with quadratic forms $\mathbf Q$ over $\mathbb Z$, which do happen to arise from a `Gram matrix' $Q$ in this way.
In this case, we may represent $\mathbf Q(\vec x) = \vec x\trans \!Q \vec x$.

As we are exclusively interested in quadratic forms evaluated for vectors $\vec x \in \{0,1\}^r$, and evaluated as an imaginary exponent $i^{\:\!\mathbf Q(\vec x)}$ to define a relative phase, we may show a few simple but very helpful results about such Gram matrices to support our analysis.
In particular, the fact that $Q$ is symmetric means that when considering $\mathbf Q(\vec x) = \vec x\trans \! Q \:\! \vec x$ mod~4, our analysis in fact benefits from properties which we would normally only expect when evaluating expressions modulo~2:

\begin{lemma}
    \label{lemma:consistency-mod-2}
    Let $\vec x, \vec{\tilde x} \in \mathbb Z^r$ such that $\vec x \equiv \vec{\tilde x} \pmod{2}$.
    Then for any symmetric $r \times r$ matrix $Q$ over the integers, $\vec x\trans \! Q \:\! \vec x \,\equiv\, \vec{\tilde x}\trans\!Q \:\! \vec{\tilde x} \pmod{4}$.
\end{lemma}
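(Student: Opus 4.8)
The plan is elementary: reduce the statement to a bilinear expansion. Since $\vec x \equiv \vec{\tilde x} \pmod 2$, I would write $\vec x = \vec{\tilde x} + 2\vec k$ for some $\vec k \in \mathbb Z^r$. Substituting and expanding gives
\[
  \vec x\trans Q \vec x \;=\; \vec{\tilde x}\trans Q \vec{\tilde x} \;+\; 2\,\vec{\tilde x}\trans Q \vec k \;+\; 2\,\vec k\trans Q \vec{\tilde x} \;+\; 4\,\vec k\trans Q \vec k .
\]
The first term on the right is exactly the quantity we want to compare against, and the last term is visibly a multiple of $4$.

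The single step that uses the hypothesis is the treatment of the two cross terms. Each of $\vec{\tilde x}\trans Q \vec k$ and $\vec k\trans Q \vec{\tilde x}$ is a scalar, so transposing one of them and invoking $Q\trans = Q$ gives $\vec{\tilde x}\trans Q \vec k = \vec k\trans Q\trans \vec{\tilde x} = \vec k\trans Q \vec{\tilde x}$. Hence the two cross terms sum to $4\,\vec k\trans Q \vec{\tilde x}$, which is again a multiple of $4$. Combining, $\vec x\trans Q \vec x - \vec{\tilde x}\trans Q \vec{\tilde x} = 4\bigl(\vec k\trans Q \vec{\tilde x} + \vec k\trans Q \vec k\bigr) \equiv 0 \pmod 4$, which is the claim.

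There is no genuine obstacle here; the only point worth flagging is that symmetry of $Q$ is essential. For an arbitrary integer matrix $Q$ the same computation only yields $\vec x\trans Q \vec x \equiv \vec{\tilde x}\trans Q \vec{\tilde x} \pmod 2$, because the term $2(\vec{\tilde x}\trans Q \vec k + \vec k\trans Q \vec{\tilde x})$ need not vanish modulo $4$. This is precisely the phenomenon relied on elsewhere in the article: a symmetric Gram matrix appearing in an imaginary exponent may have its off-diagonal entries reduced modulo $2$ (cf.\ Lemma~\ref{lemma:Gram-matrix-mixed-moduli}) while the value of the form remains well defined modulo $4$, and the same identity is what makes the change-of-variables result of Lemma~\ref{lemma:QFE-change-of-variables} go through (it is invoked there to pass from $\vec y$ to its residue $\vec{\tilde y}$).
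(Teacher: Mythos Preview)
Your argument is correct and is essentially the same as the paper's: write the difference as $2$ times an integer vector, expand the bilinear form, and use $Q\trans=Q$ to combine the cross terms into a multiple of $4$. The only cosmetic difference is that the paper writes $\vec{\tilde x}=\vec x+2\vec v$ rather than $\vec x=\vec{\tilde x}+2\vec k$, and leaves the symmetry step implicit.
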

\begin{proof}
    Let $\vec v \in \mathbb Z^r$ be such that $\vec{\tilde x}  - \vec x = 2 \vec v$.
    Then we have
    \begin{equation}
    \begin{aligned}[b]
            \vec{\tilde x}\trans\! Q \:\! \vec{\tilde x}
        \,\,=\;
            (\vec x + 2 \vec v)\trans\! Q \:\! (\vec x + 2 \vec v)
        \;&=\;
            \vec x\trans\! Q \vec x 
            \,+\, 2 \vec v\trans\! Q \:\! \vec x \,+\, 2 \vec x\trans\! Q \:\! \vec v
            \,+\, 4 \vec v\trans\! Q \:\! \vec v
        \\[1ex]&=\;
            \vec x\trans\! Q \:\! \vec x 
            \,+\, 4 (\vec v\trans\! Q \:\! \vec x 
            + \vec v\trans\! Q \:\! \vec v)
        \;\equiv\;
            \vec x\trans\! Q \:\! \vec x \pmod{4}.
    \end{aligned}
    \tag*{\qedhere}
    \end{equation}
\end{proof} 

\noindent
This phenomenon in which equivalence mod~2 automatically lifts to equivalence mod~4, extends to a limited extent even to the Gram matrix $Q$ itself:

\begin{lemma}
    \label{lemma:congruence-of-Gram-matrices-mod-4}
    For $r \times r$ symmetric matrices $Q$ and $Q'$ over $\mathbb Z$, we have $\vec x\trans \!Q \:\!\vec x \equiv \vec x\trans \!Q' \:\!\vec x \pmod{4}$ for all $\vec x \in \{0,1\}^r$ if and only if $Q' - Q = 2\Gamma$ for some matrix $\Gamma$ over $\mathbb Z$ whose diagonal entries are all even.
\end{lemma}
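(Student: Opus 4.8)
The plan is to prove both directions of the biconditional by a direct computation with the bilinear form $\vec x\trans Q \vec x$ evaluated on $\{0,1\}$-vectors, exploiting the identity $x_j^2 = x_j$ and the symmetry of $Q$. First I would handle the easy ("if") direction: suppose $Q' - Q = 2\Gamma$ with $\Gamma$ having even diagonal entries. Write $\Gamma_{j,j} = 2\gamma_j$ and note $\vec x\trans(Q'-Q)\vec x = 2\sum_{j,k}\Gamma_{j,k}x_j x_k$. Split this sum into the diagonal part $2\sum_j \Gamma_{j,j}x_j^2 = 4\sum_j\gamma_j x_j$, which is $\equiv 0 \pmod 4$, and the off-diagonal part $2\sum_{j\neq k}\Gamma_{j,k}x_j x_k$; since $\Gamma$ need not be symmetric I would just observe that each unordered pair $\{j,k\}$ contributes $2(\Gamma_{j,k}+\Gamma_{k,j})x_j x_k$, but in any case every off-diagonal term already carries a factor of $2$ from the "$2\Gamma$", so $2\sum_{j\neq k}\Gamma_{j,k}x_j x_k$ need not be divisible by $4$ on its own — so instead I would not split, and simply say: using $x_j\in\{0,1\}$, $\vec x\trans(2\Gamma)\vec x = 2\,\mathbf\Gamma(\vec x)$ where $\mathbf\Gamma(\vec x) = \sum_{j,k}\Gamma_{j,k}x_j x_k \in \mathbb Z$; this gives $\vec x\trans Q'\vec x - \vec x\trans Q\vec x = 2\mathbf\Gamma(\vec x) \equiv 0 \pmod 2$, which is weaker than needed. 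The correct bookkeeping: the diagonal contribution is $\equiv 0 \pmod 4$ by the even-diagonal hypothesis, and for the off-diagonal contribution I would pair $(j,k)$ with $(k,j)$ to get $2(\Gamma_{j,k}+\Gamma_{k,j})x_jx_k$, and here is where one does NOT automatically get divisibility by $4$ — so in fact the correct reading is that the "if" direction also requires care, and I would instead reduce to the symmetric representative: replace $\Gamma$ by its symmetrization and absorb the correction into the diagonal, observing the diagonal stays even. After this reduction, $2\Gamma$ symmetric with even diagonal, and each off-diagonal pair contributes $4\Gamma_{j,k}x_jx_k \equiv 0\pmod 4$, while the diagonal contributes $2\cdot(\text{even})\cdot x_j \equiv 0 \pmod 4$; done.

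For the ("only if") direction, suppose $\vec x\trans Q\vec x \equiv \vec x\trans Q'\vec x \pmod 4$ for all $\vec x\in\{0,1\}^r$, and set $D = Q' - Q$, a symmetric integer matrix; I want to show $D = 2\Gamma$ with $\Gamma$ having even diagonal, equivalently that every off-diagonal entry of $D$ is even and every diagonal entry of $D$ is divisible by $4$. First take $\vec x = \vec e_j$: then $\vec e_j\trans D \vec e_j = D_{j,j} \equiv 0 \pmod 4$, so every diagonal entry of $D$ is a multiple of $4$. Next take $\vec x = \vec e_j + \vec e_k$ for $j\neq k$: then $\vec x\trans D\vec x = D_{j,j} + D_{k,k} + 2D_{j,k} \equiv 0 \pmod 4$; subtracting the two diagonal congruences (each $\equiv 0 \pmod 4$) gives $2D_{j,k} \equiv 0 \pmod 4$, i.e.\ $D_{j,k}$ is even. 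Hence $D = 2\Gamma$ where $\Gamma$ has integer entries, and $\Gamma_{j,j} = D_{j,j}/2$ is even since $4 \mid D_{j,j}$. This gives exactly the claimed form, and $\Gamma$ may even be taken symmetric here since $D$ is.

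I expect the main subtlety — what passes for the "hard part" in an otherwise elementary lemma — to be getting the even/odd and mod-2/mod-4 bookkeeping exactly right in the "if" direction, specifically the interplay between off-diagonal entries contributing in symmetric pairs (hence effectively with a factor of $2$ built in from symmetry of $Q$, so $2\Gamma_{j,k}$ lifts to $4\cdot$ something) versus diagonal entries where no such doubling occurs, which is precisely why the diagonal of $\Gamma$ must be required to be even whereas the off-diagonal need only be integral. The cleanest exposition evaluates on the basis vectors $\vec e_j$ and the pair-sums $\vec e_j+\vec e_k$ for the converse, and writes $\vec x\trans(2\Gamma)\vec x = 4\sum_{j<k}(\Gamma_{j,k}+\Gamma_{k,j})x_jx_k + 2\sum_j \Gamma_{j,j}x_j$ (using $x_j^2 = x_j$) for the forward direction, from which the two claims drop out termwise. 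No result beyond elementary modular arithmetic is needed.
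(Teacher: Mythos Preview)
Your proof is correct and follows essentially the same approach as the paper: for the ``only if'' direction you evaluate at $\vec e_j$ and $\vec e_j + \vec e_k$ exactly as the paper does, and for the ``if'' direction you split into diagonal and off-diagonal contributions. Your exposition of the ``if'' direction is more tortuous than necessary --- since $2\Gamma = Q' - Q$ is symmetric over $\mathbb Z$, the matrix $\Gamma$ is automatically symmetric, which immediately gives $\vec x\trans(2\Gamma)\vec x = 2\sum_j \Gamma_{j,j}x_j + 4\sum_{j<k}\Gamma_{j,k}x_jx_k \equiv 0 \pmod 4$ without any symmetrization step (and fixes the stray factor of $2$ in your final displayed formula).
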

\begin{proof}
    Let $Q$ and $Q'$ be symmetric $r \times r$ matrices over $\mathbb Z$; and let $\mathrel{\equiv_4}$ stand for the relation of equivalence modulo $4$.
    \begin{itemize}
    \item
        Suppose that $Q' - Q = 2\Gamma$ for an integer matrix $\Gamma$ with an even diagonal.
        Then we have
        \vspace*{-1ex}
        \begin{equation}
        \begin{aligned}[b]
                \vec x\trans Q' \;\! \vec x
            \;\;&=\;\:\:
                \sum_{k=1}^r x_k^2 (Q_{k,k} + 2\Gamma_{k,k})
                \;\;+\!\!
                \sum_{1 \le j < k \le r} \!\!\!
                    2 x_j (Q_{j,k} + 2 \Gamma_{j,k}) x_k
            \\&\mathrel{\equiv_4}\; 
                \sum_{k=1}^r x_k^2 Q_{k,k}
                \;\;+\!\!
                \sum_{1 \le j < k \le r} \!\!\!
                    2 \;\! x_j Q_{j,k} x_k
            \;\;=\;\;
                \vec x\trans Q \;\!\vec x
        \end{aligned}
        \end{equation}~\\[-5ex]

        \noindent
        for all $\vec x \in \{0,1\}^r$.
    \item
        Suppose that $\vec x\trans Q\;\!\vec x \equiv \vec x \trans Q'\;\!\vec x \pmod{4}$ for all $\vec x \in \{0,1\}^r$.
        Then in particular, $Q'_{k,k} = \vec e_k\trans Q'\;\!\vec e_k \mathrel{\equiv_4} \vec e_k\trans Q\;\!\vec e_k = Q_{k,k}$; and 
        \begin{equation}
            \begin{aligned}[b]
                    2 Q'_{j,k}
                \;& =\;
                    (\vec e_j{+}\vec e_k)\trans Q' (\vec e_j{+}\vec e_k) - Q'_{j,j} - Q'_{k,k} \\
                \;& \mathrel{\equiv_4}\;
                    (\vec e_j{+}\vec e_k)\trans Q (\vec e_j{+}\vec e_k) - Q_{j,j} - Q_{k,k}
                \;=\;
                    2 Q_{j,k}
                \;.
            \end{aligned}
        \end{equation}
        It follows that $2(Q' - Q) \equiv 0 \pmod{4}$, so that $\Delta = Q' - Q$ is a symmetric matrix with only even coefficients and a diagonal which is entirely zero; if $\Gamma = \tfrac{1}{2}\Delta$, then $\Gamma$ is an integer matrix with a diagonal which is entirely even.
        \qedhere
    \end{itemize}
\end{proof}
\noindent 
(We note that the above proof does not rely in a significant way on $\vec x$ having only coefficients in $\{0,1\}$: if we consider $\vec x \in \mathbb Z^r$ rather than $\vec x \in \{0,1\}^r$, the corresponding equivalence also holds.)

\newpage
\section{Subroutines to transform quadratic form expansions}
\label{apx:subroutines}

In Section~\ref{sec:computing-w-QFEs}, we describe some techniques to transform quadratic form expansions, and declare subroutines to incorporate these techniques.
Here, we describe in pseudocode how these subroutines may be implemented, and describe their run-time bounds in the setting outlined in Section~\ref{sec:data-structures}.

\subsection{Reducing Gram matrices}

\begin{figure}[t]%
    \small
    ~\hfill
    \begin{minipage}[t]{.675\textwidth}
        \rule{\textwidth}{1pt}
        $\algname{ReduceGramRowCol}(c)$
        \smallskip
        \hrule
        \smallskip
        \begin{itshape}\small
            \algdesc{ReduceGramRowCol}%
        \end{itshape}
        \smallskip
        \hrule
        \medskip
            For each $1 \le k \le r$ such that $Q_{c,k} \ne 0$:
            \begin{itemize}[leftmargin=6ex,topsep=0ex]
            \item
                If $k = c$, reduce $Q_{k,k}$ mod~4;
                otherwise reduce both $Q_{c,k}$ and $Q_{k,c}$ mod~2.
            \end{itemize}
        \vspace*{-1ex}
        \rule{\textwidth}{1pt}
    \end{minipage}
    \hfill~
    \caption{%
        \label{alg:ReduceGramRowCol}%
        A procedure to simplify the Gram matrix $Q$ in a specific row or column, suitable for when operations have been performed on that row or column.
    }
\end{figure}
Figure~\ref{alg:ReduceGramRowCol} defines a simple procedure $\algname{ReduceGramRowCol}(c)$, taking an argument $1 \le c \le r$ indexing a row/column of the Gram matrix $Q$.
It reduces its off-diagonal entries of that row and column modulo~2, and its diagonal entry modulo~4.
This subroutine will be suitable to help decrease the number of non-zero entries in that row and column of $Q$, after some operation has been performed which affects row/column $c$ of $Q$.
Following Lemma~\ref{lemma:congruence-of-Gram-matrices-mod-4}, this does not change the state which is represented by the quadratic form expansion.
We may easily show the following:
\begin{lemma}
    For an integer $1 \le c \le r$, and a Gram matrix $Q$ which has at most $w_c \ge 1$ non-zero entries in row/column $c$, $\algname{ReduceGramRowCol}(c)$ runs in time $\mathcal{O}(w_c)$.
\end{lemma}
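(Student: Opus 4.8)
The plan is to simply unwind the definition of $\algname{ReduceGramRowCol}(c)$ given in Figure~\ref{alg:ReduceGramRowCol} and count operations against the sparse-matrix cost model of Section~\ref{sec:data-structures}. The procedure consists of a single loop which ranges over those indices $1 \le k \le r$ for which $Q_{c,k} \ne 0$, so I would first observe that, by the assumed sparse data structure for $Q$, the non-zero entries of row $c$ can be enumerated in order while skipping zero coefficients, at a cost of $\mathcal{O}(1)$ per entry visited, and that the number of such entries is by hypothesis at most $w_c$.

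Next I would argue that the body of the loop costs $\mathcal{O}(1)$ per iteration. When $k = c$, the body reduces the diagonal entry $Q_{c,c}$ modulo~$4$; since an explicit slot is stored for every diagonal coefficient of $Q$, this is a constant-time read, a modular reduction, and a constant-time write. When $k \ne c$, the body reduces both $Q_{c,k}$ and $Q_{k,c}$ modulo~$2$: reaching the symmetric partner $Q_{k,c}$ from the current node in row $c$ is $\mathcal{O}(1)$ by the list-like structure along rows and columns, each reduction is again a constant-time read/modify/write, and in the event that a reduced value becomes $0$ the corresponding node is deleted in $\mathcal{O}(1)$ time using the stated constant-time deletion of an entry sitting between two known non-zero entries. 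All arithmetic here is on integers bounded by a constant (the diagonal need only be stored modulo~$4$ and the off-diagonal entries modulo~$2$), so every arithmetic operation and comparison is $\mathcal{O}(1)$ in the RAM model assumed in Section~\ref{sec:data-structures}.

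Combining these two points, the total running time is $\mathcal{O}(1)$ for loop overhead plus $\mathcal{O}(1)$ for each of at most $w_c$ iterations, giving $\mathcal{O}(w_c)$ as claimed; since row $c$ contains only $r$ entries in total, this is also $\mathcal{O}(r)$. The only point requiring any care — and the nearest thing to an obstacle — is the constant-time update of the symmetric counterpart $Q_{k,c}$ together with the constant-time deletion of nodes that become zero, both of which follow directly from the properties assumed of the sparse representation of $Q$; correctness of the transformation (that the represented state is unchanged) is exactly Lemma~\ref{lemma:congruence-of-Gram-matrices-mod-4}, and so needs no further argument here.
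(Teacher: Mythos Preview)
Your proposal is correct and matches the paper's reasoning; in fact the paper states this lemma without proof (``We may easily show the following''), so your careful unpacking of the loop bound and the constant-time per-iteration cost under the sparse-matrix model of Section~\ref{sec:data-structures} is more detailed than what the paper itself provides.
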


\subsection{Changes of variables}

Figure~\ref{alg:ReindexAlgs} defines two subroutines, $\algname{ReindexSubtColumn}(k,c)$ and $\algname{ReindexSwapColumns}(k,c)$.
These both  realise a change in the representation of a quadratic form expansion, as described in Lemma~\ref{lemma:QFE-change-of-variables}, by performing column operations on two distinct columns $c$ and $k$ of the expansion matrix $A$.
We define them both in such a way that they do nothing if $c = k$; otherwise,
\begin{itemize}
\item
    $\algname{ReindexSubtColumn}(k,c)$ performs a change of variables which has the effect of subtracting column $c$ of $A$, from column $k$;
\item
    $\algname{ReindexSwapColumns}(k,c)$ performs a change of variables which has the effect of interchanging columns $c$ and $k$ of $A$.
\end{itemize}

\begin{figure}[t]%
    \small\centering
    ~\hfill
    \begin{minipage}[t]{.4875\textwidth}
        \rule{\textwidth}{1pt}
        $\algname{ReindexSubtColumn}(k,c)$
        \smallskip
        \hrule
        \smallskip
        \begin{itshape}\small 
            \algdesc{ReindexSubtColumn}%
        \end{itshape}
        \smallskip
        \hrule
        \begin{enumerate}[leftmargin=4ex,itemsep=.25ex]
        \item
            If $c = k$, then stop; otherwise proceed.
        \item
            For each $1 \le j \le n$ such that $A_{j,c} \ne 0$: \\
                update $A_{j,k} \gets A_{j,k} \oplus 1$.
        \item
            For each $1 \le h \le r$ such that $Q_{h,c} \ne 0$:  \\
                update $Q_{h,k} \gets Q_{h,k} - Q_{h,c}$.
        \item
            For each $1 \le h \le r$ such that $Q_{c,h} \ne 0$:  \\
                update $Q_{k,h} \gets Q_{k,h} - Q_{c,h}$.
        \item
            Call $\algname{ReduceGramRowCol}(k)$.
        \end{enumerate}
        \vspace*{-2ex}
        \rule{\textwidth}{1pt}
    \end{minipage}
    \hfill~\hfill
    \begin{minipage}[t]{.4875\textwidth}
        \rule{\textwidth}{1pt}
        $\algname{ReindexSwapColumns}(k,c)$
        \smallskip
        \hrule
        \smallskip
        \begin{itshape}\small 
            \algdesc{ReindexSwapColumns}%
        \end{itshape}
        \smallskip
        \hrule
        \begin{enumerate}[leftmargin=4ex,itemsep=.25ex]
        \item
            If $c = k$, then stop; otherwise, proceed.
        \item
            For each $1 \le j \le n$ such that $A_{j,c} \ne 0$ or $A_{j,k} \ne 0$: \\
                swap the values of $A_{j,c}$ and $A_{j,k}$.
        \item
            For each $1 \le j \le r$ such that $Q_{j,c} \ne 0$ or $Q_{j,k} \ne 0$: \\
                swap the values of $Q_{j,c}$ and $Q_{j,k}$.
        \item
            For each $1 \le j \le r$ such that $Q_{c,j} \ne 0$ or $Q_{k,j} \ne 0$: \\
                swap the values of $Q_{c,j}$ and $Q_{k,j}$.
        \item
            Swap the values of $p(k)$ and $p(c)$.
        \end{enumerate}
        \vspace*{-2ex}
        \rule{\textwidth}{1pt}
    \end{minipage}
    \hfill~
    \caption{%
        \label{alg:ReindexSubtColumn}%
        \label{alg:ReindexSwapColumns}%
        \label{alg:ReindexAlgs}%
        Procedures to change the representation of a quadratic form expansion, by simple column operations on $A$, and corresponding row / column operations on the Gram matrix $Q$.
    }
\end{figure}

\begin{lemma}
    \label{lemma:ReindexAlgs-runtime}
    Let $t_c \ge 0$ (respectively, $t_k$) be the number of non-zero entries in column $c$ (respectively, column $k$) of $A$; and similarly let $w_c \ge 0$ (respectively, $w_k$) be the number of non-zero entries in row/column $c$ (respectively, row/column $k$) of $Q$.
    Then:
    \begin{enumerate}[label=(\alph*)]
    \item
        The procedure $\algname{ReindexSubtColumn}(k,c)$ and realises the change of index described in Lemma~\ref{lemma:QFE-change-of-variables} for $E = I \oplus \vec e_c \vec e_k\trans\!$ in time $\mathcal{O}(t_c + w_c)$.
    \item
        The procedure $\algname{ReindexSwapColumns}(k,c)$ and realises the change of index described in Lemma~\ref{lemma:QFE-change-of-variables} for $E = I \oplus (\vec e_c \!\oplus\!\!\: \vec e_k)(\vec e_c \!\oplus\!\!\: \vec e_k)\trans\!$ in time $\mathcal{O}(t_c + t_k + w_c + w_k)$.
    \end{enumerate}
\end{lemma}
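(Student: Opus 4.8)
The plan is to treat both subroutines uniformly as concrete implementations of the change of variables of Lemma~\ref{lemma:QFE-change-of-variables}. For each, I would (i)~name the unimodular integer matrix $E$ that the procedure realises and check that it is congruent mod~$2$ to the matrix named in the statement, (ii)~verify that the procedure's net effect on $A$ is $A \mapsto AE$ reduced mod~$2$ and on $Q$ is $Q \mapsto E\trans Q E$ reduced mod~$4$ (up to the invariance-preserving clean-up performed by $\algname{ReduceGramRowCol}$, which changes nothing by Lemma~\ref{lemma:congruence-of-Gram-matrices-mod-4}), while $\vec b$, $g$, and $r$ are untouched, and (iii)~check that the principal index map $p$ is updated consistently, for $\algname{ReindexSwapColumns}$, or note that it is deliberately left for the caller, for $\algname{ReindexSubtColumn}$. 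Correctness of the state transformation then follows immediately from Lemma~\ref{lemma:QFE-change-of-variables}; the run-times follow by counting loop iterations against the sparsity parameters, using the $\mathcal{O}(1)$-access assumptions of Section~\ref{sec:data-structures}.

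For part~(a) I would take $E = I - \vec e_c \vec e_k\trans$, which is unimodular (its inverse is $I + \vec e_c \vec e_k\trans$, also an integer matrix) and reduces mod~$2$ to $I \oplus \vec e_c \vec e_k\trans$ since $-1 \equiv 1 \pmod 2$. Step~2 updates $A_{j,k} \gets A_{j,k} \oplus A_{j,c}$ over every row $j$, which is exactly $A \mapsto AE$ reduced mod~$2$ (and $A_{j,c}$ is read consistently, as column $c$ is never written). Steps~3--4 perform, over $\mathbb Z$, ``subtract column $c$ from column $k$'' followed by ``subtract row $c$ from row $k$'' on $Q$; since step~3 writes only into column $k \ne c$ and step~4 only into row $k \ne c$, column $c$ respectively row $c$ of $Q$ is read in its original state throughout, and a short trace of the three entries $(c,k)$, $(k,c)$, $(k,k)$ where the two operations interact shows the combined effect is exactly $E\trans Q E$. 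Step~5 reduces the off-diagonal entries of row/column $k$ mod~$2$ and $Q_{k,k}$ mod~$4$; by Lemma~\ref{lemma:congruence-of-Gram-matrices-mod-4} this preserves $\vec x\trans Q \vec x$ mod~$4$ on $\{0,1\}^r$. By Lemma~\ref{lemma:QFE-change-of-variables} the represented state is unchanged. For the run-time, step~2 visits the $\le t_c$ non-zeros of column $c$ of $A$, steps~3--4 the $\le w_c$ non-zeros of row/column $c$ of $Q$, and step~5 re-reduces only the (at most $w_c$) entries of row/column $k$ just modified, each in $\mathcal{O}(1)$ time, giving $\mathcal{O}(t_c + w_c)$. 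I would also remark that $\algname{ReindexSubtColumn}$ does not touch $p$ and so does not by itself preserve principal row form (the row $p(c)$ becomes $\vec e_c\trans \oplus \vec e_k\trans$); this is intentional, since the routine is only ever used as a building block inside callers such as $\algname{MakePrincipal}$, which restore the invariant.

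For part~(b) I would take $E = P$, the permutation matrix transposing coordinates $c$ and $k$: over $\mathbb Z$ this is $I - (\vec e_c - \vec e_k)(\vec e_c - \vec e_k)\trans$, it is unimodular with $P\trans = P^{-1} = P$, and it reduces mod~$2$ to $I \oplus (\vec e_c \oplus \vec e_k)(\vec e_c \oplus \vec e_k)\trans$. Step~2 interchanges columns $c$ and $k$ of $A$, realising $A \mapsto AP$; steps~3--4 interchange columns $c,k$ and then rows $c,k$ of $Q$, realising $Q \mapsto PQP = P\trans Q P$; no entry of $Q$ changes magnitude, so no reduction is needed and none is performed. By Lemma~\ref{lemma:QFE-change-of-variables} the state is unchanged. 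Step~5 swaps $p(c)$ and $p(k)$: after the column interchange, the row that previously held $\vec e_c\trans$ now holds $\vec e_k\trans$ and vice versa, while every other principal row $\vec e_i\trans$ ($i \ne c,k$) is zero in columns $c$ and $k$ and hence unaffected, so the updated map is again a principal index map for the new $A$. The run-time is dominated by the three swap loops, over the $\le t_c + t_k$ non-zeros of columns $c,k$ of $A$ and the $\le w_c + w_k$ non-zeros of rows/columns $c,k$ of $Q$, each swap $\mathcal{O}(1)$ by the data-structure assumptions, for a total of $\mathcal{O}(t_c + t_k + w_c + w_k)$.

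The step I expect to be the main obstacle is the bookkeeping in part~(a), steps~3--4: because the column operation and the row operation on $Q$ overlap exactly at the entries $(c,k)$, $(k,c)$, $(k,k)$, one must confirm that applying them sequentially to the mutable sparse structure (each reading current values) yields $E\trans Q E$ and not something off by those cross terms; I anticipate this is a short explicit calculation of those three entries rather than a genuine difficulty. A secondary, smaller point is that the code uses the lift $E = I - \vec e_c \vec e_k\trans$ rather than the ``additive'' lift $I + \vec e_c \vec e_k\trans$; I would note this is immaterial, since the two choices of $E\trans Q E$ differ by a symmetric matrix with even entries and zero diagonal modulo~$4$ and hence define the same quadratic form on $\{0,1\}^r$ by Lemma~\ref{lemma:Gram-matrix-mixed-moduli}.
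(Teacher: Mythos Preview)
Your proposal is correct and follows essentially the same approach as the paper's own proof: both identify the operations performed in each step, invoke Lemma~\ref{lemma:QFE-change-of-variables} for correctness (the paper does so implicitly, simply noting ``the transformations on $Q$ may be realised as a transformation of the columns followed by a transformation of the rows''), and count loop iterations against the sparsity parameters. Your treatment is considerably more explicit than the paper's --- naming the unimodular lift of $E$, tracing the cross-entries $(c,k),(k,c),(k,k)$ in steps~3--4, and remarking on the principal-index-map considerations --- whereas the paper dispatches correctness in a single sentence; but the substance and structure are the same.
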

\begin{proof}
    Both of these procedures test whether $c = k$ in time $\mathcal{O}(1)$ in Step~1.
    In Step~2, they then iterate through $\mathcal{O}(t_c)$ values of $j$ in Step~2 (in the case of \algname{ReindexSubtColumn}) or $\mathcal{O}(t_c + t_k)$ values of $j$ (in the case of \algname{ReindexSwapColumns}), for which either one or two columns contain a non-zero value in row $j$, performing operations requiring $\mathcal{O}(1)$ time in each iteration.
    Similarly, in Steps~3 and~4, each procedure iterates respectively through $\mathcal{O}(w_c)$ values of $h$ or $\mathcal{O}(w_c + w_k)$ values of $h$ (for which either one or two rows/columns of $Q$ contain a non-zero value in row or column $h$), and perform operations requiring $\mathcal{O}(1)$ time for each such $h$.
    The effects of these procedures follow from the operations performed in each case, noting in particular that the transformations on $Q$ may be realised as a transformation of the columns followed by a transformation of the rows.
    Finally, in Step~5 of \algname{ReindexSubtColumn}, we reduce the Gram matrix $Q$, performing $\mathcal{O}(w_c)$ further operations; Step~5 of \algname{ReindexSwapColumns} instead interchanges $p(c)$ and $p(k)$, taking time $\mathcal{O}(1)$.
    The asymptotic run-times described for both procedures then follow.
\end{proof}

\subsection{Maintaining principal row forms}

Figure~\ref{alg:PrincipalRowAlgs} defines two subroutines, $\algname{MakePrincipal}$ and $\algname{ReselectPrincipalRow}$, which may be used to help maintain the matrix $A$ in principal row form during other procedures which may significantly affect the rank of $A$.
(These procedures are realised using $\algname{ReindexSubtColumn}$ and  $\algname{ReindexSwapColumns}$ as subroutines, performing the appropriate changes transformations on the Gram matrix $Q$ in doing so.)

The procedure $\algname{MakePrincipal}(c,j)$ takes a column-index $1 \le c \le r$ and a row-index $1 \le j \le n$ as arguments, and attempts to perform a change of variables which would make $j$ a principal row with a single non-zero coefficient in column $c$.
It does so in such a way that it does not affect the principal rows $h = p(k)$ for the other columns $1 \le k \le r$ with $k \ne c$, by only transforming $A$ if column $c$ already has a $1$ in row $j$\,.
If this is the case, it performs suitable column operations to clear the other entries in row $j$ of $A$, and then updates $p(c) \gets j$\,.
\begin{lemma}
    \label{lemma:runtime-MakPrincipal}
    For $1 \le j \le n$ a row index of $A$, and $1 \le c \le r$ be a column index of $A$, let $s_j \ge 0$ be the number of non-zero entries in row $j$ of $A$, $t_c \ge 0$ be the number of non-zero entries in column $c$ of $A$, and $w_c$ be the number of non-zero entries in row/column $c$ of $Q$.
    If $A_{j,c} = 1$, then $\algname{MakePrincipal}(c,j)$ terminates in time $\mathcal{O}(s_j t_c + s_j w_c)$.
    If $s_j = 1$, or if instead $A_{j,c} = 0$, then it terminates in time $\mathcal{O}(1)$.
\end{lemma}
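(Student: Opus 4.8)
The plan is to read off the (unshown) structure of $\algname{MakePrincipal}(c,j)$ from its prose description: it first tests whether $A_{j,c}=1$ and, if not, returns immediately; if $A_{j,c}=1$, it iterates over those columns $k\ne c$ with $A_{j,k}\ne 0$, invoking $\algname{ReindexSubtColumn}(k,c)$ on each (thereby subtracting, i.e.\ adding mod~2, column $c$ of $A$ into column $k$), and finally assigns $p(c)\gets j$. With this in hand I would prove the statement in two parts: (a)~that the procedure does what it claims (so that it is a legitimate subroutine for maintaining principal row form), and (b)~the run-time bounds.

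For (a): each call $\algname{ReindexSubtColumn}(k,c)$ realises a change of variables in the sense of Lemma~\ref{lemma:QFE-change-of-variables}, so the represented state is unchanged throughout. Since column $c$ of $A$ is never modified by operations of the form ``add column $c$ into column $k$'', the entry $A_{j,c}$ remains $1$ during the whole loop; hence each call sets $A_{j,k}\gets A_{j,k}\oplus 1$, clearing that entry, and after processing all $\le s_j-1$ columns $k\ne c$ with $A_{j,k}\ne 0$ row $j$ of $A$ has become $\vec e_c\trans$. For any other column $k'$ the designated principal row $p(k')$ satisfies $\vec e_{p(k')}\trans A=\vec e_{k'}\trans$, so $A_{p(k'),c}=0$ (as $c\ne k'$); adding column $c$ into any column therefore leaves row $p(k')$ untouched, and the principal-row structure for all columns other than $c$ is preserved. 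Setting $p(c)\gets j$ then produces an expansion matrix in principal row form in which $j$ is principal for column $c$. I would also dispatch the two $\mathcal{O}(1)$ cases: if $A_{j,c}=0$ the procedure exits after the initial test; and if $s_j=1$ (with $A_{j,c}=1$) row $j$ of $A$ is already $\vec e_c\trans$, so the loop body is empty and only the assignment to $p$ is performed.

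For (b): the loop runs over $\le s_j-1\in\mathcal{O}(s_j)$ columns, and the list of those column indices can be harvested in $\mathcal{O}(s_j)$ time from the sparse row-$j$ structure before the loop begins, so that deletions caused inside $\algname{ReindexSubtColumn}$ do not disrupt the iteration. By Lemma~\ref{lemma:ReindexAlgs-runtime}(a) each call $\algname{ReindexSubtColumn}(k,c)$ costs $\mathcal{O}(t_c+w_c)$; summing over the $\mathcal{O}(s_j)$ calls and absorbing the harvesting step gives the stated $\mathcal{O}(s_j t_c+s_j w_c)$, while the $\mathcal{O}(1)$ subcases follow from the paragraph above.

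The main obstacle I anticipate is the bookkeeping in this last step: one must check that repeatedly invoking $\algname{ReindexSubtColumn}(\cdot,c)$ does not let the sparsity of column $c$ of $A$, or of row/column $c$ of $Q$, grow in a way that would invalidate the uniform per-call bound $\mathcal{O}(t_c+w_c)$. The key points to verify are that ``add column $c$ into column $k$'' never writes into column $c$ itself (so $t_c$ is a genuine loop invariant), that the induced transformation of $Q$ touches row/column $c$ only in the single off-diagonal position $(c,k)$ and its mirror, and that the $\algname{ReduceGramRowCol}$ call inside $\algname{ReindexSubtColumn}$ keeps that entry reduced. Once these are pinned down, the remainder is routine substitution into the run-time bounds already established for $\algname{ReindexSubtColumn}$ and $\algname{ReduceGramRowCol}$ together with the $\mathcal{O}(1)$ cost of the flag tests and the update to $p$.
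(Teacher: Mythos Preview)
Your proposal is correct and follows essentially the same argument as the paper: step through the procedure, observe that the loop body is executed at most $s_j-1$ times, invoke the $\mathcal{O}(t_c+w_c)$ bound for $\algname{ReindexSubtColumn}(k,c)$ from Lemma~\ref{lemma:ReindexAlgs-runtime}(a) on each iteration, and treat the $A_{j,c}=0$ and $s_j=1$ cases separately. You are in fact more careful than the paper in two respects: you argue explicitly that the other principal rows are undisturbed (the paper notes this in one line), and you flag the invariance of $t_c$ and $w_c$ across iterations as something to check, whereas the paper's proof simply applies the per-call bound without comment.
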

\begin{proof}
    Step~1 takes time $\mathcal{O}(1)$ to determine whether $A_{j,c} = 1$.
    Assuming that the procedure does not terminate at that stage, $s_j, t_c \ge 1$.
    Step~2 then iterates through $s_j - 1$ column indices $k \ne c$, subtracting column $c$ of $A$ from column $k$ (and performing $\mathcal{O}(t_c + w_c)$ operations) by calling $\algname{ReindexSubtColumn}(k,c)$ on each iteration.
    Among other changes, this realises an update $A_{j,k} \gets 0$ for each such $k$.
    Note that as the other principal rows $h$ have $A_{h,c} = 0$ by definition, these rows are unaffected by these column operations.
    Step~2 dominates the run-time if $w_j > 1$, performing $\mathcal{O}(s_j(t_c + w_c))$ operations; if $s_j = 1$, it instead performs $\mathcal{O}(1)$ operations.
    Step~3 also performs $\mathcal{O}(1)$ operations, simply updating the value of $p(c)$.
\end{proof}

    \begin{figure}[t]%
    \small
    ~\hfill
    \begin{minipage}[t]{.475\textwidth}
        \rule{\textwidth}{1pt}
        $\algname{MakePrincipal}(c,j)$
        \smallskip
        \hrule
        \smallskip
        \begin{itshape}\small 
            \algdesc{MakePrincipal}%
        \end{itshape}
        \smallskip
        \hrule
        \begin{enumerate}[leftmargin=4ex,itemsep=.75ex]
        \item
            If $A_{j,c} = 0$, then stop; otherwise proceed.
        \item
            For each $1 \le k \le r$ such that $A_{j,k} \ne 0$:
            \begin{itemize}[topsep=0ex]
            \item
                if $k \ne c$, call \\ $\algname{ReindexSubtColumn}(k,c)$.
            \end{itemize}
        \item
            Update $p(c) \gets j$.
        \end{enumerate}
        \vspace*{-1.5ex}
        \rule{\textwidth}{1pt}
    \end{minipage}
    \hfill\hfill
    \begin{minipage}[t]{.4625\textwidth}
        \rule{\textwidth}{1pt}
        $\algname{ReselectPrincipalRow}(j,c)$
        \smallskip
        \hrule
        \smallskip
        \begin{itshape}\small
            \algdesc{ReselectPrincipalRow}%
        \end{itshape}
        \smallskip
        \hrule
        \begin{enumerate}[leftmargin=4ex,itemsep=.25ex]
        \item
            Find a row index $1 \le j_\ast \le n$ such that $A_{j_\ast,c} \ne 0$ and $j_\ast \ne j$, for which row $j_\ast$ of $A$ has the fewest non-zero entries.
            (Let $j_\ast = 0$ if no such rows exist.)
        \item
            If $j_\ast \ne 0$, call $\algname{MakePrincipal}(c,j_\ast)$.
        \end{enumerate}
        
        \vspace*{-2ex}
        \rule{\textwidth}{1pt}
    \end{minipage}
    \hfill~
\caption{%
    \label{alg:MakePrincipal}%
    \label{alg:ReselectPrincipalRow}%
    \label{alg:PrincipalRowAlgs}
    Procedures to change the representation of a quadratic form expansion by adding one column of $A$ into another, and to select a new principal row for column $k$ (performing the appropriate column-reductions to establish the new row as a principal row if such a row is found).
}
\end{figure}

Using the procedure $\algname{MakePrincipal}$ as a subroutine, we define a procedure $\algname{ReselectPrincipalRow}(j,c)$, which attempts to select a new principal row for the setting where $j = p(c)$ is a principal row, corresponding to a qubit which is to be subject to an operation such as a Hadamard gate.
This procedure is used when a simulated operation would significantly disrupt a principal row for some column $c$, which we may be able to avoid by selecting an alternative principal row $j_\ast$\,.
(This is not possible when row $j$ is the only row in which column $c$ contains a non-zero entry, in which case the matrix $A$ and the value of $p(c)$ will be unchanged.)
To reduce the amount of work that is required to establish a new principal row, we select the new principal row by minimising the number of non-zero entries which must be cleared to make it a principal row.

\begin{lemma}
    \label{lemma:runtime-ReselectPrincipalRow}
    Let $1 \le c \le r$ be a column index for an $n \times r$ matrix $A$ with a principal index map $p$, and either $j = 0$ or $j = p(c)$.
    Let $t_c \ge 0$ (respectively, $w_c$) be the number of non-zero entries in column $c$ of $A$ (respectively, in row/column $c$ of $Q$).
    \begin{itemize}
    \item
        If $j > 0$ is the only row in which column $c$ is non-zero, then $\algname{ReselectPrincipalRow}(j,c)$ does not modify the quadratic form expansion, and halts in time $\mathcal{O}(1)$.
    \item
        If column $c$ has more than one non-zero entry, let $s_{j_\ast}$ be the minimum number of non-zero entries in a row $j_\ast \ne j$ for which $A_{j_\ast, c} = 1$.
        Then $\algname{ReselectPrincipalRow}(j,c)$ finds such a row $j_\ast$, and modifies $A$ and $p$ so that $A$ is in principal row form with $p(c) = j_\ast$, halting in time $\mathcal{O}(t_c)$ if $s_{j_\ast} = 1$, and in time $\mathcal{O}(s_{j_\ast} t_c + s_{j_\ast} w_c)$ otherwise.
    \end{itemize}
\end{lemma}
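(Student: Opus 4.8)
The plan is to read the run-time and correctness directly off the two-step structure of $\algname{ReselectPrincipalRow}$, with the run-time bound for $\algname{MakePrincipal}$ from Lemma~\ref{lemma:runtime-MakPrincipal} doing the real work. First I would analyse Step~1: using the sparse data structure for $A$, iterating over the non-zero entries of column $c$ visits exactly $t_c$ rows, and for each such row $j' \neq j$ we read the stored count of non-zero entries of row $j'$ in $\mathcal{O}(1)$ time and update a running minimum, so Step~1 costs $\mathcal{O}(t_c)$ and, being read-only, neither alters the quadratic form expansion nor changes the parameters $t_c$ and $w_c$. Since the entries of $A$ lie in $\{0,1\}$, any row $j_\ast$ that Step~1 returns automatically has $A_{j_\ast,c}=1$ and the minimal number $s_{j_\ast}$ of non-zero entries among the eligible rows.

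For the first bullet, if $j$ is the unique row with $A_{j,c}\neq 0$ then $t_c=1$, Step~1 runs in $\mathcal{O}(1)$ time and returns $j_\ast=0$, and Step~2 does nothing; the procedure halts in $\mathcal{O}(1)$ time without modifying the representation. For the second bullet, when column $c$ has at least two non-zero entries Step~1 returns some eligible $j_\ast\neq j$, and Step~2 calls $\algname{MakePrincipal}(c,j_\ast)$ with $A_{j_\ast,c}=1$; since $t_c$ and $w_c$ are unchanged from the input, Lemma~\ref{lemma:runtime-MakPrincipal} bounds this call by $\mathcal{O}(1)$ when $s_{j_\ast}=1$ and by $\mathcal{O}(s_{j_\ast}t_c + s_{j_\ast}w_c)$ otherwise. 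Adding the $\mathcal{O}(t_c)$ from Step~1 --- which is absorbed into $\mathcal{O}(s_{j_\ast}t_c)$ in the second case because $s_{j_\ast}\geq 1$ --- gives the claimed run-times. Correctness is then inherited from $\algname{MakePrincipal}$: it performs column operations clearing row $j_\ast$ down to its single entry in column $c$ and sets $p(c)\gets j_\ast$, without disturbing the other principal rows (each of which is zero in column $c$), so $A$ is left in principal row form with principal index map altered only at $c$, now equal to $j_\ast$.

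I do not expect a real obstacle here; the only point needing a word of care is that the parameters $t_c$ and $w_c$ appearing in the cost of the $\algname{MakePrincipal}$ call are exactly those of the input, which holds because Step~1 is read-only. Thus the whole argument reduces to invoking Lemma~\ref{lemma:runtime-MakPrincipal} and bookkeeping the $\mathcal{O}(1)$ and $\mathcal{O}(t_c)$ contributions of the two steps.
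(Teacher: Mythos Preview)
Your proposal is correct and follows essentially the same approach as the paper's proof: analyse Step~1 as an $\mathcal{O}(t_c)$ scan over the non-zero entries of column~$c$ (collapsing to $\mathcal{O}(1)$ when $t_c=1$), then invoke Lemma~\ref{lemma:runtime-MakPrincipal} for the cost of Step~2. If anything, you are slightly more careful than the paper in noting explicitly that Step~1 is read-only (so $t_c$ and $w_c$ are unchanged for the \algname{MakePrincipal} call) and that the $\mathcal{O}(t_c)$ contribution is absorbed into $\mathcal{O}(s_{j_\ast}t_c)$ when $s_{j_\ast}\ge 1$.
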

\begin{proof}
Step~1 iterates through $\mathcal{O}(t_c)$ row-indices $h$, performing simple integer comparisons and assignments in each step on the basis of the number of non-zero entries in each row $h$ of $A$ (assumed to be accessible in $\mathcal{O}(1)$ time through the data structure to store $A$).
In particular, if there is only one such row $h$, this takes time $\mathcal{O}(1)$.
If $h = j > 0$ is the only row index for which $A_{h,c} = 1$, we set $j_\ast = 0$.
Otherwise, we call $\algname{MakePrincipal}(c,j_\ast)$ for the row-index $j_\ast$ of the row with the minimum number of non-zero entries.
The run-time of this step is then $\mathcal{O}(s_{j_\ast} t_c + s_{j_\ast} w_c)$ by Lemma~\ref{lemma:runtime-MakPrincipal}; in particular, if $s_{j_\ast} = 1$, this step takes time $\mathcal{O}(1)$.
The run-time analysis follows.
\end{proof}

\subsection{Fixing the value of the final index}

Figure~\ref{alg:FixFinalBit} defines the subroutine $\algname{FixFinalBit}(z)$.
This procedure realises the transformations to the quadratic form expansion described in Section~\ref{sec:fixBits}, to represent either a simplification or transformation of a quadratic form expansion in which the final bit of the summation index $\vec x \in \{0,1\}^r$ is set to some constant $z \in \{0,1\}$.

We note that in the analysis of Section~\ref{sec:fixBits}, the right-hand side of Eqn.~\eqref{eqn:bitFixedQFE-reduced} features an extra factor of $\smash{\tfrac{1}{\sqrt 2}}$ in addition to the quadratic form expansion described in square brackets.
In some instances, this scalar factor may represent the fact that a measurement which fixes the value of $x_r$, yields the particular outcome $z$ only with probability $\tfrac{1}{2}$.
In other instances, this factor may be canceled out by some other scalar factors which we may apply to $\ket{\psi'}$.
In each case, this factor of $\smash{\tfrac{1}{\sqrt 2}}$ can be accounted for in an appropriate way.
However, it is left to the programmer to ensure that this is done properly: $\algname{FixFinalBit}(z)$ does not attempt to represent or otherwise account for that scalar factor.

    \begin{figure}[t]%
    \small
    ~\hfill
    \begin{minipage}[t]{.85\textwidth}
        \rule{\textwidth}{1pt}
        $\algname{FixFinalBit}(z)$
        \smallskip
        \hrule
        \smallskip
        \begin{itshape}\small 
            \algdesc{FixFinalBit}%
        \end{itshape}
        \smallskip
        \hrule
        \begin{enumerate}[leftmargin=4ex,itemsep=.25ex]
        \item
            Let $\vec a = {[\,A_{1,r}\;\;\cdots\;\;A_{n,r}\,]\:\!\trans\!} \in \{0,1\}^n$, let $\vec q = {[\,Q_{1,r}\;\;\cdots\;\;Q_{r{-}1,r}\,]\:\!\trans} \in \mathbb Z^{r-1}\!$, and let $u = Q_{r,r}$ .
        \item
            Modify $A$ by removing column $r$, and modify $Q$ by removing column/row $r$.
        \item
            Update $Q \gets Q + 2\:\!z\:\!\mathrm{diag}(\vec q\trans) \!\!\mod{4}$, update $\vec b \gets \vec b \oplus z\:\!\vec a$, and update $g \gets g + 2zu$.
        \item
            Update $r \gets r-1$.
        \end{enumerate}
        \vspace*{-2ex}
        \rule{\textwidth}{1pt}
    \end{minipage}
    \hfill~
\caption{%
    \label{alg:FixFinalBit}%
    Procedure to simplify the representation of a quadratic form expansion, corresponding to explicitly fixing the value of the final bit of the summation index to a given value $z$.
}
\end{figure}
\begin{lemma}
    Let $t_r \ge 0$ (respectfully, $w_r$) be the number of non-zero entries in the final column of $A$ (respectfully, the final row/column of $Q$).
    Then for a bit $z \in \{0,1\}$, the procedure $\algname{FixFinalBit}(z)$ transforms the quadratic form expansion consistent with introducing a factor $\delta_{x_r,z}$ as in Eqn.~\eqref{eqn:bitFixedQFE} and reducing it to the form in square brackets in the right-hand side of Eqn.~\eqref{eqn:bitFixedQFE-reduced}, in time $\mathcal{O}(t_r + w_r)$.
\end{lemma}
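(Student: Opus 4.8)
The plan is to verify separately that $\algname{FixFinalBit}(z)$ \parit{i}~realises the transformation claimed, and \parit{ii}~does so within the stated run-time, with the run-time being the substantive part of the statement.

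For correctness, I would match the four steps of the procedure directly against the analysis of Section~\ref{sec:fixBits}. Step~1 extracts $\vec a$ (the final column of $A$), $\vec q = {[\,Q_{1,r}\;\cdots\;Q_{r{-}1,r}\,]\trans}$ (the off-diagonal part of the final row/column of $Q$), and $u = Q_{r,r}$: these are precisely the quantities in the block decomposition of $Q$ used there. Steps~2--4 then produce exactly the matrix $A'$ consisting of the first $r{-}1$ columns of $A$, the Gram matrix $Q' = \tilde Q + 2z\,\mathrm{diag}(\vec q\trans)$ (invoking Lemma~\ref{lemma:Gram-matrix-mixed-moduli} to justify the reduction modulo~4), the vector $\vec b' = \vec b \oplus z\vec a$, and the phase exponent $g' = g + 2zu$, together with the decremented rank $r-1$. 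As observed in Section~\ref{sec:fixBits}, $A'$ is again in principal row form, with index map obtained by restricting $p$ to $\{1,\ldots,r{-}1\}$, so no repair of $p$ is needed; hence the output is precisely the quadratic form expansion displayed in square brackets on the right-hand side of Eqn.~\eqref{eqn:bitFixedQFE-reduced}, which is what the lemma asserts.

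For the run-time, I would bound each step using the sparse-matrix model of Section~\ref{sec:data-structures}. Step~1 iterates over the $t_r$ non-zero entries of column $r$ of $A$ and the $w_r$ non-zero entries of row/column $r$ of $Q$, and reads the explicitly-stored diagonal entry $Q_{r,r}$ in constant time, for a total of $\mathcal{O}(t_r + w_r)$. Step~2 deletes those same $t_r$ entries from $A$ and $w_r$ entries from $Q$, each deletion taking $\mathcal{O}(1)$, so again $\mathcal{O}(t_r + w_r)$. Step~3 modifies at most $w_r$ diagonal entries of $Q$ (one per non-zero coefficient of $\vec q$), flips at most $t_r$ bits of $\vec b$, and updates $g$ in constant time, for $\mathcal{O}(t_r + w_r)$; Step~4 is $\mathcal{O}(1)$. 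Summing, $\algname{FixFinalBit}(z)$ runs in time $\mathcal{O}(t_r + w_r)$.

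I do not expect a genuine obstacle. The only points requiring care are that the data needed in Step~3 (namely $\vec a$, $\vec q$, and $u$) is copied out in Step~1 before Step~2 destroys it --- which the ordering of the procedure respects --- and that every sparse-structure operation invoked (iterating over, deleting, or inserting non-zero entries, and accessing or modifying diagonal entries of $Q$) is indeed $\mathcal{O}(1)$ per entry under the assumptions of Section~\ref{sec:data-structures}. Both are immediate from the construction.
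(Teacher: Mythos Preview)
Your proposal is correct and follows essentially the same step-by-step approach as the paper's own proof, bounding the cost of each step of the procedure using the sparse-data-structure assumptions. If anything, you are slightly more thorough: the paper's proof focuses almost entirely on the run-time analysis and leaves the correctness implicit in the prior discussion of Section~\ref{sec:fixBits}, whereas you explicitly verify both.
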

\begin{proof}
    Step~1 initialises a vector $\vec a \in \{0,1\}^n$ in time $\mathcal{O}(t_r)$ by reading column $r$ of $A$, and initialises a vector $\vec q \in \{0,1\}^{r-1}$ and coefficient $u \in \{0,1\}$ in time $\mathcal{O}(w_r)$ by reading column $r$ of $Q$.
    Step~2 modifies $A$ and $Q$ by removing column $r$ (and in the case of $Q$ also row $r$); this might be done by at the same time as initialising $\vec a$ and $\vec q$ in Step~1, by appropriate operations on the sparse data structures representing $A$ and $Q$. 
    In Step~3, the update $Q \gets Q + 2 \;\! z \: \;\! \mathrm{diag}(\vec q\trans)$ may be performed in time $\mathcal{O}(w_r)$, the update $\vec b \gets \vec b \oplus z \;\!\vec a$ may be performed in time $\mathcal{O}(t_r)$, and the update $g \gets g + 2\:\!z\:\!u$ may be performed in time $\mathcal{O}(1)$; and similarly for the update $r \gets r - 1$ in Step~4.
    The run-time is thus $\mathcal{O}(t_r + w_r)$.
\end{proof}

\subsection{Eliminating columns which are entirely zero}

Figure~\ref{alg:ZeroColumnElim} defines the subroutine $\algname{ZeroColumnElim}(c)$, to simplify a quadratic form expansion as described in Section~\ref{sec:ZeroColumnElim} by eliminating one or more columns from $A$, when it has a rank $r$ which is one less than the number of its columns.

\begin{figure}[t]%
    \centering\small
    \begin{minipage}{.85\textwidth}
        \rule{\textwidth}{1pt}
        $\algname{ZeroColumnElim}(c)$
        \smallskip
        \hrule
        \smallskip
        \begin{itshape}\small 
            \algdesc{ZeroColumnElim}%
        \end{itshape}
        \smallskip
        \hrule
        \begin{enumerate}[leftmargin=4ex,itemsep=.25ex]
        \item
            Call $\algname{ReindexSwapColumns}(c,r{+}1)$.
        \item
            Let $\vec q  = {[\,Q_{1,r{+}1}\;\;\cdots\;\;Q_{r,r{+}1}\,]\:\!\trans} \in \mathbb Z^r$, and let $u = Q_{r{+}1,r{+}1} \pmod{4}$.
        \item
            Modify $A$ by removing column $r{+}1$, and modify $Q$ by removing column $r{+}1$ and row $r{+}1$.
        \item
            If $u$ is odd,
                update $Q \gets Q + (u-2)\:\!\vec q \:\! \vec q\trans \!\!\mod{4}$ and $g \gets g - u + 2$.
            Otherwise, if $u$ is even:
            \begin{enumerate}[leftmargin=4ex,topsep=0ex,itemsep=.25ex,label=\alph*.]
            \item
                If $\vec q = \vec 0$, then stop; otherwise, find an index $1 \le \ell \le r$ such that $q_{\ell} = 1$.
            \item
                For each $1 \le k \le r$ such that $q_k \ne 0$:
                \begin{itemize}[topsep=0ex]
                \item
                    If $k \ne \ell$, call $\algname{ReindexSubtColumn}(k,\ell)$.
                \end{itemize}
            \item
                Call $\algname{ReindexSwapColumns}(r,\ell)$.
            \item
                Call $\algname{FixFinalBit}(u/2)$.
            \end{enumerate}
        \end{enumerate}
        \vspace*{-3ex}
        \rule{\textwidth}{1pt}
    \end{minipage}
\caption{%
    \label{alg:ZeroColumnElim}
    A procedure to eliminate redundant columns from the matrix $A$ of a quadratic form expansion, in the context of a Hadamard operation on a qubit $j$ which takes $A$ out of principal row form.
}
\end{figure}

\begin{lemma}
    Let $1 \le c \le r$ be an index for a column of $A$.
    Let $s,t,w \ge 0$ respectively be an upper bound on the number of non-zero entries in any row of $A$, any column of $A$, and any row/column of $Q$.
    Suppose that 
    \begin{equation}
        \ket{\psi}
    \;=\;
        \frac{\tau^g}{\sqrt{2^{r+1}}}
        \!\!
        \sum_{\mathbf x \in \{0,1\}^{r\!\!\:+\!\!\:1}} \!\!\!\!
            i^{\,\vec x \trans \! Q \!\: \vec x} \,
            \ket{A \vec x \oplus \vec b} ,
    \end{equation}
    but that $A$ has rank only $r$, and that in particular column $c$ of $A$ is entirely zero; and that $p$ is nearly a principal index map for $A$, except again in that column $c$ of $A$ is entirely zero.
    Then the procedure $\algname{ZeroColumnElim}(c)$ transforms the quadratic form expansion to a form consistent with Eqn.~\eqref{eqn:QFE-specific}, in which $A$ is in principal row form with a corresponding principal index map $p$, in time $\mathcal{O}(tw + w^2)$.
\end{lemma}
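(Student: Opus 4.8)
The plan is to argue that $\algname{ZeroColumnElim}(c)$ is a faithful transcription of the derivation in Section~\ref{sec:ZeroColumnElim}, so that correctness follows by matching each line of the pseudocode to a step of that derivation, while the run-time is obtained by summing the costs of the reindexing subroutines invoked (using Lemmas~\ref{lemma:ReindexAlgs-runtime} and~\ref{lemma:runtime-MakPrincipal}, the run-time bound for $\algname{FixFinalBit}$, and a little sparsity bookkeeping). Concretely: Step~1 calls $\algname{ReindexSwapColumns}(c,r{+}1)$, which by Lemma~\ref{lemma:ReindexAlgs-runtime} realises the change of variables of Lemma~\ref{lemma:QFE-change-of-variables} for the transposition $E$ of columns $c$ and $r{+}1$, producing $A^{(1)}\equiv AE$, $Q^{(1)}\equiv E\trans Q E$ and the near-principal index map $p'$ of that Section, with the zero column of $A$ now in position $r{+}1$. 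Step~2 reads off the block data $\vec q$, $u$ of $Q^{(1)}$ as in Eqn~\eqref{eqref:eq10} (reducing $u$ mod~$4$), and Step~3 deletes the redundant zero column of $A$ and the last row/column of $Q$, leaving $A^{(2)}$, $Q^{(2)}$ and the residual scalar factor $\bigl(1 + i^{\,2\vec q\trans\vec x + u}\bigr)$ exactly as in Eqn~\eqref{eqn:zeroColumnElim-Q2}.

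I would then treat the two parities of $u$ separately, following the excerpt. If $u$ is odd, the identity $\bigl(1 + i^{\,2\vec q\trans\vec x + u}\bigr)/\sqrt 2 = \tau^{\,2-u}\,i^{\,(u-2)\vec x\trans\vec q\vec q\trans\vec x}$, which the excerpt establishes using the operation $\ast$ of Eqn~\eqref{eqn:ast-operation} together with Eqns~\eqref{eqn:evalAstMod4} and~\eqref{eqn:parity-formula-01}, shows that the update $Q \gets Q + (u-2)\vec q\vec q\trans$, $g \gets g - u + 2$ performed by the odd branch of Step~4 produces the representation of Eqn~\eqref{eqn:zeroColumnElim-Q3a}, with the $\sqrt2$ cancelling the nominal over-normalisation so that the result is a bona fide quadratic form expansion of the form~\eqref{eqn:QFE-specific}; here $A' = A^{(2)}$ is already in principal row form with index map $p'$. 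If $u$ is even, the scalar factor equals $2\,\delta_{\vec q\trans\vec x,\,u/2}$; taking $\ket\psi$ to be normalised (as it is whenever $\algname{ZeroColumnElim}$ is invoked by the simulation procedures) we have $\vec q \ne \vec 0$ (see p.~\pageref{discn:normalised-zero-column-elim}), so Step~4a finds an index $\ell$ with $q_\ell$ odd in $\mathcal O(1)$ from the sparse structure of $Q$. Steps~4b--4c then realise, via iterated calls to $\algname{ReindexSubtColumn}(\cdot,\ell)$ and one $\algname{ReindexSwapColumns}(r,\ell)$, the change of variables $R = R'R''$ that sends $\vec q\trans \mapsto \vec e_r\trans$ (Lemma~\ref{lemma:QFE-change-of-variables}), producing $A^{(3)}\equiv A^{(2)}R$, $Q^{(3)}\equiv R\trans Q^{(2)} R$ and the near-principal index map $p''$; and Step~4d invokes $\algname{FixFinalBit}(u/2)$, which (per the analysis of Section~\ref{sec:fixBits}) both introduces the factor $\delta_{v_r,u/2}$ and, via Eqn~\eqref{eqn:bitFixedQFE-reduced} whose internal $\tfrac{1}{\sqrt2}$ absorbs the $\sqrt2$, deletes column $r$, yielding the final $A' = A^{(4)}$ of shape $n\times(r{-}1)$, in principal row form with the restriction of $p''$ as its principal index map, as in Eqn~\eqref{eqn:zeroColumnElim-Q4b}.

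For the run-time: Step~1 costs $\mathcal O(t_c + t_{r+1} + w_c + w_{r+1}) \subseteq \mathcal O(t+w)$ by Lemma~\ref{lemma:ReindexAlgs-runtime}; Step~2 costs $\mathcal O(w_{r+1}) \subseteq \mathcal O(w)$ and leaves $|\mathrm{supp}(\vec q)| \le w$; Step~3 costs $\mathcal O(t+w)$ for the sparse deletions (and may be folded into Step~2). In the odd branch, $\vec q\vec q\trans$ has at most $w^2$ non-zero entries, so Step~4 costs $\mathcal O(w^2)$. In the even branch, Step~4a is $\mathcal O(1)$; Step~4b makes at most $w$ calls to $\algname{ReindexSubtColumn}(k,\ell)$, each costing $\mathcal O(t_\ell + w_\ell) \subseteq \mathcal O(t+w)$ by Lemma~\ref{lemma:ReindexAlgs-runtime} --- and since no such call modifies column $\ell$ of $A$ and each adds at most one entry to row/column $\ell$ of $Q$, these bounds persist through the loop --- for a total of $\mathcal O\!\bigl(w(t+w)\bigr)$; Step~4c costs $\mathcal O(t+w)$ and Step~4d costs $\mathcal O(t_r + w_r) \subseteq \mathcal O(t+w)$. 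Summing, the procedure runs in $\mathcal O(tw + w^2)$ in both cases.

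The part I expect to require the most care is the bookkeeping of the near-principal index map along the chain $p \to p' \to p''$ and through the column deletions: one must check at each stage that every column being subtracted or swapped is zero in all of the principal rows that are \emph{not} being manipulated --- which follows from the principal row structure of $A$ together with the fact that the zero column occupies position $r{+}1$ after Step~1 --- so that those rows, and hence the remaining entries of the index map, are left intact, and that the restriction of $p''$ obtained after Step~4d genuinely satisfies $\vec e_{p''(k)}\trans A' = \vec e_k\trans$ for all $k \le r{-}1$. A secondary point needing attention is the scalar bookkeeping: verifying in each branch that the stray $\sqrt2$ (from the odd-case identity, respectively from the $2\,\delta$-factor in the even case) exactly cancels against the normalisation, so that the output is a normalised quadratic form expansion of the form~\eqref{eqn:QFE-specific} rather than a sub- or super-normalised vector.
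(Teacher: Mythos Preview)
Your proposal is correct and follows essentially the same approach as the paper's proof. The paper's own argument is considerably terser --- it is almost purely a step-by-step run-time tally, tacitly deferring all correctness claims to the derivation of Section~\ref{sec:ZeroColumnElim} --- whereas you spell out the correspondence between each pseudocode step and the corresponding equation in that derivation, and you add the careful bookkeeping about the principal index map and the $\sqrt 2$ cancellation that the paper leaves implicit. Your sparsity-persistence remark in Step~4b (that column~$\ell$ of $A$ is never touched, and row/column~$\ell$ of $Q$ gains at most one entry per call) is a nice detail the paper omits.
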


\begin{proof}
    Step~1 interchanges column $c$ and column $r{+}1$ of $A$, requiring time $\mathcal{O}(2t + 2w)$ to do so.
    Step~2 initialises the vector $\vec q$ and the scalar $u$, which may be done in time $\mathcal{O}(w)$.
    Step~3 modifies $A$ and $Q$ by removing column $r{+}1$ (and in the case of $Q$ also row $r{+}1$); no explicit modifications may be necessary to the sparse data structure of $A$ (as the column being removed has no non-zero entries), and the modification to $Q$ might be done by appropriate operations on the sparse data structure representing $Q$ while initialising $\vec q$ in Step~2.
    
    If $u$ is odd, then in Step~4 we modify the value of $g$, and up to $w$ diagonal entries of $Q$, requiring time $\mathcal{O}(w^2)$; the total run-time is then $\mathcal{O}(t + w^2)$.
    If $u$ is even, we instead perform a number of further operations, which depend on a column index $1 \le \ell \le r$ for which $q_\ell = 1$.
    Step~4a attempts to find such an $\ell$, taking  $\mathcal{O}(1)$ time.
    We stop the procedure if no such $\ell$ exists. Otherwise, we invoke $\algname{ReindexSubtColumn}(k,\ell)$
    for up to $w - 1$ values of $1 \le k \le r$ in Step~4b (requiring $\mathcal{O}(wt + w^2)$ operations in total), invoke $\algname{ReindexSwapColumns}(r,\ell)$ in Step~4c (involving $\mathcal{O}(2t + 2w)$ operations), and call $\algname{FixFinalBit}(q/2)$ in Step~4d (involving $\mathcal{O}(t + w)$ operations).
    The total run-time in this case is then $\mathcal{O}(tw + w^2)$.
\end{proof}

\newpage

\end{document}